\DeclareMathAlphabet\mathbb{U}{msb}{m}{n}
\DeclareMathOperator*{\E}{\mathbb E}
\DeclarePairedDelimiter{\abs}{\lvert}{\rvert}
\newtheorem{theorem}{Theorem}[section]
\newtheorem{lemma}[theorem]{Lemma}
\theoremstyle{definition}
\newtheorem{definition}[theorem]{Definition}
\theoremstyle{remark}
\newtheorem*{example*}{Example}
\newcommand{\cI}{\mathcal{I}}
\newcommand{\sA}{{\mathscr A}}
\newcommand{\sC}{{\mathscr C}}
\newcommand{\sE}{{\mathscr E}}
\newcommand{\sF}{{\mathscr F}}
\newcommand{\sH}{{\mathscr H}}
\newcommand{\sL}{{\mathscr L}}
\newcommand{\sU}{{\mathscr U}}
\newcommand{\high}{\textsf{high}\xspace}
\newcommand{\low}{\textsf{low}\xspace}
\newcommand{\mfloor}{\textsf{floor}\xspace}
\newcommand{\mceil}{\textsf{ceiling}\xspace}
\newcommand{\threshold}{\textsf{threshold}\xspace}
\newcommand{\NOR}{\textsf{NOR}\xspace}
\newcommand{\NOT}{\textsf{NOT}\xspace}
\newcommand{\R}{\mathbb R}
\newcommand{\C}{\mathbb C}
\newcommand{\ignore}[1]{}
\newtheorem*{rep@theorem}{\rep@title}
\newcommand{\newreptheorem}[2]{%
\newenvironment{rep#1}[1]{%
 \def\rep@title{#2 \ref{##1}}%
 \begin{rep@theorem}}%
 {\end{rep@theorem}}}
\title{Complex Dynamics in Autobidding Systems}
\author{Renato Paes Leme, Georgios Piliouras, Jon Schneider, Kelly Spendlove, Song Zuo}
\date{}
\affil{Google \\ \texttt{\{renatoppl,gpil,jschnei,spendlove,szuo\}@google.com}}
\newcommand{\cameraready}[2]{#2}
\begin{document}

\maketitle

\begin{abstract}
  It has become the default in markets such as ad auctions for participants to bid in an auction through automated bidding agents (autobidders) which adjust bids over time to satisfy return-over-spend constraints. Despite the prominence of such systems for the internet economy, their resulting dynamical behavior is still not well understood. Although one might hope that such relatively simple systems would typically converge to the equilibria of their underlying auctions, we provide a plethora of results that show the emergence of complex behavior, such as bi-stability, periodic orbits and quasi periodicity. We empirically observe how the market structure (expressed as motifs) qualitatively affects the behavior of the dynamics. We complement it with theoretical results showing that autobidding systems can simulate both linear dynamical systems as well logical boolean gates.
\end{abstract}

\section{Introduction}

In recent years it has become increasingly common for buyers
to participate in markets such as internet advertising through automated bidding agents (autobidders). Instead of bidding directly,
they provide high level goals (such as budgets and return-over-spend goals) to the autobidders
who then optimize auction bids on behalf of advertisers to  satisfy their goals. In the past years,
the research community has tried to bridge the gap between the widespread use of such bidding agents in practice and the
little theoretical understanding we had on the behavior of such agents.

Since \citet{aggarwal2019autobidding}, this has resulted in a vibrant research direction studying the design of auctions for autobidders \cite{aggarwal2023multi,balseiro2021landscape,balseiro2022optimal,balseiro2023optimal,golrezaei2021auction,lu2023auction,lv2023auction,lv2023utility,xing2023truthful}. Common research themes include work on improving common auction formats for autobidders \citep{balseiro2021robust,deng2021towards,deng2023autobidding,deng2023individual,mehta2022auction}, analyzing the equilibria of autobidders \cite{li2023vulnerabilities,liu2023auto} and proving price of anarchy, social welfare guarantees for autobidders \cite{deng2022efficiency,deng2023efficiency,liaw2023efficiency,liaw2023efficiency2,lucier2023autobidders,fikioris2023liquid} in standard auction formats (see also Section~\ref{sec:related} for other recent work on autobidding systems).

The focus of the community has been so far in understanding the equilibria of such systems as well
as the average behavior of its dynamics. This is a very natural and useful viewpoint to take and indeed
it is the norm in Economics to focus on the notion of equilibrim. Here we will taking the complementary, dynamical systems
viewpoint and try to describe the qualitative behavior of autobidder dynamics.  
Thus we will be focusing on 
questions of whether and how they reach equilibrium? Do they display periodic, quasi-periodic and chaotic
behavior? Can complex behavior emerge even in relatively simple systems?

\paragraph{Dynamic Viewpoint} Although traditional game theory focuses on Nash equilibria and similar behavioral fixed points of these dynamics, the actual day-to-day behavior of the agents can be significantly more complicated and thus should be an object of careful consideration.
A notable proponent of the dynamical viewpoint in Economics is Stephen Smale who won the Fields Medal for his contributions in topology. \citet{smale1976dynamics} argues that ``equilibrium theory is far from satisfactory'' since it often ignores how equilibrium is reached and assumes that agents perform long-range optimization. He also comments on how some parts of mathematical Economics require deep tools from topology such as fixed point theorems, which often obscure important underlying phenomena. In the same paper, Smale proposes trying to approach the same problems from the perspective of calculus and differential equations.

The dynamic viewpoint has been particularly successful in establishing the emergence of complex behavior of game playing dynamics and characterizing their limit non-equilibrium behavior~\citep{BaileyEC18,galla2013complex,vlatakis2020no,andrade2021learning,andrade2023no,hsieh2020limits,mertikopoulos2017cycles,cheung2021online,sanders2018prevalence,piliouras2023multi,chotibut2020route,bailey2020finite,palaiopanos2017multiplicative,bielawski2021follow}. Such results are typically established for playing general-sum normal-form games which allow for very rich classes of utility functions. Auctions, on the other hand, are more structured and well-behaved games which are carefully designed to minimize strategic behavior. In fact, a number of recent results 
\cite{balseiro2019learning,feng2021convergence,kolumbus2022auctions,bichler2023computing} establish the convergence of bidding dynamics based on learning algorithms in standard auction formats such as first and second price auctions\footnote{The picture becomes more complex in the presence of budget constraints \cite{chen2023complexity} and non-uniform tie-breaking rules \cite{chenpeng2023complexity}, where the equilibrium computation becomes PPAD hard, suggesting more complex dynamic behavior.}.
Although these results suggest a relatively straightforward view of the dynamical nature of bidding behavior, they refer to traditional quasi-linear agents and not the autobidding model discussed in the first two paragraphs. When we reconsider this question in the autobidding model, a completely different picture emerges. Before describing our results, we discuss the autobidding model.

\paragraph{Autobidding systems} An automated bidding agent has a couple of different tasks:  (i) predict the value of each item (e.g. predict the probability that an ad click will lead to a product purchase) (ii) forecast inventory and price competition; (iii) optimize bids to hit a certain return-over-spend (ROS) or budget target. Let's consider a concrete scenario in internet advertisement in which an autobidder competes in a pay-per-click second price auction. An advertiser cares about \emph{`conversion events'} which for the purposes of our example we can consider as the user purchasing the advertiser's product. The customer then will tell the autobidder to maximize conversions (purchases) subject to paying at most \$2/conversion.

For task (i) above, the autobidder $i$ will first build an ML model that predicts for each click $j$ what is the probability $v_{ij}$ of a conversion given a click using past data provided by the advertiser. For task (ii) the autobidder will construct functions $x_{ij}(b)$ and $p_{ij}(b)$ which predict the expected number of clicks and expected payments given bids $b$. Having access to $v_{ij}, x_{ij}, p_{ij}$ the autobidder is ready to solve task (iii) which consists in maximizing the expected number of conversions $\sum_{j} v_{ij} x_{ij}$ subject to the total expected payment $\sum_{j} p_{ij}$ being at most \$2 per conversion. In other words, for the target $\tau = 0.5 \cdot $conversions/dollar the problem solved by the autobidder is:

$$\max_{b_{ij}} \sum_{j} v_{ij} x_{ij}(b)\quad  \text{ s.t. } \quad \tau_i \sum_j p_{ij}(b) \leq \sum_{j} v_{ij} x_{ij}(b)$$
In this paper, we will assume that the autobidder has already a perfect model of values,  inventory and competition $v_{ij}, x_{ij}, p_{ij}$ and focus on task (iii). Furthermore, we will assume that we have a repeated game where $v_{ij}, x_{ij}, p_{ij}$ are the exactly the same across time. The only thing changing from period to period are the bids.

At this point, it is worth noting that the autobidder's objetive function represents a departure from the usual quasi-linear model in Economics. While there is no consensus on why this sort of objective is so widespread in practice, we offer some possible explanations.

\begin{itemize}
\item Credibility: advertisers directly observe total spend and total value (measured in terms of conversions) and hence can directly verify that the autobidder is hitting their target ROS. In contrast, it is hard for advertisers to verify the autobidder behavior in the quasi-linear model is indeed optimal without knowing the full $x_{ij}, p_{ij}$ auction.
\item Cross-channel bidding: Advertisers tend to bid on multiple channels (different internet platforms, TV, placing ads on printed media, ...) and try to optimize across all channels. A natural metric is to move more budget to larger ROS channels until the ROS is equalized. The astute reader may object here that to optimize ROS cross-channel, the advertisers don't need to equalize ROS across channels, but rather equalize the cost of the "more expensive" conversion \cite{deng2023multi,aggarwal2023multi}. While this is certainly true in theory, such metric is both: (a) very noisy since it depends on a single item instead of an average over many items; (b) not available in many channels like TV or printed media.
\item Alignment with business goals: business tend to evaluate the effectiveness of their marketing departments via high level aggregated metrics such as ROS, so it is natural for advertisers to align their business strategies 
\end{itemize}
We refer to \citep{alimohammadi2023incentive,feng2023strategic,perlroth2023auctions} for a more in-depth discussion about the economic rationale behind autobidders' objective function.

\paragraph{Autobidding Dynamics} We studying a dynamical system capturing the competition of multiple bidders bidding for multiple items in typically a second price auction through autobidders. Each buyer's input to their autobidder is a return over spend (ROS) target $\tau_i$ and this is fixed over time.
The goal of each autobidder is to
choose their bids in order to maximize allocation value subject to the given ROS constraint. As is common in these applications we will assume that the autobidders apply uniform bid scaling, i.e., the bid profile for each bidder is a multiplicative fraction of their respective valuation profile for all items. Thus, each  each autobidder controls a single parameter: the bid multiplier and increases/decreases them at a rate that is equal to the slack/oversaturation of the ROS target. This is a natural and tractable model that enjoys a number of desirable properties such as connections to PID controllers, as well as individual optimality guarantees such as guaranteed satisfaction of ROS constraints in a time-average sense, a.o.    The full details as well as more detailed discussion about the model can be found in Section $\ref{sec:model}$.

\paragraph{Methodology} Methodologically, the paper is a combination of modeling, empirical analysis and theory. In Section \ref{sec:model} we model autobidding as a dynamical system represented by a differential equation. We proceed with an empirical investigation in Section \ref{sec:qualitative} where we construct and simulate instances with increasingly complex behavior (drawing an analogy to synthetic biology). In Sections \ref{sec:linear_system} and \ref{sec:circuits} we switch to a theoretical approach where we formally prove properties of such systems. In particular, we show that other types of complex systems (linear dynamics and boolean circuits) can be simulated by autobidding systems.

\paragraph{Our Results and Techniques}  First, we show that in the special case of two bidders, an autobidding  system always converges to equilibrium (Theorem \ref{thm:convergence}). The proof is based on the celebrated Poincare-Bendixson theorem that constrains the possible dynamics in two dimensional systems, and additionally uses the Poincare-Bendixson criteria, which is upheld by the ROS system. 
Interestingly, even in this constrained setting we establish the possibility of multiple attracting equilibria as well as unstable equilibria. Next, we present a series of constructions where the dynamics of the corresponding markets have increasingly complex behavior. In Section \ref{sec:3_oscillation}, we present a three bidder system that leads to oscillations by converging to a limit cycle, showing that our previous convergence result is in some sense as strong as possible. 
In Sections \ref{sec:motifs} to  \ref{sec:quasi}, we showcase the complexity of autobidding dynamics by establishing formal connections between them and repressilator dynamics in synthetic
biology, which are generated by a genetic regulatory network consisting of at least one negative feedback loop. This allows for inducing predictable but complex behavioral patterns such as bistability, limit cycles and quasi-periodicity based on embedding bidding repressilator-inspired networks with specific combinatorial properties (e.g. directed cycles of odd/even length, hierarchical networks, etc). We also study the impact of the auction formats by comparing the behavior of autobidding dyanmics and second price, first price and mixtures thereof.

Next, we go beyond studying specific classes of examples and argue properties about the general complexity and descriptive power of autobidding dynamics. In Section \ref{sec:linear_system}, we show that we can simulate solutions of arbitrary linear dynamical systems via  autobidding dynamics. The result is a consequence of a result in linear algebra on the Jordan decomposition of purely-competitive matrices that can be of independent interest. Finally, in Section \ref{sec:circuits}, we show that autobidding systems can exhibit a different type of complex behavior: they can simulate arbitrary boolean circuits. The result follows from a complexity-style result where we construct gadgets composed of bidders and items to simulate the behavior of logical gates.


\subsection{Other Related Work}
\label{sec:related}

\paragraph{Autobidding Dynamics and Equilibrium Existence} 
There are several closely related works. \citet{lucier2023autobidders} analyze the PoA and the bidder regret along the bidding response dynamics, while not answering whether such dynamics converge or not. \citet{liu2023auto} proposes a bidding algorithm where the iterated best response process converges to an equilibrium under certain assumptions. 

The existence of equilibrium in the autobidding system is first studied by \citet{aggarwal2019autobidding}, who show the existence of pure Nash equilibrium in truthful auctions assuming smooth environment. \citet{li2023vulnerabilities} further prove that the pure Nash equilibrium always exists for second price auction even when the environment is non-smooth (e.g., with point mass distributions) by including tie-breaking strategies as part of the equilibrium definition. Nevertheless, it remains unknown whether the autobidding dynamics can converge to such equilibrium or whether such equilibrium is stable (say, contracting within its neighborhood). \citet{li2023vulnerabilities} also provide a PPAD-hardness of computing equilibrium which offers evidence that the dynamics may not converge or converge very slowly. 

\paragraph{Other Autobidding Topics}
Another closely relevant topic is the design of bidding algorithms for autobidders \cite{balseiro2023joint,deng2023multi,feng2023online,golrezaei2021bidding,he2021unified,liang2023online,lucier2023autobidders,susan2023multi}. The goal of these design, however, is usually either to theoretically prove good bounds for autobidder regrets or social welfare, or to empirically show good performance. Whether the ROS system converges under such bidding algorithms and the properties of the resulting dynamics are less concerned. On the auction side, \citep{deng2022posted} studies the design of pricing policies in the presence of autobidders.





\section{Setting: Autobidder ROS Systems}
\label{sec:model}

This section models the dynamics of how bids vary over time in an autobidding ROS system as differential equation \eqref{eq:diff_equation_multipliers} defined later in this section.

We will study a system where buyers interact with an auction through automated bidding agents (autobidders). The buyer's input to their autobidder is a return over spend (ROS) target $\tau_i$ and those will be fixed over time. The autobidders then submit bids into the central auction and those bids will change dynamically over time. We will assume that in any given moment $t$ in time, the auction will be allocating $k$ items, so each autobidder $i$ will submit bids $b_{ij}(t)$ for $j=1..k$ for each of those items.  The time $t$ will be a continuous variable taking values in $[0, \infty)$.

\begin{figure}[h]
\centering
\begin{tikzpicture}[scale=1.25]
\draw (0,0) rectangle (3,2) node[pos=.5] {Central Auction};

\draw (-3.5,0) rectangle (-1.5,.7) node[pos=.5] {Autobidder $n$};
\node at (-2.5, 1) {$\cdots$};
\draw (-3.5,1.3) rectangle (-1.5,2) node[pos=.5] {Autobidder $1$};

\draw[->] (-1.5,.35)-- node[above]{$b_{nj}$} (0,.35);
\draw[->] (-1.5,1.65)--node[above]{$b_{1j}$} (0,1.65);

\draw (-7,0) rectangle (-5,.7) node[pos=.5] {Buyer $n$};
\node at (-6, 1) {$\cdots$};
\draw (-7,1.3) rectangle (-5,2) node[pos=.5] {Buyer $1$};

\draw[->] (-5,.35)-- node[above]{$\tau_{n}$} (-3.5,.35);
\draw[->] (-5,1.65)--node[above]{$\tau_{1}$} (-3.5,1.65);
\end{tikzpicture}
\label{fig:lower_bound}
 \end{figure}

 \paragraph{Auction} The auction will take bids $b_{ij}$ as inputs and produce allocations $x_{ij} \in [0,1]$ such that $\sum_i x_{ij} = 1$ and payments $p_{ij} \geq 0$ for each buyer $i$ and item $j$. Unless stated otherwise the auction will be a pure second price auction with uniform random tie-breaking, i.e.:
$$x_{ij}(b) = \frac{\mathbf{1}\{ b_{ij} = \max_{s} b_{sj} \}}{\abs{\{h;  b_{h} = \max_{s} b_{sj} \}}} \qquad \qquad p_{ij}(b) = x_{ij}(b) \cdot \max_{s\neq i} b_{sj} $$

\paragraph{ROS Objective} We will assume that the buyer's return over spend (ROS) $\tau_i$, the number of items and the agents' values $v_{ij}$ for each item $j$ are fixed over time. The goal of each autobidder is to choose their bids in order to maximize allocation value $\sum_j v_{ij} x_{ij}$ subject to the ROS constraint.

\[\max_{b} \sum_j v_{ij} x_{ij}(b_{ij}) \quad \text{s.t.} \quad \tau_i \cdot \sum_j p_{ij}(b_{ij}) \leq \sum_j v_{ij} x_{ij}(b_{ij})\]

Our first observation is that if we define $\tilde{v}_{ij} = v_{ij} / \tau_i$ then we can re-write the problem above as:

$$\max_{b_{ij}} \sum_j \tilde{v}_{ij} x_{ij}(b) \quad \text{s.t.} \quad \sum_j p_{ij}(b) \leq  \sum_j \tilde{v}_{ij} x_{ij}(b)$$

Hence by assuming that the values $v_{ij}$ already incorporate the ROS targets, we can assume w.l.o.g. that $\tau_i = 1$ for all advertisers. Hence we will focus on the problem:

\begin{equation}\label{eq:autobidder_goal}
\max_{b_{ij}} \sum_j v_{ij} x_{ij}(b) \quad \text{s.t.} \quad U_i(b) \geq 0 \quad \text{where } U_i(b):= \sum_j v_{ij} x_{ij}(b) - p_{ij}(b)\end{equation}
where $U_i(b)$ is the traditional quasi-linear utility associated with that allocation. It is important to highlight that while we track $U_i(b)$ the goal of autobidders is not to maximize $U_i(b)$ but rather to maximize value subject to utility being non-negative.

\paragraph{Uniform bid-scaling} Despite the fact that the auction is a second price auction, truthful bidding $b_{ij} = v_{ij}$ is no longer optimal under the objective function in equation \eqref{eq:autobidder_goal}. Consider the following example with $4$ items where we show the value of bidder $i$ and the maximum bid of other agents for each of the items:

\begin{center}
\begin{tabular}{ l |c c c c }
 $v_{ij}$ & $3$ & $5$ & $7$ & $6$ \\
 \hline
 $\max_{h \neq i} b_{hj}$ & $2$ & $5$ & $8$ & $12$ \\
\end{tabular}
\end{center}

A truthful bid $b_{ij} = v_{ij}$ wins the first $2$ items giving total value $8$ and $U_i(b) = 1 > 0$. Recall that the goal of autobidders is \emph{not} to maximize the  $U_i$ but rather to maximize value subject to the ROS constraint that $U_i(b) \geq 0$. A better solution is to bid $b_{ij} = (7/6) v_{ij}$, winning the first $3$ items, obtaining total value $15$ with $U_i(b) = 0$.

A natural strategy used in practice for this problem is called uniform bid-scaling, which consists in bidding:
$$b_{ij} = m_i \cdot v_{ij}$$
for a bid multiplier $m_i \in [1, \infty)$. The prevalence of uniform bid-scaling in applications has to do with the fact that it is the optimal bidding strategy when the contribution of each item goes to zero. We call this setting the \emph{smooth limit} which we will describe in detail later in this section. In the smooth limit, the bidding problem becomes equivalent to the fractional knapsack problem  \citep{feldman2007budget}. Feldman et al also show that uniform bid-scaling is approximately optimal even outside the smooth limit -- both with a theoretically established approximation guaranteed and a much better guaranteed observed in practical instances.

\paragraph{Autobidding Dynamics}

From this point on, we will assume that each autobidder controls a single parameter: the bid multiplier $m_i$. We will abuse notation and write $x_{ij}(m), p_{ij}(m), U_i(m)$ as a function of vector of multipliers $m = (m_1, \hdots, m_n)$. Clearly the allocation value $\sum_j x_{ij} v_{ij}$ is monotone in the multiplier, so the goal of each autobidder is to find the highest multiplier that satisfies the constraint $U_i(m) \geq 0$. The function $U_i(m)$ is increasing for $m \leq 1$ since any new item we win when we change the multiplier from $m$ to $m+\delta$ has price less than the value. For $m\geq 1$, any new item we win when we change the multiplier from $m$ to $m+\delta$ has price larger than the value and hence $U_i$ is decreasing. (See Figure \ref{fig:U_shape}).

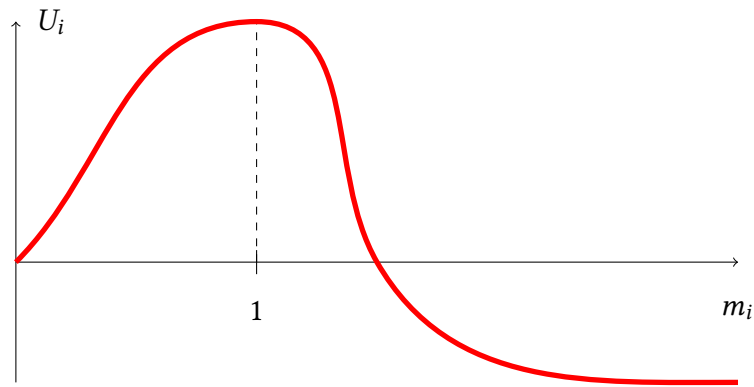
\begin{figure}[h]
\centering
\begin{tikzpicture}[scale=1.6]
\draw[->] (0,0) -- (6,0);
\draw [dashed] (2,0) -- (2,2);
\draw[->] (0,-1) -- (0,2);
\draw (2,-.1) -- (2,.1);
\node at (2,-.4) {$1$};
\node at (6,-.4) {$m_i$};
\node at (.3,2) {$U_i$};
\draw[line width=2, red] (0,0) to[out=45,in=180] (2,2) to[out=0,in=120] (3,0)  to[out=-60,in=180] (6,-1);
\end{tikzpicture}
\caption{The function $U_i$ is non-decreasing for $m_i \leq 1$ and non-increasing for $m_i \geq 1$.}
\label{fig:U_shape}
 \end{figure}

From that discussion it should be clear that the bid multiplier that optimizes the goal in equation \eqref{eq:autobidder_goal} is the value $m_i \geq 1$ such that $U_i(m_i)$ is as close as possible to zero. This suggests the following simple bidding strategy:
\begin{itemize}
\item whenever $U_i(m) > 0$ increase the multiplier since there is slack in the constraint so there is room to increase the value.
\item whenever $U_i(m) < 0$ decrease the multiplier since we are paying more we should per the ROS constraint.
\end{itemize}
A practical method for adjusting multiplier is what is typically called in electrical engineering as as PID controller. In this method, we adjust the control variable (here the bid multiplier) proportionally to the slack of the violation of the constraint. This is successfully applied in practice from the temperature control in HVAC systems all the way to bidding and budget pacing in ad systems \cite{balseiro2023field,tashman2020dynamic,yang2019bid,zhang2016feedback,smirnov2016online}. Recently, PID controllers were shown to be an instance of mirror descent applied to a suitable loss function \cite{balseiro2023analysis}\footnote{This result builds a bridge from PID controllers to more traditional learning algorithms. It remains an interesting open question to understand the performance of autobidding under a broader class of update rules derived from mirror descent or FTRP. We note, however, that autobidding is not a traditional online learning problem since it has a hard ROS constraint thus standard no-regret algorithms can’t be applied out of the box.}.

When we translate this control strategy to a differential equation in the space of multipliers, what we get is the following set of equations:

\begin{equation}\label{eq:diff_equation_multipliers}
\frac{d m_i}{dt} = U_i(m), \forall i
\end{equation}

There are variation\footnote{} of this equation that satisfy the same goal such as $\frac{d}{dt} (\log m_i) = U_i(m)$ or $\frac{d}{dt}  m_i = U_i(m)^\alpha$ that are also often used in practice and achieve a similar effect. Here we will study the linear version in \eqref{eq:diff_equation_multipliers}. The linear version has the property that if the vector of multipliers $m(t)$ evolves over time according to the equation above and the multipliers stay bounded within $m_i \in [0,B]$ then:

$$\frac{1}{T}\int_0^T U_i(m(t)) dt = \frac{1}{T}\int_0^T \frac{d m_i}{dt} dt = \frac{m_i(T) - m_i(0)}{T} \leq \frac{B}{T}$$
In other word, the control loop satisfies the ROS constraints on average up to a $O(1/T)$ error.

\paragraph{Smooth limit} It will often simplify our analysis to assume that the each item has an infinitesimal contribution to the total utility $U_i(m)$. In this limit the number of items $k \rightarrow \infty$ and the value of each item $v_{ij} \rightarrow 0$. Another (perhaps more intuitive) way to describe this limit is to keep the number of items $k$ fixed but instead of assuming a fixed value $v_{ij}$ we assume that the bidders' valuation are described by a random vector $v = (v_{ij})_{i=1..n, j = 1..k}$ with positive and $C^1$-density everywhere on a box $[0,M]^{nk}$. Under such a regime the utilities $U_i(m)$ are defined as $\E[\sum_j v_{ij} x_{ij}(m) - p_{ij}(m)]$ and are $C^1$-functions of the vectors of multipliers.


\section{Basics of differential equations and dynamical systems}

In this section we review a few basic ideas about differential equations and dynamical systems.  We refer the reader to standard references \cite{arnold1992ordinary, hirsch2012differential, perko2013differential, strogatz2018nonlinear} for more details.  Given a function $U:\R^n\to \R^n$, consider the system of differential equations $\frac{dx}{ dt} = U(x)$. Such a system is called \emph{autonomous} since the right hand side has no explicit dependence on time. A \emph{solution} to this system is a function $x:J\to \R^n$ for interval $J\subset \R$, such that for all $t\in J$, $\frac{d}{dt}x(t) = U(x(t))$.  Geometrically, $x(t)$ is a curve in $\R^n$ whose tangent vector exists for each $t\in J$ and equals $U(x(t))$ at $t$. An initial condition for a solution $x:J\to \R^n$ is a $t_0\in J$ and $x_0\in \R^n$ such that $x(t_0)=x_0$.  For simplicity, it is normally assumed $t_0 = 0$.  The curve which $x$ traces out in $\R^n$ is called the \emph{orbit} through $x_0$.  If $U$ is $C^1$, then the existence and uniqueness theorem guarantees that given an initial condition $x(0) = x_0$ there exists such a solution, and moreover it is only such solution.

In general, obtaining explicit solutions to differential equations is an extremely difficult task. Instead, one focuses on a qualitative understanding of the behavior of the set of solutions.  The \emph{flow} generated by a differential equation is a continuous map $\phi:\R \times \R^n\to \R^n$ which represents the set of solutions, i.e., $\phi(t,x_0)$ is the value at time $t$ of the solution with initial value $x_0$.  A flow satisfies the properties i) $\phi(0,x) = x$ for all $x\in \R^n$ and $\phi(s,\phi(t,x))=\phi(s+t,x)$ for all $s,t\in \R$ and $x\in \R^n$.

We say that a point $x \in \R^n$ is an equilibrium whenever $U_i(x) = 0$ for all $i$ and hence the constant function $x(t) = x$ is a solution the equation. If there exists a $\tau>0$ such that $\phi(t+\tau,x_0)=\phi(t,x_0)$ for all $t$ and $x_0$ is not an equilibrium, then the solution $\phi_t(x_0)$ is called a \emph{periodic orbit}. The smallest such $\tau$ is called the \emph{period}.  A set $S$ is \emph{invariant} (with respect to $\phi$) if $\phi(t,S)\subset S$ for all $t\in \R$; equilibria and periodic orbits are examples of invariant sets.  A set $S$ is \emph{positively invariant} (with respect to $\phi$) if $\phi(t,S)\subset S$ for all $t\geq 0$.

A point $x\in \R^n$ is an $\omega$-limit point for the solution through $x_0$ if there is a sequence $t_n\to \infty$ such that $\lim_{n\to \infty}\phi(t_n, x_0)\to x$. The $\omega$-limit set of $x_0$, denoted $\omega(x_0)$, is the set of all $\omega$-limit points of $x_0$.  If $\omega(x_0)$ consists of a single point $x$, then $x$ is necessarily an equilibrium.  If, on the other hand, $\omega(x_0)$ is a periodic orbit with period $\tau$ then the solution $x(t)$ through $x_0$ will eventually oscillate with a period approaching $\tau$.  Thus informally, if $x\in \omega(x_0)$ then the solution through $x_0$ eventually reaches $x$ (possibly infinitely often), as it accumulates at $x$ as time advances. Similarly to $\omega$-limit sets, we define the $\alpha$-limit points and the $\alpha$-limit set $\alpha(x_0)$, replacing $t_n\to \infty$ with $t_n\to -\infty$.  By \emph{limit set} we mean either an $\omega$ or $\alpha$ limit set.

The Poincare-Bendixson theorem  characterizes limit sets for planar differential equations, i.e., $\frac{d}{dt}(x,y) = U(x,y)$, with $U:\R^2\to \R^2$, and roughly says that a nonempty $\omega$ limit set is either an equilibrium, a periodic orbit, or a finite number of equilibria together with orbits connecting them.  For further discussion see, e.g., \cite[Section 3.7]{perko2013differential}.  In the special case that the Bendixson criteria holds, i.e., $\frac{\partial U_1}{\partial x} + \frac{\partial U_2}{\partial y}\neq 0$, it can be shown that in fact an omega limit set is either empty or a single equilibrium, see \cite{mccluskey1998stability}.


\begin{theorem}[Poincare-Bendixson Refinement]\label{thm:poincare_bendixson}
If $\frac{\partial U_1}{\partial x} + \frac{\partial U_2}{\partial y}\neq 0$ on $\R^2$ and $\omega(x_0,y_0)$ is nonempty, then $\omega(x_0,y_0)$ is an equilibrium.
\end{theorem}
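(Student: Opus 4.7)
My plan is to combine Bendixson's no-closed-orbit criterion with the classical Poincar\'e--Bendixson trichotomy recalled above. I first observe that since $\mathrm{div}\, U := \partial U_1/\partial x + \partial U_2/\partial y$ is continuous and nowhere zero on the connected set $\R^2$, the intermediate value theorem forces it to have constant sign; without loss of generality I will assume $\mathrm{div}\, U > 0$ throughout.

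To rule out periodic orbits, suppose for contradiction that $\gamma$ is a periodic orbit bounding a region $D \subset \R^2$. Green's theorem gives
\[\iint_D \mathrm{div}\, U \, dA \;=\; \oint_\gamma (U_1 \, dy - U_2 \, dx).\]
On $\gamma$ parametrized by time $t$ one has $dx = U_1\, dt$ and $dy = U_2\, dt$, so $U_1\, dy - U_2\, dx = 0$ pointwise along $\gamma$ and the line integral vanishes. This contradicts strict positivity of the area integral, so no periodic orbit exists.

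I then invoke the classical Poincar\'e--Bendixson theorem: a nonempty (connected, compact) $\omega$-limit set in the plane is either (a) a single equilibrium, (b) a periodic orbit, or (c) a connected union of finitely many equilibria together with heteroclinic/homoclinic orbits joining them. Case (b) is eliminated above. For case (c), any nontrivial such configuration contains a closed loop $\Gamma$ (a homoclinic orbit returning to its equilibrium, or a heteroclinic cycle of length at least two); applying Green's theorem to the planar region bounded by $\Gamma$ produces the same contradiction, because the finitely many equilibria on $\Gamma$ contribute measure zero to the line integral, while each smooth connecting arc still satisfies $U_1\, dy - U_2\, dx = 0$ pointwise. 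Hence (c) collapses to (a), and $\omega(x_0, y_0)$ is an equilibrium.

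The main obstacle I anticipate is justifying that case (c) always yields a closed loop bounding a planar region. This relies on the planar topology of $\omega$-limit sets (the Jordan curve theorem together with the connectedness of $\omega(x_0, y_0)$ and the matching $\alpha$/$\omega$-limit structure of the connecting orbits), and is precisely the ingredient that distinguishes this refinement from the base Poincar\'e--Bendixson statement.
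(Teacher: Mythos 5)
The paper does not actually prove this statement; it is invoked as a black box with a citation to \cite{mccluskey1998stability}, so there is no ``paper's proof'' to compare against. Your sketch supplies the argument the citation stands in for, and it follows the standard route: Green's theorem rules out periodic orbits (the classical Bendixson criterion), and then the same integral identity must be pushed to also exclude the ``graphic'' or polycycle case (c) of the Poincar\'e--Bendixson trichotomy. You are right that the crux is showing case (c) forces a Jordan curve bounding a planar region. This is genuinely nontrivial: one must argue (via a flow-box/transversal argument and the Jordan curve theorem) that the trajectory through $(x_0,y_0)$ spirals monotonically onto $\omega(x_0,y_0)$, which forces the equilibria and connecting orbits in the limit set to be cyclically arranged. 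Once a polycycle $\Gamma$ is extracted, your observation that the line integral vanishes arc by arc and that the finitely many corner points are negligible is exactly right, and the area integral has constant sign by IVT.

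Two smaller points. First, the constant-sign step implicitly uses continuity of the partial derivatives, i.e., $U\in C^1$; this matches the paper's standing assumption but should be stated. Second, the Poincar\'e--Bendixson trichotomy requires $\omega(x_0,y_0)$ to be compact (equivalently, the forward orbit to be bounded), and the theorem statement as written in the paper omits that hypothesis --- your argument inherits the omission. In the paper's one application (Theorem \ref{thm:convergence}) boundedness is explicitly assumed, so this is harmless there, but a self-contained proof should add it as a hypothesis or observe that an empty $\omega$-limit set is the only alternative when the orbit is unbounded in a setting where the trichotomy still applies.
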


\section{Qualitative behavior of ROS systems}\label{sec:qualitative}

Our goal is to understand the behavior of an ROS system defined by the differential equation \eqref{eq:diff_equation_multipliers}. One of the most basic questions to ask is: what is the long-term behavior of the system trajectories?  It is natural to hope that in these systems all trajectories converge to equilibria.  However, if the system does not equilibriate, then what is its asymptotic behavior?


We'll start with some general facts about ROS systems.
First, we give two characterizations of the form of the utility function $U_i$. All missing proofs can be found in Appendix \ref{appendix:proofs_qualitative}.

\begin{lemma}\label{lemma:ros:UI}
In the smooth limit, we have that $U_i$ is $C^1$ and: (1) $\frac{\partial U_i}{\partial m_i} > 0 \text{ for } m_i < 1$; (2)  $\frac{\partial U_i}{\partial m_i} < 0 \text{ for } m_i > 1$; (3) $U_i(m) \geq 0$ for $m_i = 1$.
\end{lemma}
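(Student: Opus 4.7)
The plan is to work directly with the explicit second-price formula for $U_i$ and then differentiate under the expectation. Under uniform bid scaling, ties occur with probability zero in the smooth limit, so bidder $i$ wins item $j$ precisely when $m_i v_{ij} > \max_{s\neq i} m_s v_{sj}$, and in that case pays the second-highest bid $w_{ij} := \max_{s\neq i} m_s v_{sj}$. Therefore
\[
U_i(m) \;=\; \sum_j \E\bigl[(v_{ij} - w_{ij})\,\Ind\{m_i v_{ij} \geq w_{ij}\}\bigr].
\]
Part (3) is immediate from this representation: when $m_i = 1$, the indicator forces $v_{ij} \geq w_{ij}$, so each summand is pointwise nonnegative, hence $U_i(m) \geq 0$.

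For parts (1) and (2), I would condition on $v_{ij} = v$ and apply the Leibniz rule. The conditional law of $w_{ij}$ given $v_{ij} = v$ has a density $h_j(\cdot \mid v)$ that is continuous and strictly positive on its support, since $v$ has a strictly positive $C^1$ density on the box and $w_{ij}$ is a continuous (piecewise smooth) function of $v_{-i,j}$. Then
\[
\frac{\partial}{\partial m_i}\int_0^{m_i v} (v - w)\, h_j(w\mid v)\,dw \;=\; (v - m_i v)\cdot v \cdot h_j(m_i v \mid v) \;=\; v^2(1-m_i)\,h_j(m_i v \mid v).
\]
Integrating against the marginal density of $v_{ij}$ and summing over $j$ yields
\[
\frac{\partial U_i}{\partial m_i}(m) \;=\; (1 - m_i)\sum_j \int v^2\, h_j(m_i v \mid v)\, f_{v_{ij}}(v)\, dv,
\]
and the integrand is strictly positive on a set of positive measure. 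Thus $\partial U_i/\partial m_i$ has the sign of $1 - m_i$, giving (1) and (2). The $C^1$ regularity of $U_i$ drops out of the same calculation, since the right-hand side is continuous in $m$.

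The main obstacle is justifying the interchange of differentiation and expectation and establishing the requisite smoothness of $h_j(\cdot \mid v)$. Because $w_{ij}$ is a maximum of the scaled competitor values $m_s v_{sj}$, its conditional law is piecewise smooth with potential kinks on the measure-zero sets where two of the $m_s v_{sj}$ coincide; these do not contribute to the boundary integral. Once one checks that $h_j(m_i v \mid v)$ is continuous in $(m, v)$ and bounded above by an integrable function — both of which follow from the positivity and $C^1$ regularity of the joint density of $v$ on its box — the dominated convergence / Leibniz argument goes through routinely, and the rest of the lemma is essentially a sign reading.
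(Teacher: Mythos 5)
Your proof is correct and follows essentially the same route as the paper's: both differentiate the conditional expected utility in $m_i$ to obtain $\partial U_i/\partial m_i = (1-m_i)\cdot(\text{nonnegative quantity})$, with the paper packaging the intermediate step via the second-price payment identity $\partial \hat p_{ij}/\partial b_{ij} = b_{ij}\,\partial \hat x_{ij}/\partial b_{ij}$ while you unpack it explicitly through the conditional density of the competing bid $w_{ij}$ — the same calculation in different notation. Your argument for part (3), reading off pointwise nonnegativity of $(v_{ij}-w_{ij})\Ind\{v_{ij}\geq w_{ij}\}$ directly at $m_i=1$, is slightly cleaner than the paper's, which instead deduces it from $U_i=0$ at $m_i=0$ combined with the monotonicity established in part (1).
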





Suppose $\phi$ is the flow for an ROS system with $n$ bidders given by equation \eqref{eq:diff_equation_multipliers}.  The next lemma shows that once flow preserves the region $[1,\infty)^n$, i.e., if $m(0)\in [1,\infty)^n$ is an initial condition, then $m(t) \in [1,\infty)^n$.

\begin{lemma}\label{lemma:invariance_above_1}
 The set $[1,\infty)^n\subset \R^n$ is positively invariant with respect to $\phi$.
\end{lemma}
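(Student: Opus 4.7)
The plan is to check the Nagumo sub-tangentiality condition at every boundary point of $[1,\infty)^n$ and reduce it to Lemma \ref{lemma:ros:UI}(3). The boundary of $[1,\infty)^n$ consists of the faces $F_i = \{m \in [1,\infty)^n : m_i = 1\}$ for $i = 1,\ldots,n$. For the flow to exit through $F_i$, the component $\dot m_i = U_i(m)$ would have to drive $m_i$ below $1$, i.e., $U_i(m)$ would have to be strictly negative somewhere on $F_i$. But Lemma \ref{lemma:ros:UI}(3) says $U_i(m) \geq 0$ on precisely this face, so the vector field never points outward along $\partial[1,\infty)^n$.

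To make this rigorous, I would argue by contradiction. Assume $m(0) \in [1,\infty)^n$ but $m(t_0) \notin [1,\infty)^n$ for some $t_0 > 0$, and set $t^* = \inf\{t > 0 : m(t) \notin [1,\infty)^n\}$. Closedness of $[1,\infty)^n$ together with continuity of $m(\cdot)$ gives $m(t^*) \in [1,\infty)^n$, and pigeonholing over the finitely many coordinates yields a fixed index $i$ and a sequence $t_n \downarrow t^*$ with $m_i(t_n) < 1$ while $m_i(t^*) = 1$. Taking a one-sided limit of the difference quotient produces $\dot m_i(t^*) \leq 0$, while Lemma \ref{lemma:ros:UI}(3) gives $\dot m_i(t^*) = U_i(m(t^*)) \geq 0$; hence $U_i(m(t^*)) = 0$.

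The main obstacle, and the reason the above is not yet a direct contradiction, is this degenerate tangential case $U_i(m(t^*))=0$: mere weak sub-tangentiality does not a priori rule out a trajectory leaving tangentially. I plan to resolve it by a standard perturbation trick. Consider the auxiliary ODE $\dot{\tilde m}_i = U_i(\tilde m) + \epsilon$ with $\epsilon > 0$, started at the same initial condition. On $\{\tilde m_i = 1\}$ the right-hand side is at least $\epsilon > 0$, so the one-sided-derivative argument above now yields a genuine contradiction and the perturbed flow $\tilde m^{(\epsilon)}(t)$ stays in $[1,\infty)^n$ for all $t \geq 0$. Continuous dependence of solutions of $C^1$ ODEs on parameters, applied on any compact interval $[0,T]$, implies $\tilde m^{(\epsilon)}(t) \to m(t)$ uniformly as $\epsilon \to 0$; closedness of $[1,\infty)^n$ then forces $m(t) \in [1,\infty)^n$ on $[0,T]$, and since $T$ was arbitrary, for all $t \geq 0$.
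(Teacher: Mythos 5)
Your proof is correct and rests on exactly the same key fact the paper uses: Lemma~\ref{lemma:ros:UI}(3) gives $U_i(m)\geq 0$ on the facet $\{m_i=1\}$, so the vector field is sub-tangent to the boundary of $[1,\infty)^n$. You are, if anything, more careful than the paper's one-line argument: the paper simply declares positive invariance from the inward-pointing field, whereas you correctly flag that weak sub-tangentiality ($U_i=0$ on the facet) does not immediately forbid tangential escape and close the gap with the standard $\epsilon$-perturbation plus continuous-dependence-on-parameters argument.
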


For the remainder of the paper, we restrict our attention to the behavior of ROS systems on $[1,\infty)^n$.

\subsection{Two Bidder System}\label{sec:two_bidders}

We consider an instance with two bidders and two items where the values $v_{ij}$ are given by the following matrix (recall that targets are normalized to $1$). In all examples here, we assume $i$ indexes rows and $j$ indexes columns:
$$\begin{bmatrix}
2 & 1 \\ 1 & 2
\end{bmatrix}$$
The pairs of multipliers that are in equilibrium are $(2,2)$ and $(1,x), (x,1)$ for $x \geq 2$. Those are highlighted in blue in the right side of Figure \ref{fig:2_bidder}. In the left side of the same figure, those corresponds to points where $(U_1, U_2) = (0,0)$. Along the diagonal $m_1 = m_2$, the dynamics is attracted by the symmetric equilibrium $(2,2)$ as expected. However, even a slight deviation from the symmetric equilibrium causes the dynamics to converge towards one of the two extremal equilibria $(2,1)$ or $(1,2)$.

\tikzset{
  on each segment/.style={
    decorate,
    decoration={
      show path construction,
      moveto code={},
      lineto code={
        \path [#1]
        (\tikzinputsegmentfirst) -- (\tikzinputsegmentlast);
      },
      curveto code={
        \path [#1] (\tikzinputsegmentfirst)
        .. controls
        (\tikzinputsegmentsupporta) and (\tikzinputsegmentsupportb)
        ..
        (\tikzinputsegmentlast);
      },
      closepath code={
        \path [#1]
        (\tikzinputsegmentfirst) -- (\tikzinputsegmentlast);
      },
    },
  },
  mid arrow/.style={postaction={decorate,decoration={
        markings,
        mark=at position .5 with {\arrow[#1]{stealth}}
      }}},
}

\begin{figure}[h]
\begin{center}
\hfill
\includegraphics[scale=.36]{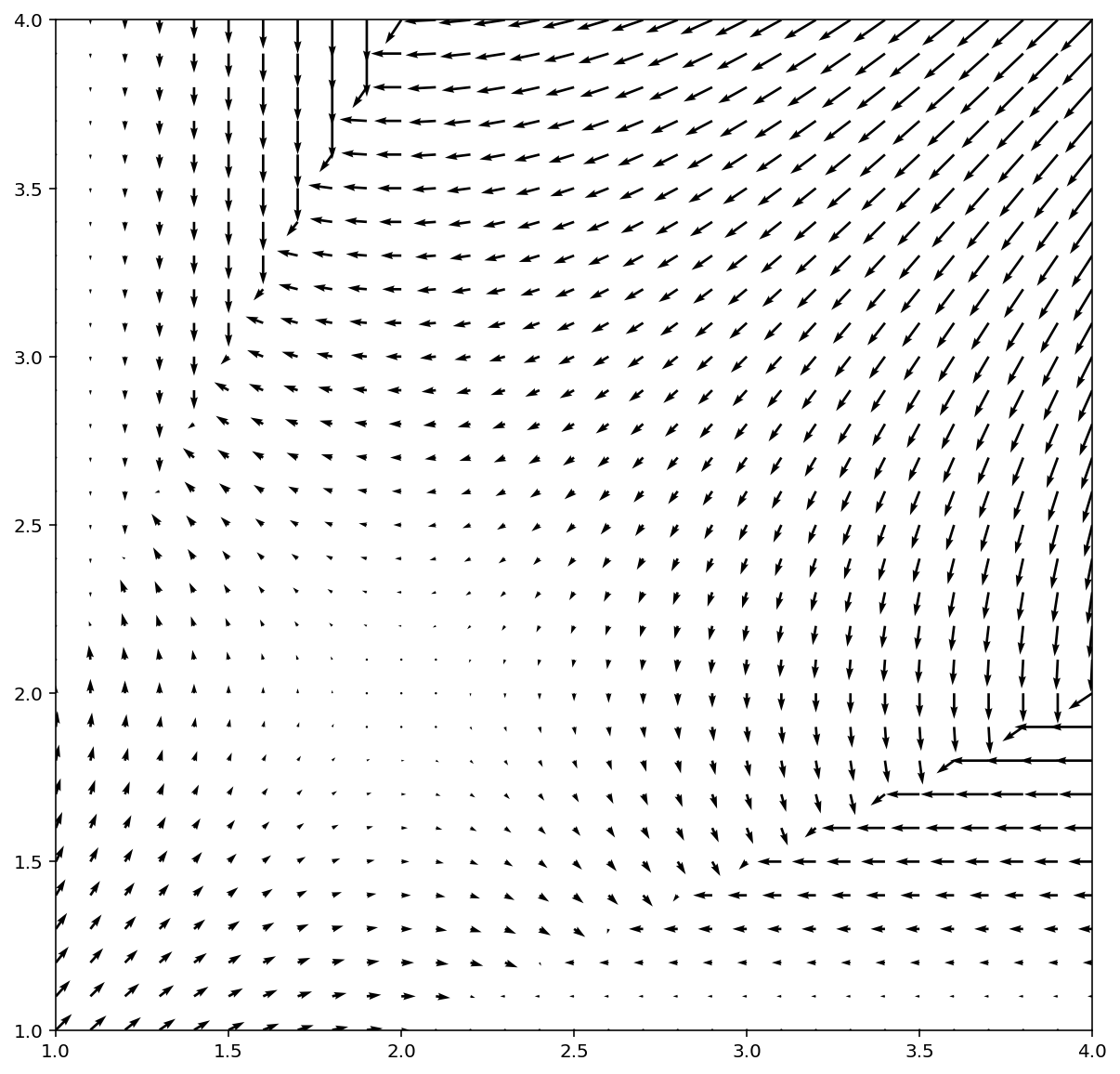}
\hfill
\begin{tikzpicture}[scale=2.25]
\draw (1,1) rectangle (4,4);
\node at (1,.9) {$1$};
\node at (2,.9) {$2$};
\node at (.9,1) {$1$};
\node at (.9,2) {$2$};
\draw[blue, line width=2pt] (1,2) -- (1,4);
\draw[blue, line width=2pt] (2,1) -- (4,1);
\draw[fill=blue] (2,2) circle [radius=1pt];
\draw[red,postaction={on each segment={mid arrow=red}}] (1,1) -- (2,2);
\draw[red,postaction={on each segment={mid arrow=red}}] (4,4) -- (2,2);
\input{2_bidder_orbits}
\end{tikzpicture}
\hfill
\end{center}
\caption{Vector field for two bidders [left]. The vector $(U_1(m), U_2(m))$ gives the direction of the dynamics at point $m$. Orbits of the dynamical system [right].}
\label{fig:2_bidder}
\end{figure}


In this example, regardless of the starting point, the dynamics always converges to an equilibrium. We will show that this is true in general for two bidders and any number of items.

We will assume that we are in the smooth regime, and so the $U_i$ are differentiable. Since the dynamic stays within $[1, \infty)^2$ (by Lemma \ref{lemma:invariance_above_1}) we can use that $\partial U_i / \partial m_i < 0$ for $m_i > 1$ in the smooth limit. 


\begin{theorem}
\label{thm:convergence}
Consider the case of the smooth limit ROS dynamics for two bidders. If $m_0\in (1,\infty]^2$ and the orbit through $m_0$ is bounded, then the orbit converges to an equilibrium.
\end{theorem}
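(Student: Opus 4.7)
By Lemma~\ref{lemma:invariance_above_1}, the orbit $m(t)$ remains in $[1,\infty)^2$ for all $t\geq 0$. Boundedness then guarantees that the $\omega$-limit set $\omega(m_0)$ is nonempty, compact, connected, and forward-invariant in $[1,\infty)^2$. The classical planar Poincar\'e--Bendixson theorem implies that $\omega(m_0)$ is one of: (i) a set of equilibria, (ii) a periodic orbit, or (iii) a heteroclinic cycle of equilibria joined by connecting orbits. It thus suffices to rule out (ii) and (iii), which we do via a Bendixson--Dulac divergence argument.

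By Lemma~\ref{lemma:ros:UI}, on the open set $(1,\infty)^2$ both $\partial_{m_1} U_1 < 0$ and $\partial_{m_2} U_2 < 0$, so $\operatorname{div} U = \partial_{m_1} U_1 + \partial_{m_2} U_2 < 0$ throughout $(1,\infty)^2$. Suppose toward contradiction that $\omega(m_0)$ contains a closed curve $\gamma\subseteq [1,\infty)^2$ (either the periodic orbit of case (ii) or the cycle of case (iii)). Since $[1,\infty)^2$ is simply connected, by the Jordan curve theorem $\gamma$ bounds a disk $D\subset [1,\infty)^2$, and Green's theorem gives
\[
\iint_D \operatorname{div} U \, dA \;=\; \oint_\gamma \bigl(U_1\,dm_2 - U_2\,dm_1\bigr).
\]
Parametrizing any solution arc of $\gamma$ by time, $(dm_1, dm_2) = (U_1, U_2)\,dt$, so the one-form $U_1\,dm_2 - U_2\,dm_1$ restricted to $\gamma$ vanishes identically; equilibria (if any appear in $\gamma$) contribute nothing either. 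Hence the right-hand side is $0$. On the other hand, $D\cap(\{m_1=1\}\cup\{m_2=1\})$ has Lebesgue measure zero while on the full-measure open subset $D\cap (1,\infty)^2$ the divergence is strictly negative, so the left-hand side is strictly negative---a contradiction.

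This rules out cases (ii) and (iii); hence $\omega(m_0)$ consists entirely of equilibria, and by connectedness of $\omega(m_0)$ the orbit converges to an equilibrium as claimed. The chief obstacle is that Theorem~\ref{thm:poincare_bendixson} as stated requires the strict divergence inequality to hold globally on $\R^2$, whereas in our setting it can degenerate on the boundary lines $m_i=1$; this is bypassed by exploiting the forward-invariance of $[1,\infty)^2$ from Lemma~\ref{lemma:invariance_above_1} together with the measure-zero nature of the boundary, so that the area integral of $\operatorname{div} U$ over any putative enclosed disk is still strictly negative.
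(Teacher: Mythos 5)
Your proposal is correct and follows essentially the same route as the paper: Lemma~\ref{lemma:invariance_above_1} pins the orbit to $[1,\infty)^2$, and the strict divergence condition from Lemma~\ref{lemma:ros:UI} combined with Poincar\'e--Bendixson forces the $\omega$-limit set to be an equilibrium. The paper simply cites the refinement stated as Theorem~\ref{thm:poincare_bendixson}, whereas you reprove it via Green's theorem; in doing so you also handle, more carefully than the paper does, the fact that the hypothesis of Theorem~\ref{thm:poincare_bendixson} (divergence nonzero on all of $\R^2$) does not literally hold here because $\partial U_i/\partial m_i$ vanishes on the lines $m_i = 1$ --- your measure-zero argument plugs that small gap cleanly.
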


\begin{proof}
Note that the Bendixson-Dulac criteria holds within $[1,\infty)^2 \subset \R^2$ from Lemma \ref{lemma:ros:UI}. Let $m_0\in [1,\infty)^2$. By Theorem \ref{thm:poincare_bendixson}, if $\omega(m_0)$ is nonempty, then it must be a single equilibrium.
\end{proof}

\subsection{Oscillations with three bidders}\label{sec:3_oscillation}

While systems of two bidders are guaranteed to converge, this is no longer true for systems of three or more bidders. We construct an example below that resembles a game of Rock-Paper-Scissors. Consider three bidders and three items with the following valuation matrix:

$$v = \begin{bmatrix}
2 & 1 & 0\\
0 & 2 & 1\\
1 & 0 & 2
\end{bmatrix}.$$

Notice that each item $j$ has two associated bidders: one with a high value (2) for $j$ and one with a low value (1). Dually, each bidder $i$ has two items it's interested in: an item of lower value, and an item of higher value. Consider an initial set of multipliers $m$ such that each item is won by its high value bidder.  In a second price auction, the price is then determined by the low value bidder.  Let us consider what happens as bidder $1$ increases their multiplier $m_1$:

\begin{itemize}
\item When $m_1$ increases, the price pressure on the second item increases, so $U_2$ decreases.
\item If $U_2$ decreases enough to become negative, then $m_2$ begins to decrease.  This in turn reduces the price pressure on the third item, so $U_3$ increases.
\item If $U_3$ increases enough to become positive, then $m_3$ begins to increase.  This in turn increases the price pressure on the first item, so $U_1$ decreases.
\item If $U_1$ decreases enough to become negative, it causes $m_1$ to decrease.  This in turn reduces price pressure on the second item, so $U_2$ increases.
\item If $U_2$ increases enough to become positive, $m_2$ increases. As $m_2$ increases, it increases the price pressure on the third item, so $U_3$ decreases.
\item If $U_3$ decreases enough to become negative then $m_3$ decreases. As $m_3$ decreases, it reduces price pressure on item 1 causing $U_1$ to increase, starting the cycle over again.
\end{itemize}

The overall effect is that increasing the bid multiplier $m_1$ generates a chain of effects through shared items and different bidders, ultimately leading to an oscillation.

This behavior can be obtained with fixed values for $v_{ij}$ but the dynamics exhibit various artifacts due to ties in the auction -- the dynamics causes bids on certain items to be the same and the path of multipliers depends on how those ties are resolved. In order to bypass this issue we pass to the smooth limit and replace the fixed valuations in the previous table by distributions so that ties occur with probability zero. For convenience we let the value distributions be given by beta distributions\footnote{The choice of a beta distribution is somewhat arbitrary. The goal is to have a smooth distribution over positive values that approximates a point mass. The choice of beta (instead of normal or log-normal) is to facilitate simulations. Since the density is a polynomial, it is possible to write the expected utilities $U_i$ in closed form instead of relying on sampling.} $\text{Beta}(a,b)$, parametrized by non-negative integers $a,b$ and with mean $a/(a+b)$ (see Figure \ref{fig:beta_distribution} in the appendix for example densities).

$$\begin{bmatrix}
\text{Beta}(2c,c) & \text{Beta}(c,2c) & 0\\
0 & \text{Beta}(2c,c) & \text{Beta}(c,2c)\\
\text{Beta}(c,2c) & 0 & \text{Beta}(2c,c)
\end{bmatrix}$$

Figure \ref{fig:3_bidder_beta_b} shows how the multipliers evolve over time given a certain starting point. Figure \ref{fig:3_bidder_beta_a} offers a different way to visualize the dynamics. We can think of the orbit $(m_1(t), m_2(t), m_3(t))$ as tracing a path in $\R^3$. We visualize it by projecting onto the first two dimensions and plotting the curve traced by $(m_1(t), m_2(t))$. In the right side of Figure \ref{fig:3_bidder_beta_a} we show three different orbits of the dynamic with the starting points chosen at random showing that they are all attracted by the same cycle, in other words, the periodic orbit is stable.

\begin{figure}[h]
\begin{center}
\includegraphics[width=\textwidth]{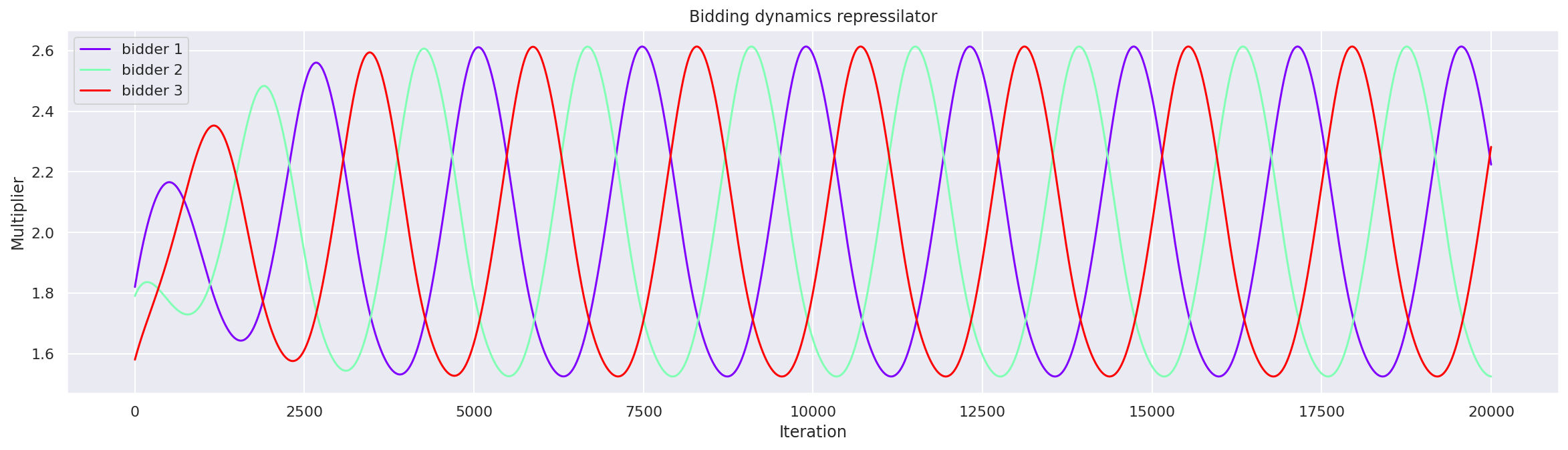}
\end{center}
\caption{The evolution of multipliers $m_1(t), m_2(t), m_3(t)$ over time.}
\label{fig:3_bidder_beta_b}
\end{figure}

\begin{figure}[h]
\begin{center}
\hfill
\includegraphics[width=0.48\textwidth]{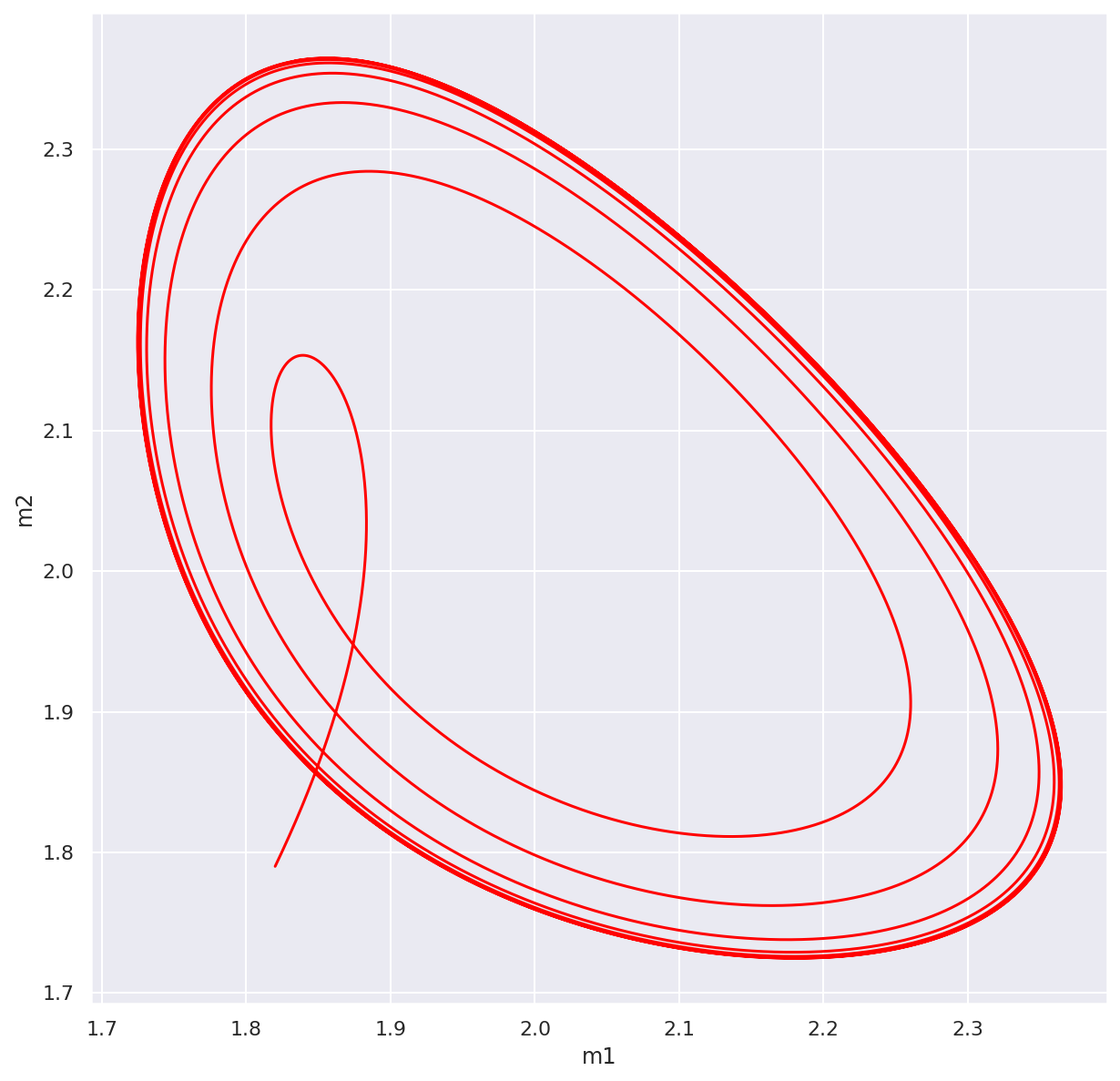}
\hfill
\includegraphics[width=0.48\textwidth]{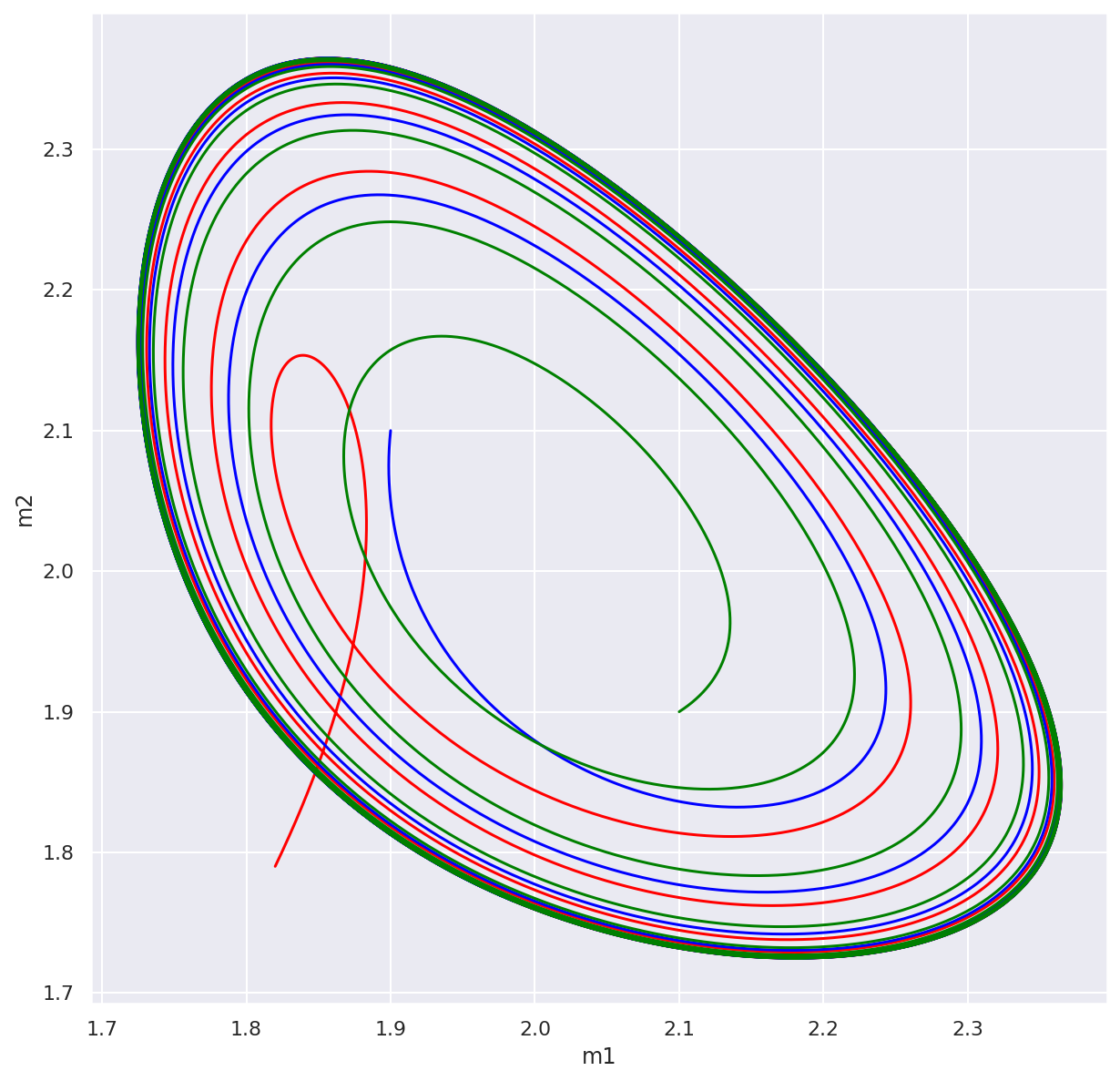}
\hfill
\end{center}
\caption{Left: The orbit of the multipliers projected on the first two multipliers $(m_1(t), m_2(t))$. Right: three orbits of the same dynamics with three different starting points. All the orbits are attracted by the same periodic cycle.}
\label{fig:3_bidder_beta_a}
\end{figure}

\subsection{Bidding Repressilators and Motifs}
\label{sec:motifs}

In the last subsection we constructed an example where the bids in an ROS system oscillate. This shows in particular that such systems don't necessarily converge. 
In this section we look to understand the nature of the oscillation through an analogy, and ask whether such systems can exhibit more complex behavior.
The analogy we draw is to that of the \emph{repressilator} in synthetic biology, a genetic regulatory network consisting of at least one negative feedback loop.  The original repressilator constructed in \cite{elowitz2000synthetic}  is a network of three genes arranged in a cycle of mutual inhibition, or \emph{repression}.  This arrangement can lead to oscillations under a similar principle that lead to the oscillations of \ref{sec:3_oscillation}.

Although the settings are very different (networks of interacting genes and proteins vs agents bidding in auctions) and the equations governing those phenomena are mathematically different, they share enough qualitative similarities that we believe we can borrow ideas from this line of work to understand phenomena in auction bidding. For that reason, we propose constructing \emph{bidding repressilators}.  In particular, we borrow two ideas from biology, the first being the notion of `repression', which will translate to a competitive interaction between bidders where one bidder `represses' another.  The second idea is that of a \emph{motif}: a particular recurrent graph or subgraph that  reflects the `design principles' underlying a network's behavior and function \cite{milo2002network, alon2019introduction}.  We will seek to understand the behavior of ROS systems by constructing simple motifs, which for example may exhibit oscillation, and then investigate how more complex behavior arises from extending or coupling the motif.

\paragraph{Bidding Repressilator} A bidding repressilator\footnote{While in synthetic biology repressilator is typically referred as a network of $3$ nodes, we here use the term more generally to refer to a network of any number of nodes where the directed edges mean that a node `represses' the other.} is a way to construct an ROS system from a graph of interactions between bidders. Given a directed graph, we will construct an ROS system that associates each node with a single bidder and each edge with a single item in such a way that an edge $a \rightarrow b$ means that bidder $a$ `\emph{represses}' bidder $b$ in the sense that (for the most part) an increase in the multiplier $m_a$ decreases $U_b$ and hence decreases the rate $\frac{d}{dt} m_b$. To achieve this effect we will create an item $j$ such that only bidders $a$ and $b$ have non-zero values for this item, given by:
$$v_{aj} \sim \text{Beta}(c, 2c) \qquad
v_{bj} \sim \text{Beta}(2c, c)$$
What will happen is that the value of $b$ for the item is in expectation twice the value of $a$ (Figure \ref{fig:beta_distribution}). Whenever their multipliers are not too far apart, typically $b$ will win the item and $a$ will be the price setter. Hence an increase of $m_a$ will lead to higher prices for $b$ which lowers $m_b$ achieving the desired effect. As an example, the graphs in Figure \ref{fig:repressilator_graphs}(a) and (b) show the two bidder and three bidder systems from Sections \ref{sec:two_bidders} and \ref{sec:3_oscillation}, respectively.

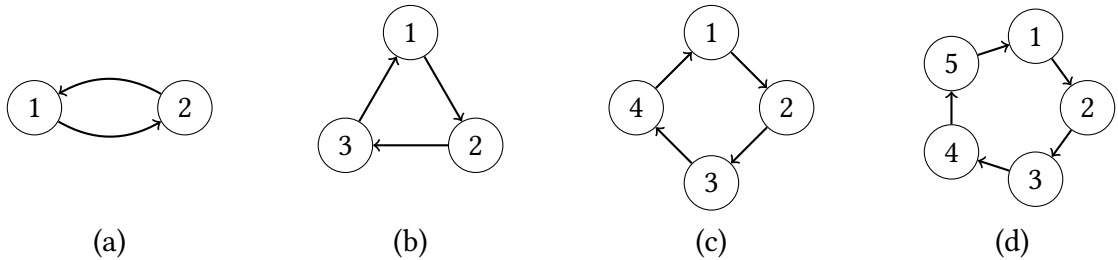
\begin{figure}[h]
\center
\begin{tikzpicture}[scale=1]

\begin{scope}[shift = {(-4, 0)}]
    \node[draw, circle] (A) at (180:1) {$1$};
    \node[draw, circle] (B) at (0:1) {$2$};

    \draw[->, thick] (B) to [bend right] (A);
    \draw[->, thick] (A) to [bend right] (B);
    \node at (0,-1.8) {(a)};
\end{scope}

\node[draw, circle] (A) at (90:1) {$1$};
\node[draw, circle] (B) at (330:1) {$2$};
\node[draw, circle] (C) at (210:1) {$3$};

\draw[<-, thick] (B) -- (A);
\draw[<-, thick] (C) -- (B);
\draw[<-, thick] (A) -- (C);
\node at (0,-1.8) {(b)};

\begin{scope}[shift = {(4, 0)}]
    \node[draw, circle] (A) at (90:1) {$1$};
    \node[draw, circle] (B) at (180:1) {$4$};
    \node[draw, circle] (C) at (270:1) {$3$};
    \node[draw, circle] (D) at (0:1) {$2$};

    \draw[->, thick] (B) -- (A);
    \draw[->, thick] (C) -- (B);
    \draw[->, thick] (D) -- (C);
    \draw[->, thick] (A) -- (D);
    \node at (0,-1.8) {(c)};
\end{scope}

\begin{scope}[shift = {(8, 0)}]
    \node[draw, circle] (A) at (72:1) {$1$};
    \node[draw, circle] (B) at (2*72:1) {$5$};
    \node[draw, circle] (C) at (3*72:1) {$4$};
    \node[draw, circle] (D) at (4*72:1) {$3$};
    \node[draw, circle] (E) at (0:1) {$2$};

    \draw[->, thick] (B) -- (A);
    \draw[->, thick] (C) -- (B);
    \draw[->, thick] (D) -- (C);
    \draw[->, thick] (E) -- (D);
    \draw[->, thick] (A) -- (E);
    \node at (0,-1.8) {(d)};
\end{scope}

\end{tikzpicture}
\caption{Examples of bidding repressilators with two through five bidders.}
\label{fig:repressilator_graphs}
\end{figure}

\paragraph{Caveats: Competitive Systems} The original repressilator in \citep{elowitz2000synthetic} is an example of a \emph{competitive system}, i.e., it has the form $\frac{d}{dt}x_i = f_i(x_1, \hdots, x_n)$ where $\partial f_i / \partial x_j \leq 0$ for all $j \neq i$. However, the bidding networks we construct (and ROS systems more generally) are not necessarily competitive, meaning that they may fail to satisfy $\partial U_i / \partial m_j \leq 0$ for $i \neq j$. The reason is that if we vary a bid multiplier $m_j$ in an ROS system there are two competing effects:
\begin{itemize}
\item increasing $m_j$ we increase the price of an item that $i$ wins, therefore reducing $U_i$.
\item increasing $m_j$ may cause bidder $i$ to lose an item it was winning. If a tiny increase in $m_j$ causes $i$ to lose item $k$, then before it was paying essentially their bid $m_i v_{ik}$ for that item and hence was obtaining negative utility $v_{ik} (1-m_i)$ for it assuming $m_i \geq 1$. Losing an item for which the utility was negative increases $U_i$.
\end{itemize}

Recall that the distributions of the values for an item is set in such a way that if there is an `repressor' edge $i \rightarrow j$, the values of $j$ are larger than the values of $i$ in expectation.  For agent $j$,  the first effect (the price-seller increases price pressure) dominates the second effect, unless the multiplier $m_i$ is very high. See Figure \ref{fig:U1_3cycle} [left] in the appendix. On the other hand, for agent $i$, the first effect is very small and dominated by the second effect, see Figure \ref{fig:U1_3cycle} [right].  It can be seen from Fig. \ref{fig:U1_3cycle} that the magnitude of the `reverse repression' direction is much smaller than the magnitude of the `repression'.





\subsection{Cyclic Feedback Bidding Repressilators}



The notion of a bidding repressilator encodes as a graph the structure of repression, or competition, between bidders. Our goal in this section is to examine to what extent the graph determines the behavior of the dynamics.  To do this we generalize the motif of a three node repressilator to an $n$ node cycle, which we call a \emph{cyclic feedback bidding repressilator}.  For examples, see Figure \ref{fig:repressilator_graphs}.

An interesting phenomena we observe is that:
\begin{enumerate}
\item Cycles of even length exhibit bistability (two stable equilibria), with half the bidders having a high multipliers and the other half having a low multiplier, as in Fig. \ref{fig:4_5_cycle} (top), while
\item cycles of odd length give rise to a stable periodic orbit, see Fig. \ref{fig:4_5_cycle} (bottom).
\end{enumerate}
One can readily see that the argument in Section \ref{sec:3_oscillation} extends to odd but not even number of bidders.  These observations match precisely the prediction for the so-called generalized repressilator (extending the three node cycle to a cycle of $n$ nodes), where stable periodic orbits only occur for an odd number of variables; whereas bistability occurs for an even number, with the two (stable) fixed points having even numbered nodes high, odd numbered nodes low, or vice versa \cite{muller2006generalized}.\footnote{One abstraction of the generalized repressilator is a \emph{monotone cyclic feedback system} i.e., a system arranged in a ring such that $\frac{d}{dt}x_i =  f_i(x_{i-1},x_i, x_{i+1})$ (and $x_0 = x_n$) and $\partial f_i / \partial x_{i+1}, \partial f_i / \partial x_{i-1} \leq 0$. In these systems a Poincare-Bendixson style theorem holds: the only invariant dynamics are periodic orbits and equilibria (and connections between), see \citet{mallet1990poincare, mallet1996poincare}.}   As we argue in the 'caveats' of the previous section, our system doesn't satisfy those properties exactly, but it does approximately. Empirically, we can observe that this is enough for the conclusions of that line of work to hold.



\begin{figure}[h]
\begin{center}
\includegraphics[width=\textwidth]{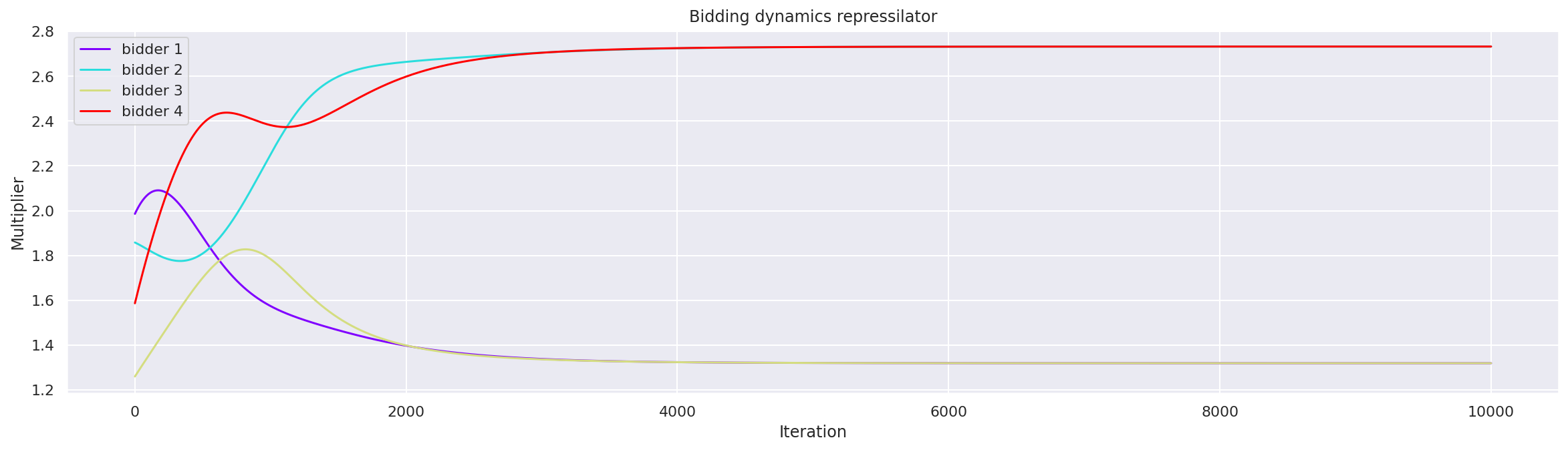}
\includegraphics[width=\textwidth]{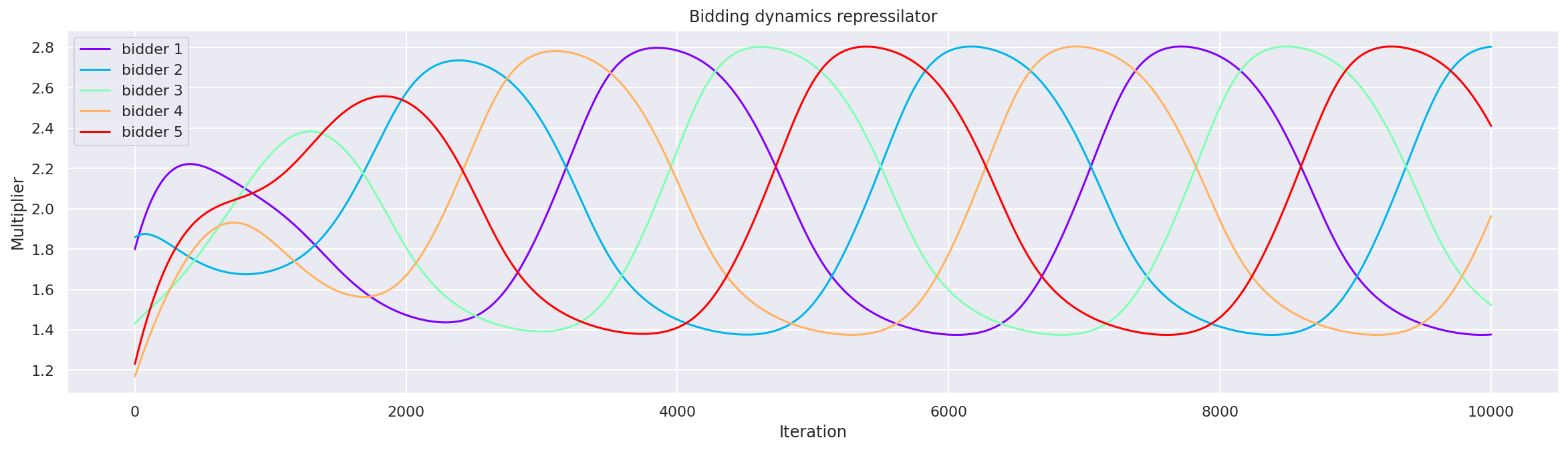}
\end{center}
\caption{A cyclic feedback system with four bidders is bistable, and for this initial condition converges to an equilibrium where the odd numbered bidders have low multipliers and the even numbered bidders have high multipliers. A cyclic feedback
 system with five bidders which has a stable periodic orbit.}
\label{fig:4_5_cycle}
\end{figure}



\subsection{Coupling of Repressilators and Quasi-periodicity}
\label{sec:quasi}

Using the analogy with the generalized repressilator (and in general monotone cyclic feedback systems) we argued that we should expect the behavior to mostly follow a periodic orbit if the bidders are arranged along a ring topology of a bidding repressilator. To obtain more complex behavior, we use the repressilator as a motif to introduce more complex topologies. In particular, a `repressilator of repressilators' where we arrange $9$ bidders in a way that each group of $3$ bidders acts as a super-node (as in Figure \ref{fig:repressilator_graphs}(b)). We simulate two different couplings described in Figure \ref{fig:repressilator_couplings}. In each of them, the first repressilator $123$ represses the second group $456$. And the groups $456$ and $789$ repress each other. What we observe is the emergence of quasi-periodic behavior. The orbits are no longer periodic, but rather trace a torus-like manifold in space.  Such behavior is called \emph{quasi-periodicty}, and is characterized by the system having two incommensurate periods, i.e., the ratio of the periods is irrational.

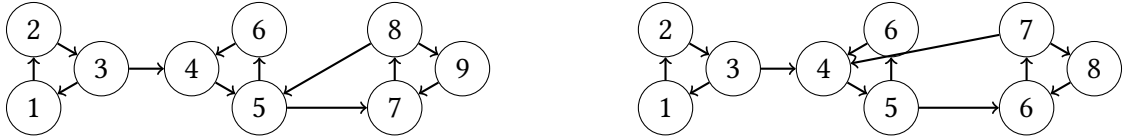
\begin{figure}[h]
\center
\begin{tikzpicture}[scale=0.6]

    \node[draw, circle] (3) at (0:1) {$3$};
    \node[draw, circle] (2) at (120:1) {$2$};
    \node[draw, circle] (1) at (240:1) {$1$};

    \draw[->, thick] (2) -- (3);
    \draw[->, thick] (1) -- (2);
    \draw[->, thick] (3) -- (1);

\begin{scope}[shift = {(4, 0)}]
    \node[draw, circle] (4) at (180:1) {$4$};
    \node[draw, circle] (5) at (300:1) {$5$};
    \node[draw, circle] (6) at (60:1) {$6$};

    \draw[->, thick] (4) -- (5);
    \draw[->, thick] (5) -- (6);
    \draw[->, thick] (6) -- (4);
\end{scope}

\begin{scope}[shift = {(8, 0)}]
    \node[draw, circle] (9) at (0:1) {$9$};
    \node[draw, circle] (8) at (120:1) {$8$};
    \node[draw, circle] (7) at (240:1) {$7$};

    \draw[->, thick] (9) -- (7);
    \draw[->, thick] (7) -- (8);
    \draw[->, thick] (8) -- (9);
\end{scope}

\draw[->, thick] (3) -- (4);
\draw[->, thick] (5) -- (7);
\draw[->, thick] (8) -- (5);

\begin{scope}[shift = {(14, 0)}]
    \node[draw, circle] (3) at (0:1) {$3$};
    \node[draw, circle] (2) at (120:1) {$2$};
    \node[draw, circle] (1) at (240:1) {$1$};

    \draw[->, thick] (2) -- (3);
    \draw[->, thick] (1) -- (2);
    \draw[->, thick] (3) -- (1);

\begin{scope}[shift = {(4, 0)}]
    \node[draw, circle] (4) at (180:1) {$4$};
    \node[draw, circle] (5) at (300:1) {$5$};
    \node[draw, circle] (6) at (60:1) {$6$};

    \draw[->, thick] (4) -- (5);
    \draw[->, thick] (5) -- (6);
    \draw[->, thick] (6) -- (4);
\end{scope}

\begin{scope}[shift = {(8, 0)}]
    \node[draw, circle] (9) at (0:1) {$8$};
    \node[draw, circle] (8) at (120:1) {$7$};
    \node[draw, circle] (7) at (240:1) {$6$};

    \draw[->, thick] (9) -- (7);
    \draw[->, thick] (7) -- (8);
    \draw[->, thick] (8) -- (9);
\end{scope}

\draw[->, thick] (3) -- (4);
\draw[->, thick] (5) -- (7);
\draw[->, thick] (8) -- (4);

\end{scope}

\end{tikzpicture}
\caption{Two different couplings of three $3$-cycle repressilators}
\label{fig:repressilator_couplings}
\end{figure}

\begin{figure}[h]
\begin{center}
\hfill
\includegraphics[width=0.48\textwidth]{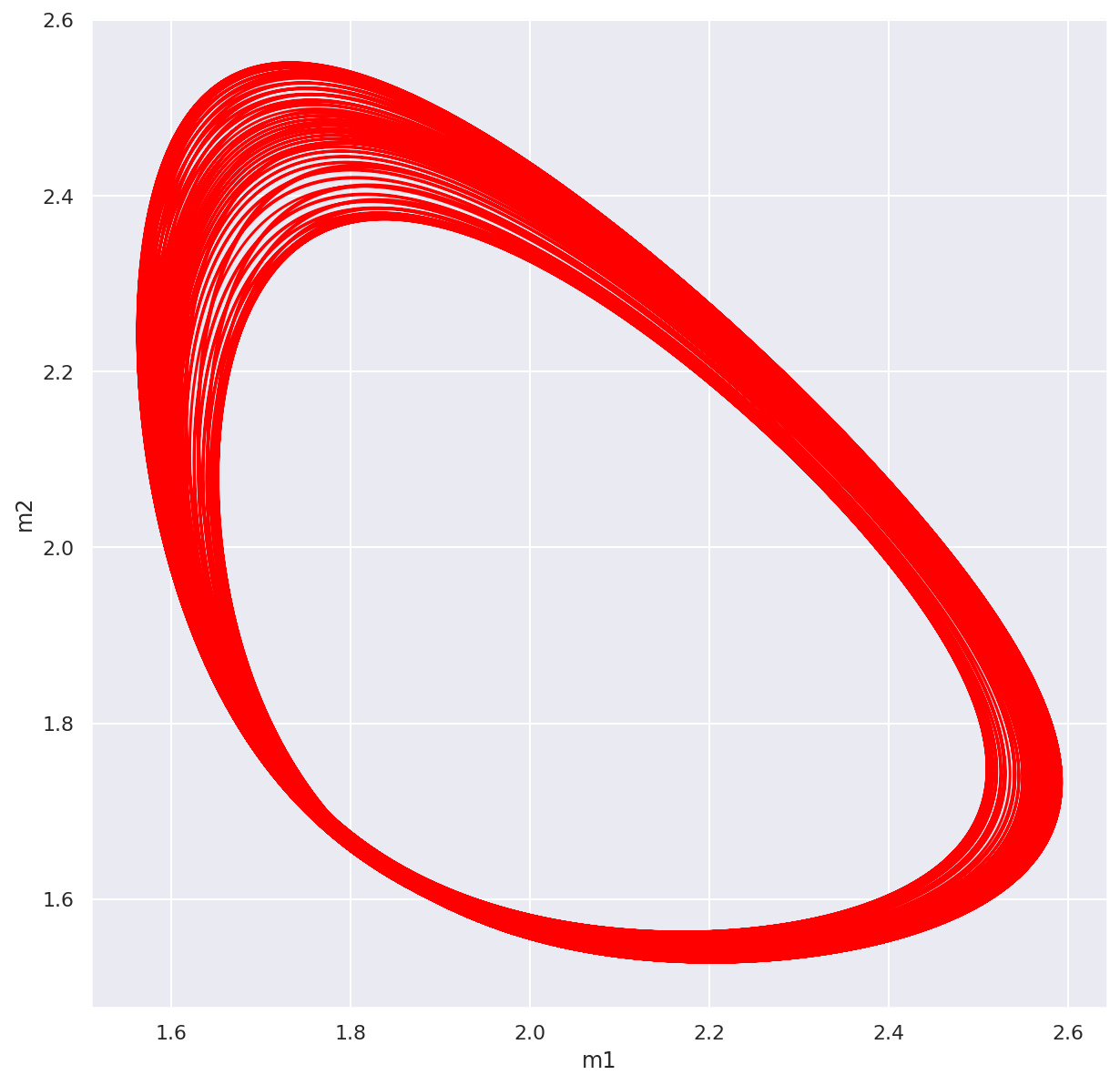}
\hfill
\includegraphics[width=0.48\textwidth]{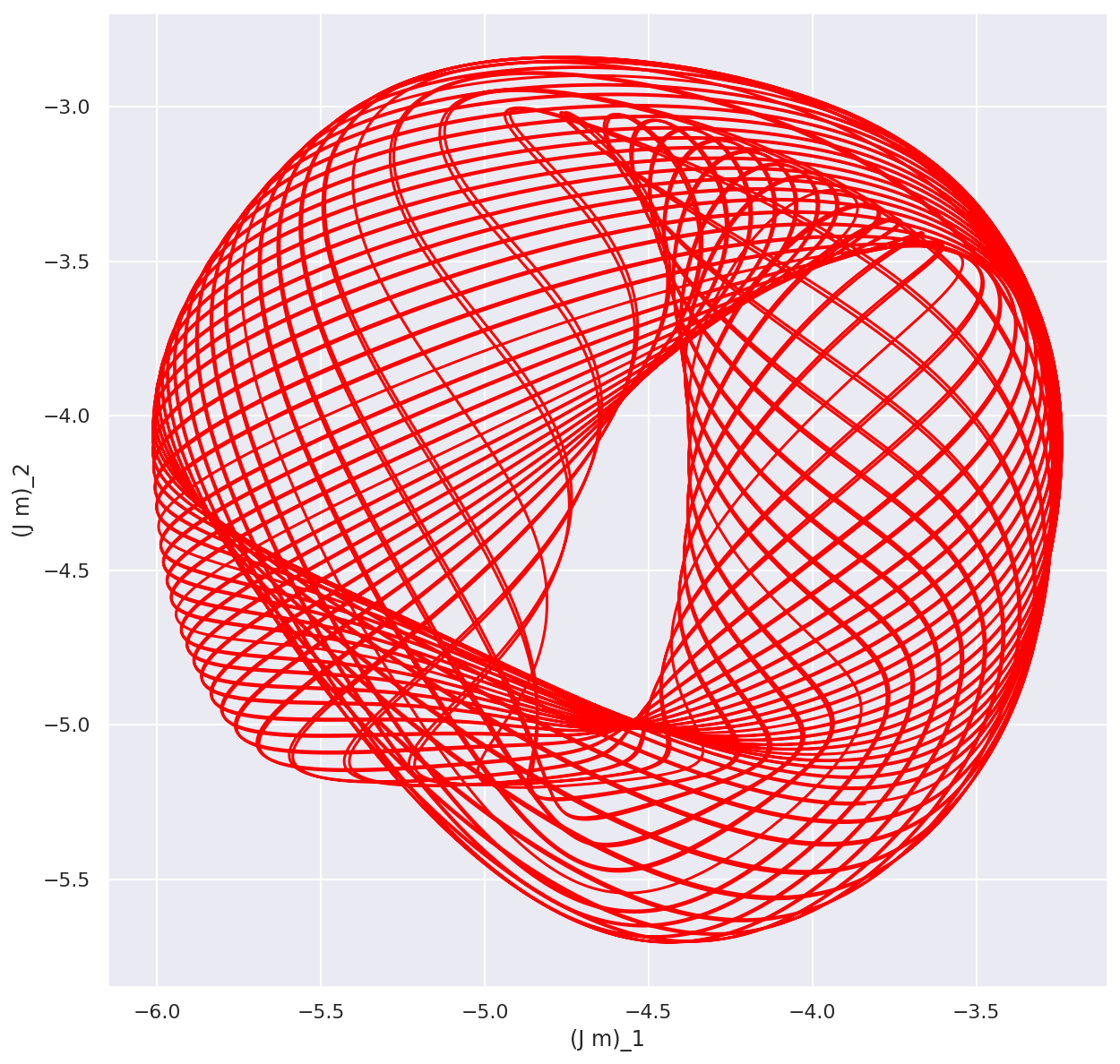}
\hfill
\end{center}
\caption{Quasi-periodic orbit corresponding to the second graph in Figure \ref{fig:repressilator_couplings}. The first graph corresponds to the orbit of multipliers $(m_1(t), m_2(t))$. The second graph is a random projection of the orbit $m(t)$.}
\label{fig:quasiperiodic_2}
\end{figure}

In figures \ref{fig:quasiperiodic_2} and \ref{fig:quasiperiodic_1} we show the orbit of the multipliers obtained from the dynamics of the repressilators described by the graphs in Figure \ref{fig:repressilator_couplings}. In each figure, the first plot corresponds to the path $(m_1(t), m_2(t))$ (i.e., the multipliers of the first two bidders) and the second graph is a path of $Jm(t)$ where $J$ is a random $2 \times 9$ matrix with i.i.d. standard Gaussian entries. Therefore, $J m(t)$ is a random projection of the $9$-dimensional orbit $m(t)$ to $2$ dimensions.

\subsection{Impact of the Auction Format}

In the last part of our empirical evaluation, we study the impact of the auction format on the dynamics. So far we have assumed that the underlying auction is a second price auction. Here we simulate the effect of changing the auction to a combination of a first and second price auction. We will consider a family of auctions parameterized by a parameter $\lambda \in [0,1]$ where the winner $i$ pays $\lambda \max_{s \neq i} b_{sj} + (1-\lambda) b_{ij}$. For $\lambda=1$ we have a pure second price auction and for $\lambda=0$ a pure first price auction. The rest of the model is kept unchanged. The bidders still use uniform bid-scaling and update multipliers according to $\frac{d m_i}{dt} = U_i(m)$.

For a particular dynamics (two coupled $3$-cycle repressilators), we plot in Figure \ref{fig:second_vs_first_price} for each value of $\lambda$ (in the x-axis) the range of values of the first multiplier $m_1$ after a long enough run of the dynamics (in the y-axis). For larger values of $\lambda$, the dynamics doesn't converge but we observe that the orbit gets more and more compressed the smaller $\lambda$ gets (see three last plots of Figure \ref{fig:orbits_first_second_price}). There is a phase transition around $\lambda=0.85$ where the dynamics converges. As the auction gets closer to a first price auction ($\lambda \rightarrow 0$), the equilibrium multipliers tend to $1$ as expected, reflecting the fact that in a first price auction with uniform bids scaling the dynamics converge regardless of the structure of the market. See the Appendix \ref{appendix:proofs_qualitative} for the proof of the following lemma, which is a dynamic version of a observation in \citet{deng2021towards} showing that a uniform bidding game in a first price auction with ROS maximization has an unique equilibrium that is efficient.

\begin{lemma}\label{lemma:first_price_dynamics} Under a first price auction in the smooth limit, the dynamics of an ROS-system converges the vector of multipliers $m_i = 1.0$ regardless of the market structure.
\end{lemma}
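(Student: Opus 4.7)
}

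The plan is to exploit the fact that in a first price auction the payment simplifies dramatically under uniform bid-scaling, which makes $U_i$ factor into a form that forces the dynamics to be monotone. Concretely, with $b_{ij} = m_i v_{ij}$, each winner pays their bid, so
\[
p_{ij}(m) = m_i v_{ij} \, x_{ij}(m),
\qquad\text{hence}\qquad
U_i(m) \;=\; \sum_j v_{ij} x_{ij}(m) - p_{ij}(m) \;=\; (1 - m_i)\, V_i(m),
\]
where $V_i(m) := \sum_j v_{ij} x_{ij}(m)$ is the (nonnegative) allocation value of bidder $i$. In the smooth limit $V_i$ is continuous, and $V_i(m) > 0$ whenever bidder $i$ has a positive-density value distribution on at least one item, since every item is won with positive probability and yields positive expected value.

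Next I would combine this with Lemma~\ref{lemma:invariance_above_1}: the orbit is confined to $[1,\infty)^n$, so $1 - m_i \leq 0$ and therefore $\frac{dm_i}{dt} = (1-m_i)V_i(m) \leq 0$ coordinatewise. Each coordinate $m_i(t)$ is monotone non-increasing and bounded below by $1$, so $m_i(t) \to m_i^\star \geq 1$. In particular the entire trajectory stays in the compact box $[1, m_1(0)] \times \cdots \times [1, m_n(0)]$, on which $V_i$ attains a strictly positive minimum $V_i^{\min} > 0$.

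To conclude I would rule out $m_i^\star > 1$ by a standard Lyapunov / trapping argument. Using $L(m) = \sum_i (m_i - 1)^2$ one gets
\[
\frac{dL}{dt} \;=\; 2 \sum_i (m_i - 1)(1 - m_i) V_i(m) \;=\; -2 \sum_i (m_i - 1)^2 V_i(m) \;\leq\; 0,
\]
with equality only at $m = \mathbf{1}$; LaSalle's invariance principle, applied on the compact positively invariant box above, then forces every $\omega$-limit point to satisfy $m_i = 1$ for every $i$. Equivalently, if some $m_i^\star > 1$, then on the compact box $\frac{dm_i}{dt} \leq (1 - m_i^\star) V_i^{\min} < 0$ eventually, driving $m_i$ below any fixed threshold in finite time and contradicting $m_i \geq 1$.

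I do not expect any real obstacle: the only technical subtlety is verifying $V_i > 0$ on the relevant compact set, which follows directly from the smooth-limit assumption (positive $C^1$ density on $[0,M]^{nk}$). The proof is essentially a restatement of the static observation of \citet{deng2021towards} that first-price uniform bidding with ROS has a unique efficient equilibrium at $m_i = 1$, turned into a dynamical convergence statement via monotonicity along the flow.
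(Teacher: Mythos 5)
Your proof is correct and uses the same key observation as the paper: under first price with uniform bid-scaling, $\hat p_{ij} = b_{ij}\hat x_{ij}$, so $U_i$ factors as $(1-m_i)V_i(m)$ and the sign of $\dot m_i$ is pinned down by the sign of $1-m_i$ alone. The paper stops essentially at that point and asserts convergence; you additionally supply the Lyapunov/LaSalle (or equivalently, trapping-on-a-compact-box) argument that turns the pointwise sign condition into actual convergence to $m=\mathbf{1}$, which is a useful bit of rigor that the paper glosses over, but not a different route.
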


At first glance it seems remarkable that the behavior of autobidding dynamics under first price auctions is much simpler (convergence to multiplier $1.0$ regardless of the market topology) as compared to second price auctions that can exhibit all types of complex behavior. This is in part due to the choice of uniform bid-scaling as a bidding strategy, which is the optimal bidding strategy under second price (in the smooth limit) but is not guaranteed to be optimal under first price auctions. Hence the convergence result of autobidding dynamics in first price auctions is under a reasonable but potentially suboptimal bidding strategy.

\begin{figure}[h]
\begin{center}
\includegraphics[width=\textwidth]{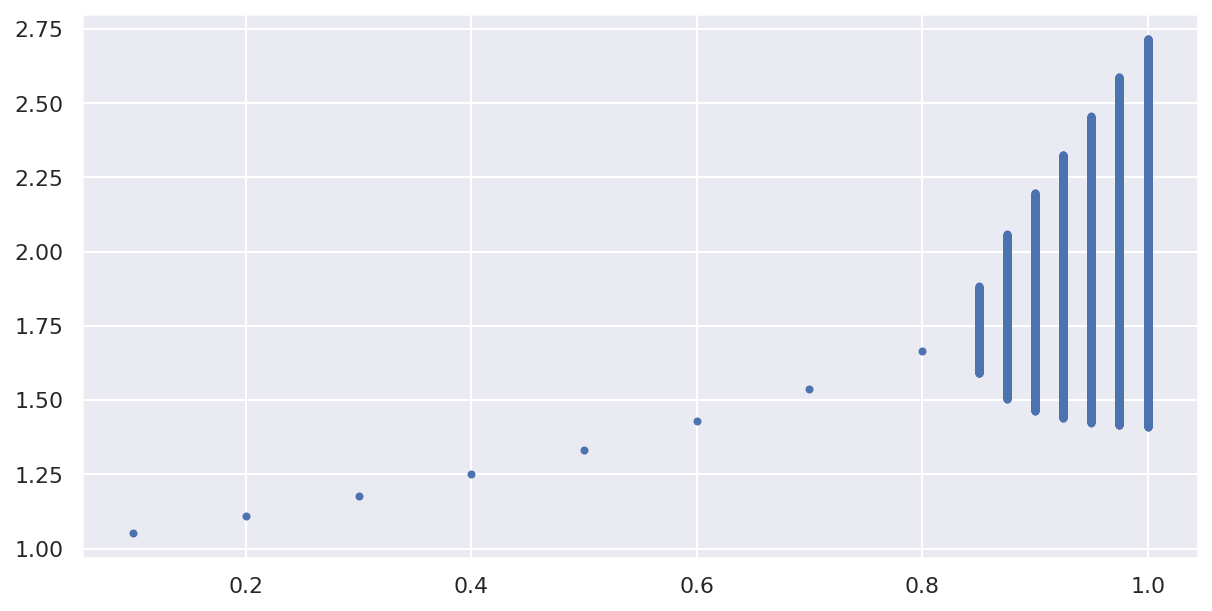}
\end{center}
\caption{Change in limit behavior as a function of the auction format. The x-axis corresponds to parameter $\lambda$ indicating that the auction run is $\lambda$ SPA + $(1- \lambda)$ FPA. The $y$-axis corresponds to the values of the multiplier $m_1$ observed in the limit. For $\lambda > 0.85$ the dynamics doesn't converge and the range of values correspond to the values of $m_1$ in the limit. For smaller values of $\lambda$ the dynamics converges which }
\label{fig:second_vs_first_price}
\end{figure}

\begin{figure}[h]
\begin{center}
\includegraphics[width=0.24\textwidth]{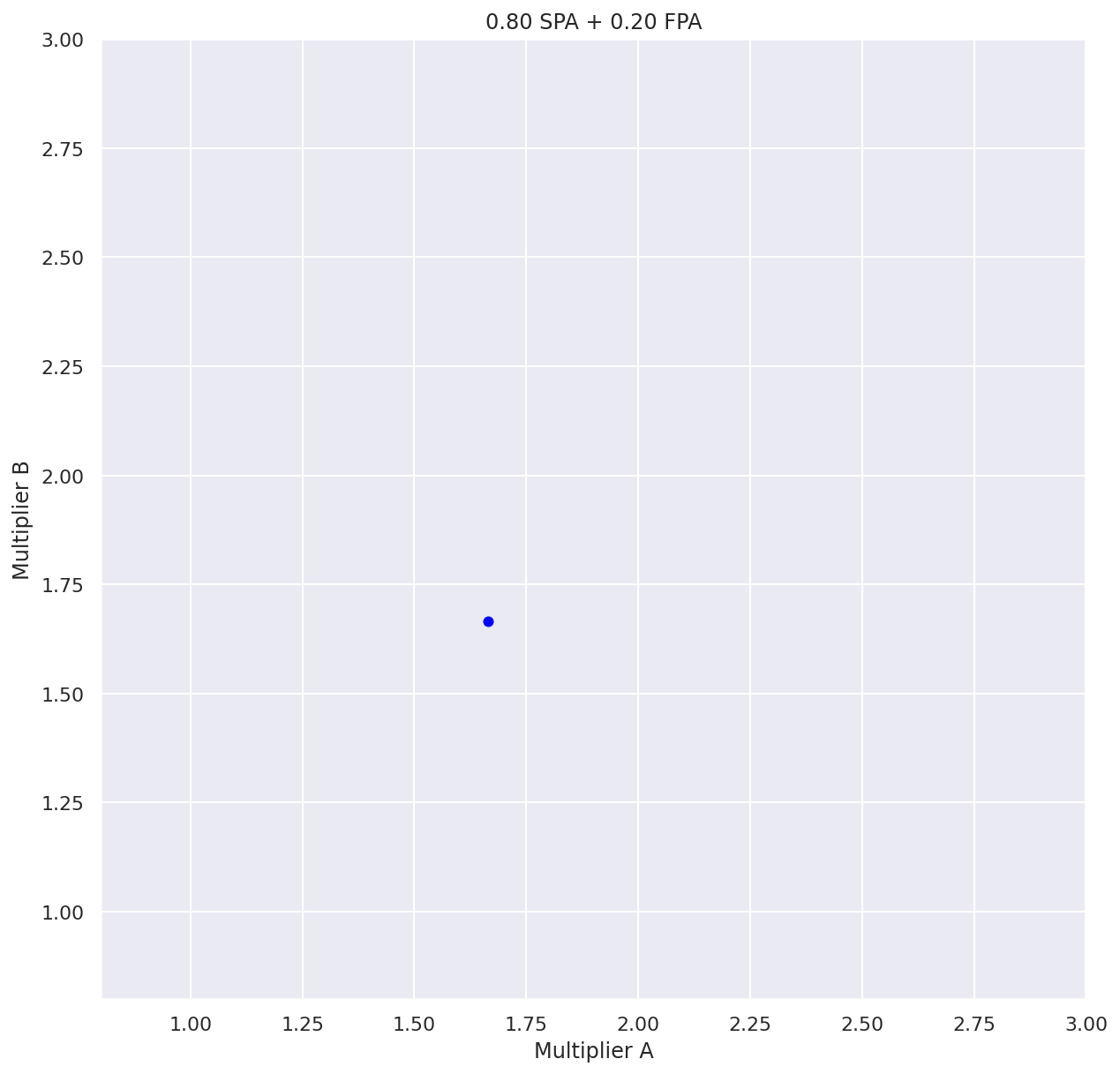}
\includegraphics[width=0.24\textwidth]{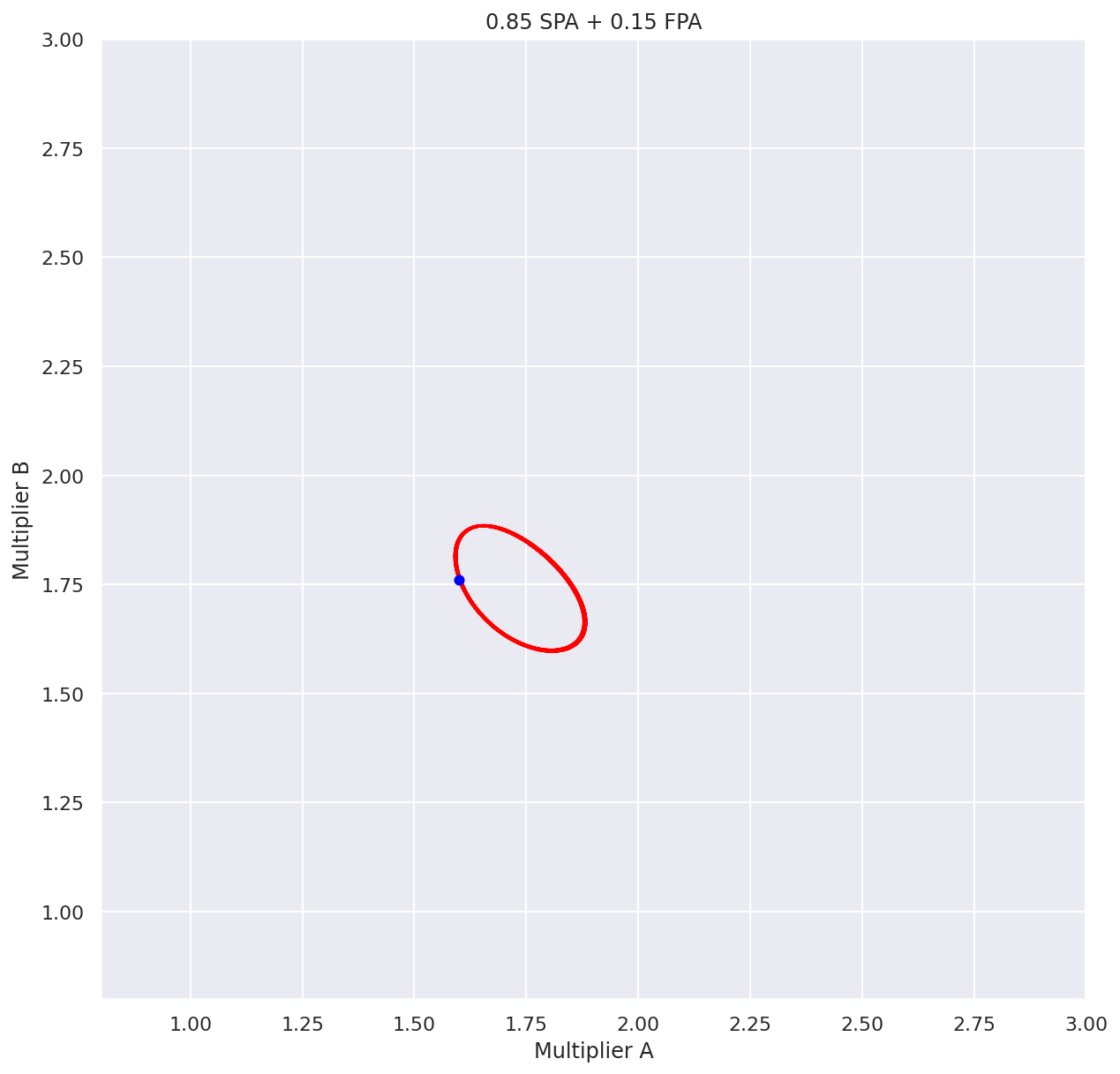}
\includegraphics[width=0.24\textwidth]{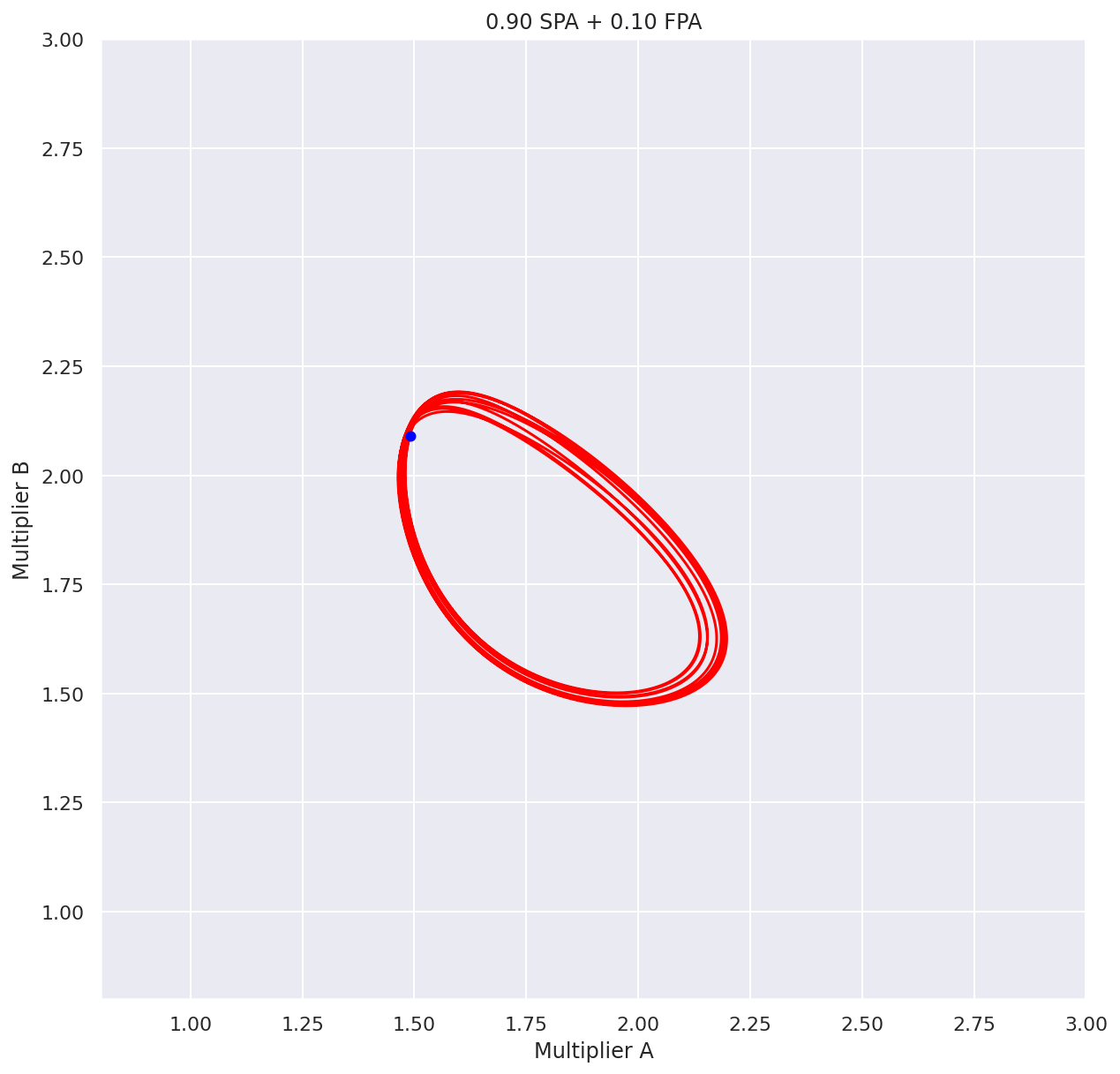}
\includegraphics[width=0.24\textwidth]{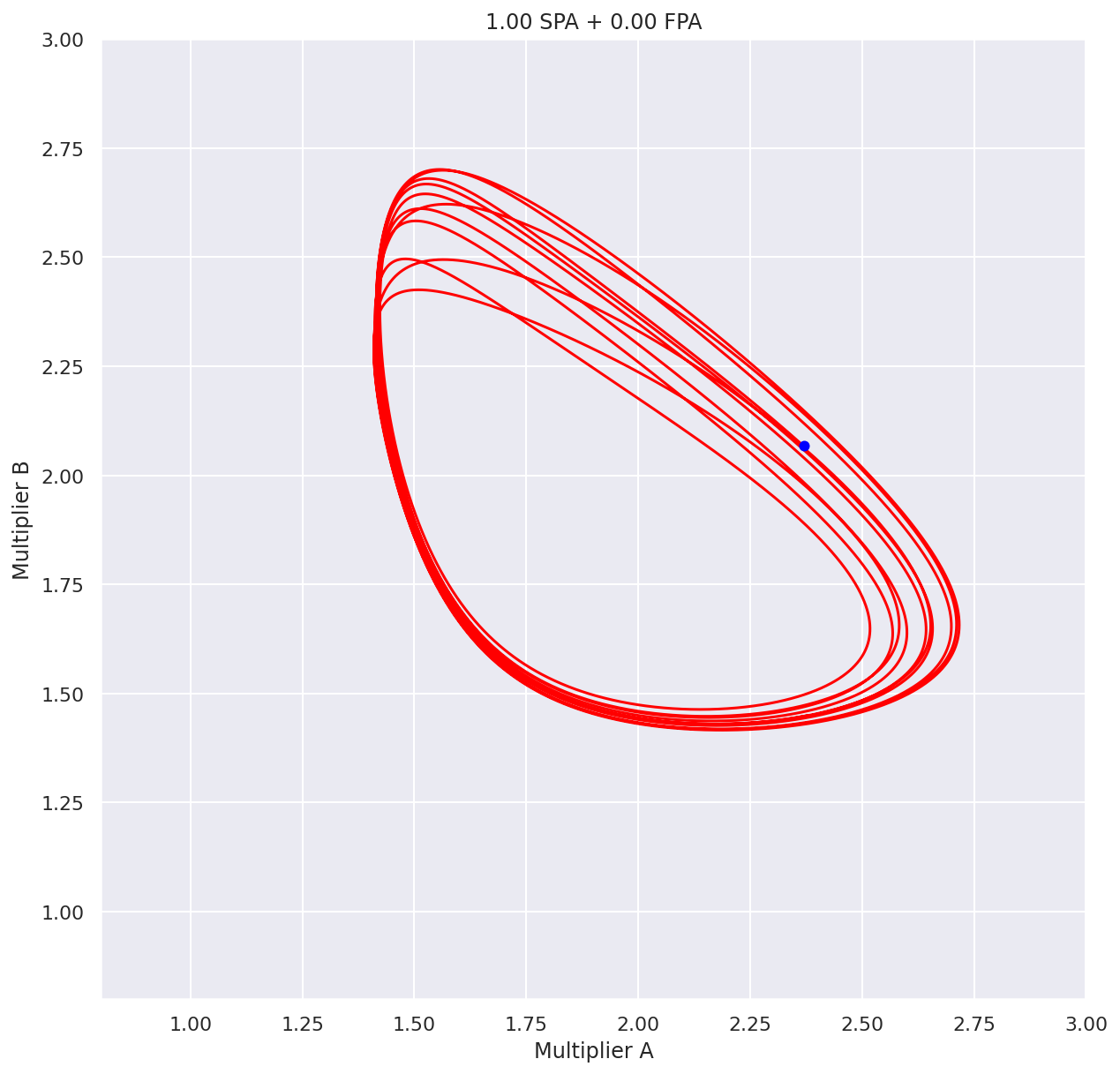}
\end{center}
\caption{Orbits of the ROS system when the auction is $\lambda$ SPA + $(1- \lambda)$ FPA for $\lambda \in \{0.8, 0.85, 0.9, 1.0\}$.}
\label{fig:orbits_first_second_price}
\end{figure}

\section{Simulating Linear Dynamical Systems}\label{sec:linear_system}

Thus far, we have empirically demonstrated the emergence of complex behavior in ROS systems by simulating those systems on networks constructed by coupling motifs. In particular, we have showed that ROS dynamics do not generally converge beyond $2$ bidders, and that they can exhibit complex dynamic behavior such as quasi-periodicity.

In this and the subsequent section we will change our methodology and formally prove some properties of the dynamics. Our first main result is to show that ROS systems can simulate the behavior of any linear dynamical system.

A linear dynamical system is a system of the type $\frac{d}{dt} x = Ax$ for $x \in \R^n$ and a $n \times n$ matrix $A$. Linear dynamical systems are very well understood dynamical systems and exhibit both periodic orbits and quasi-periodic behavior. By formally showing that ROS systems can simulate dynamical systems, we will formally show that ROS systems can exhibit such behaviors.

First, we define what it means for a system to simulate another. Given a differential equation $\frac{d}{dt} x = f(x)$ and a solution $x: [0,T] \rightarrow \R^n$, we say that this orbit can be simulated by a system $\frac{d}{dt} y = g(y)$ on $m$ variables if there is an \emph{affine map} $h:\R^n \rightarrow \R^m$ such that $y(t) = h(x(t))$ is a solution to $\frac{d}{dt} y = g(y)$. Recall that an affine map is a map of the type $h(x) = Bx + c$ for a matrix $B$ and a vector $c$. With this definition we can state our main result as follows:

\begin{theorem}\label{thm:simulating_linear_systems}
Every solution $x:[0,T] \rightarrow \R^n$ of a linear dynamical system  $\frac{d}{dt} x = Ax$ can be simulated by an ROS system.
\end{theorem}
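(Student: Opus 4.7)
The strategy is to factor the proof through a linear-algebra normal form and then realize the normal form via an explicit ROS construction. I would first establish the advertised linear-algebraic fact: given any real matrix $A \in \R^{n \times n}$, there exist a dimension $m \geq n$, a \emph{purely competitive} matrix $\tilde A \in \R^{m \times m}$ (meaning $\tilde A_{ij} \leq 0$ for all $i \neq j$), and a linear injection $B : \R^n \to \R^m$ such that $BA = \tilde A B$. This intertwining relation implies that whenever $x(t)$ solves $\dot x = Ax$, the path $y(t) = Bx(t)$ solves $\dot y = \tilde A y$. The proof proceeds block-by-block using the real Jordan decomposition: real Jordan blocks already have the right sign pattern after a diagonal sign change, while a $2\times 2$ rotation block arising from complex eigenvalues can be embedded into a higher-dimensional purely-competitive block by introducing auxiliary coordinates that absorb the positive entries of the rotation matrix.

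Second, I would realize $\dot y = \tilde A (y - y^*)$ as an ROS system near a chosen equilibrium $y^*$. The basic observation is that in a region where the combinatorial allocation is fixed (i.e., who wins each item does not change), an ROS utility function $U_i(m)$ is smooth and has purely-competitive Jacobian, because increasing a competitor's multiplier only raises prices for items $i$ wins. Concretely, for each non-zero off-diagonal entry $\tilde A_{ij}$ I would introduce a gadget item contested only by bidders $i$ and $j$, with value distributions chosen so that $\partial U_i / \partial m_j = \tilde A_{ij}$ at $y^*$; diagonal entries $\tilde A_{ii}$ are then tuned by aggregating the contributions from every item in which bidder $i$ participates. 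Composing $B$ with the translation $y \mapsto y + y^*$ yields the required affine map $h(x) = Bx + y^*$.

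The principal obstacle will be ensuring \emph{exact}, rather than merely linearized, simulation of the orbit. Generic auctions produce $U_i$ that are analytic but not affine, so a naive Taylor expansion leaves an $O(\|y - y^*\|^2)$ residual inconsistent with the definition of simulation. To overcome this I would (i) choose value distributions whose $U_i$ are genuinely low-degree polynomials on the relevant region (for example, distributions producing polynomial win-probability functions whose higher-order terms cancel across the gadgets), and (ii) exploit the intertwining $BA = \tilde A B$ to force $h(x(t))$ to lie in the image of $B$, arranged to be an invariant affine subspace of the ROS dynamics on which the system collapses exactly to the linear equation $\dot y = \tilde A (y - y^*)$. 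Verifying invariance and then rescaling the value distributions so that the invariant region contains the compact orbit $\{h(x(t)) : t \in [0,T]\}$ is the technical crux of the argument.
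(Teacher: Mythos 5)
Your overall architecture matches the paper's: pass through a purely-competitive matrix $\tilde A$ via an intertwining $BA = \tilde A B$ (Lemma~\ref{lemma:competitive_matrix} in the paper), then realize the purely-competitive system as ROS dynamics (Lemma~\ref{lemma:competitive_to_ros}). But the details diverge in both steps, and step two contains a genuine gap that you partly notice but do not resolve the way the paper does.

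On the linear-algebra step, you propose the real Jordan decomposition and claim that a real Jordan block ``already has the right sign pattern after a diagonal sign change.'' Conjugating $\lambda I + N$ by $\mathrm{diag}(1,-1,1,\dots)$ does flip the superdiagonal to $-1$, but it leaves the diagonal at $\lambda$, and the paper's notion of \emph{purely} competitive requires $A_{ii}=0$ in addition to $A_{ij}\leq 0$ off the diagonal. Your sketch does not explain how to zero out the diagonal while preserving the Jordan block in the spectrum of the larger matrix; this is precisely what the paper's route handles. The paper works with the complex Jordan form and proves, via explicit $4\times 4$ nonnegative matrices and Kronecker products (Lemmas~\ref{lem:nonnegative-matrix-eigenvalue}--\ref{lem:purely-competitive-jordan}), that any multiset of Jordan blocks can be realized inside a nonnegative matrix $M$, and then tensors with $S = \begin{bmatrix}0&-1\\-1&0\end{bmatrix}$ to get a purely competitive $B = M\otimes S$; the tensoring with a zero-diagonal $S$ is exactly what zeroes the diagonal.

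On the realization step, you correctly flag that the definition of simulation demands equality of orbits, not merely matching linearizations, and that a generic smooth $U_i$ will leave an $O(\|y-y^*\|^2)$ residual. But the fix you propose --- choosing value distributions whose $U_i$ are low-degree polynomials whose higher-order terms cancel, plus an invariant-subspace argument --- is speculative and substantially harder than what is needed. The key observation you are missing is that the second-price auction with \emph{discrete} (point-mass) valuations has \emph{piecewise-linear} utility: on any region of multiplier space where the allocation does not change, $U_i(m)$ is \emph{exactly} affine in $m$, with no higher-order terms at all. The paper exploits this: it rescales the orbit into $(1.1,1.9)^n$, adds two auxiliary bidders pinned at multiplier $2$, and builds gadget items (one per constant term, one per off-diagonal entry $A_{ik}$) so that inside $(1.1,1.9)^n\times(2-\epsilon,2+\epsilon)^2$ the allocation is fixed, each auxiliary bidder has $U=0$, and each variable bidder has $U_i$ \emph{identically} equal to $(A(m_{1..n}-b))_i$. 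Exactness then comes for free from uniqueness of solutions, with no Taylor expansion, no special distributions, and no extra invariance argument required.
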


The proof will be done in two steps. First we will show that every linear system can be simulated by a competitive linear system and then we will show that every competitive linear system can be simulated by a ROS system. Here, we say that an $n \times n$ matrix $A$ is \emph{competitive} if $A_{ij} \leq 0$ for $i \neq j$ (and it is \emph{purely competitive} if it is competitive and $A_{ii} = 0$). A linear dynamical system $\frac{d}{dt} x = Ax$ is (purely-)competitive if the matrix $A$ is (purely-)competitive.

\subsection{Competitive linear systems}

In this section, we will show that every linear dynamical system can be simulated by a purely competitive linear system. Our main tool will be the following lemma, essentially showing that given any matrix $A$, we can implement it as a submatrix of a purely competitive matrix $B$.

\begin{lemma}\label{lemma:competitive_matrix}
For any $n \times n$ matrix $A$, there is an $m \times m$ purely competitive matrix $B$ and an injective linear transformation $T:\R^n \rightarrow \R^m$ such that $TA = BT$.
\end{lemma}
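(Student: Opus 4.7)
The plan is to reduce the lemma to embedding individual real Jordan blocks of $A$ and then combine the solutions via direct sums. Writing $A = PJP^{-1}$ with $J$ in real Jordan canonical form, if we can find an injective $T_J$ and a purely competitive $B$ with $T_J J = B T_J$, then $T := T_J P^{-1}$ satisfies $TA = BT$ (injectivity is preserved because $P^{-1}$ is invertible). Since $J$ decomposes as a direct sum of indecomposable real Jordan blocks $J_1 \oplus \cdots \oplus J_r$ and a block-diagonal concatenation of purely competitive matrices remains purely competitive (zero diagonal and nonpositive off-diagonal are both preserved), it suffices to solve the problem for each block $J_i$ separately and then take the direct sum of the resulting $T_i$'s and $B_i$'s.

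For a real-eigenvalue Jordan block $J_k(\lambda) = \lambda I_k + N_k$ (with $N_k$ the super-diagonal nilpotent), I would use a doubling construction of size $m = 2k$. When $\lambda \geq 0$, set
\[
T = \begin{pmatrix} I_k \\ -I_k \end{pmatrix}, \qquad B = \begin{pmatrix} 0 & -J_k(\lambda) \\ -J_k(\lambda) & 0 \end{pmatrix}.
\]
A direct computation gives $BT = \bigl(\begin{smallmatrix} J_k(\lambda) \\ -J_k(\lambda) \end{smallmatrix}\bigr) = T J_k(\lambda)$, and $B$ is purely competitive because its diagonal vanishes (from the zero blocks) while its only other nonzero entries are those of $-J_k(\lambda)$, namely $-\lambda \leq 0$ on the diagonal contribution and $-1 < 0$ on the super-diagonal contribution. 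When $\lambda < 0$, I would first conjugate $J_k(\lambda)$ by the diagonal sign-flipper $D = \text{diag}(1,-1,1,-1,\ldots)$ to obtain $D^{-1} J_k(\lambda) D = \lambda I - N_k$, whose entries are now all nonpositive; then apply an analogous doubling with $T' = \bigl(\begin{smallmatrix} I_k \\ I_k \end{smallmatrix}\bigr)$ and $B = \bigl(\begin{smallmatrix} 0 & \lambda I - N_k \\ \lambda I - N_k & 0 \end{smallmatrix}\bigr)$, and set $T := T' D^{-1}$.

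For a complex-eigenvalue Jordan block, whose real form consists of $2 \times 2$ rotation blocks $C = \alpha I_2 + \beta J_2$ (with $J_2 = \bigl(\begin{smallmatrix} 0 & -1 \\ 1 & 0 \end{smallmatrix}\bigr)$) on the diagonal and $I_2$ on the super-diagonal, I would use a variant of the same doubling. The matching equations for $T = (I, \epsilon I)^\top$ are $B_{11} + \epsilon B_{12} = C$ and $B_{21} + \epsilon B_{22} = \epsilon C$, which leave enough freedom to realize each entry of $C$ as a combination of nonpositive entries of $B$: the diagonal entry $\alpha$ forces $(B_{12})_{ii} = \alpha/\epsilon$, so one chooses $\epsilon$ to match the sign of $\alpha$, and the off-diagonal rotation entries $\pm \beta$ can then be written as differences of two nonpositive numbers (when $\epsilon = -1$) or as sums (when $\epsilon = +1$ and both have the appropriate sign).

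The main obstacle is the case of complex eigenvalues with negative real part $\alpha < 0$ and nonzero $\beta$: the diagonal constraint dictates $\epsilon = +1$, but then the rotation entries $+\beta$ and $-\beta$ must both be realized as sums of nonpositives, which is impossible unless $\beta = 0$. The resolution is to use a larger embedding (taking $m = 4k$ or more) that decouples the scalar part $\alpha I_2$ from the rotational part $\beta J_2$ in each block $C$, handling each with its own doubling sign choice; equivalently, one appends additional rows and columns to $B$ arranged in a cyclic pattern of nonpositive weights whose spectrum already contains $\alpha + i\beta$ as an eigenvalue, and realizes $C$ as the restriction of $B$ to the corresponding $2$-dimensional invariant subspace via an appropriate choice of $T$. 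Combining the block-wise embeddings via direct sum then completes the construction.
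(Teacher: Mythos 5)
Your approach is genuinely different from the paper's. The paper proceeds abstractly: it first shows (their Lemma~\ref{lem:nonnegative-matrix-eigenvalue}) that every $\lambda \in \C$ is an eigenvalue of an explicit $4\times 4$ nonnegative matrix, then (Lemma~\ref{lem:nonnegative-matrix-jordan}) that every Jordan block $J_d(\lambda)$ appears in the Jordan decomposition of some nonnegative matrix via a Kronecker product with the all-ones Jordan block $J_d(1)$, and finally passes from nonnegative to purely competitive by tensoring with $S = \left(\begin{smallmatrix}0&-1\\-1&0\end{smallmatrix}\right)$ and extracting $T$ from the two Jordan-normal-form change-of-basis matrices. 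You instead build $T$ and $B$ directly, block by block in the real Jordan form. Your doubling $T = (I_k, -I_k)^\top$, $B = \left(\begin{smallmatrix}0&-J_k(\lambda)\\-J_k(\lambda)&0\end{smallmatrix}\right)$ for $\lambda \geq 0$, the sign-conjugation $D^{-1}J_k(\lambda)D = \lambda I - N_k$ for $\lambda < 0$, and the $\epsilon = -1$ doubling for complex eigenvalues with $\alpha \geq 0$ (where each entry of $C$ can be realized as a difference of two nonpositive entries of $B_{11}$ and $B_{12}$) are all correct and more explicit than the paper's construction. These parts of the argument are a nice alternative.

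However, the case $\alpha < 0$, $\beta \neq 0$ is a genuine gap that your proposed resolution does not close. You correctly diagnose the obstruction: the diagonal constraint forces $\epsilon = +1$, which can only realize nonpositive off-diagonal entries, but the rotation block carries $\beta$ and $-\beta$. The proposed fix is too vague to verify. The ``decouple $\alpha I_2$ from $\beta J_2$'' idea does not obviously help: the real Jordan block for a complex eigenvalue of multiplicity $d$ is $I_d \otimes C + N_d \otimes I_2$, and there is no clean way to split $C = \alpha I_2 + \beta J_2$ across different doubling copies while preserving the superdiagonal coupling. Your alternative phrasing --- appending rows in a ``cyclic pattern of nonpositive weights whose spectrum contains $\alpha + i\beta$'' --- is essentially a restatement of what the paper's Lemma~\ref{lem:nonnegative-matrix-eigenvalue} accomplishes with its $4\times 4$ circulant-like matrix, but you do not carry the construction out, and more importantly you do not address how to obtain a Jordan \emph{block} of multiplicity $d>1$ for such an eigenvalue rather than merely the eigenvalue itself. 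The paper handles this with a separate Kronecker-product lemma (Lemma~\ref{lem:nonnegative-matrix-jordan}), and some analogue of that step is necessary here. To complete your argument you would need an explicit purely competitive $B$ and injection $T$ realizing the full $2d \times 2d$ real Jordan block for $\alpha + i\beta$ with $\alpha < 0$, which you do not supply.
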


We prove Lemma \ref{lemma:competitive_matrix} in Appendix \ref{appendix:competitive_matrix} by establishing sublemmas regarding the Jordan decomposition of non-negative matrices.  Lemma \ref{lem:nonnegative-matrix-eigenvalue} shows that every complex number can appear as the eigenvalue of a non-negative matrix. The in Lemma \ref{lem:nonnegative-matrix-jordan} we show that every possible Jordan block appears in the Jordan decomposition of some non-negative matrix. With those pieces, we can then establish the following:

\begin{lemma}\label{lemma:competitive_system}
Every solution $x:[0,T] \rightarrow \R^n$ of a linear dynamical system  $\frac{d}{dt} x = Ax$ can be simulated by a purely competitive linear system.
\end{lemma}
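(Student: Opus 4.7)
The plan is to derive this lemma as an essentially immediate corollary of Lemma~\ref{lemma:competitive_matrix}, since all of the genuinely hard work (showing that any matrix $A$ can be embedded into a purely competitive matrix $B$ via an intertwining relation $TA=BT$) is already packaged into that statement. The key observation is that the definition of simulation only demands an affine change of coordinates $h\colon\R^n\to\R^m$ such that $y(t):=h(x(t))$ solves the target system; so to prove the lemma it suffices to produce such an $h$ from the data provided by Lemma~\ref{lemma:competitive_matrix}.

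Concretely, I would first apply Lemma~\ref{lemma:competitive_matrix} to the given matrix $A$ to obtain an integer $m\ge n$, a purely competitive $m\times m$ matrix $B$, and an injective linear map $T\colon\R^n\to\R^m$ satisfying $TA=BT$. I then take the affine map $h(x):=Tx$ (with the translation part set to zero) and define $y(t):=T x(t)$. Differentiating and using the intertwining relation gives
\begin{equation*}
\tfrac{d}{dt}y(t) \;=\; T\,\tfrac{d}{dt}x(t) \;=\; TA\,x(t) \;=\; BT\,x(t) \;=\; B\,y(t),
\end{equation*}
so $y(t)$ is a bona fide solution of the purely competitive linear system $\dot y = By$. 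Together with the injectivity of $T$ (which guarantees that $y(t)$ genuinely records $x(t)$), this shows that $x$ is simulated by a purely competitive linear system in the sense of the definition preceding Theorem~\ref{thm:simulating_linear_systems}.

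There is essentially no obstacle beyond invoking Lemma~\ref{lemma:competitive_matrix}; the only thing worth double-checking is that the definition of ``simulation'' used in the paper allows the simulating system to live in a higher-dimensional ambient space $\R^m$ with $m\ge n$ (as it explicitly does), and that the linear map $h(x)=Tx$ qualifies as affine (which it trivially does with zero translation). Thus the proof reduces to a one-line computation once Lemma~\ref{lemma:competitive_matrix} is in hand, and the real mathematical content lies in the Jordan-decomposition sublemmas (Lemmas~\ref{lem:nonnegative-matrix-eigenvalue} and~\ref{lem:nonnegative-matrix-jordan} in the appendix) that power Lemma~\ref{lemma:competitive_matrix} itself.
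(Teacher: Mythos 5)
Your proof is correct and matches the paper's own argument essentially line for line: both apply Lemma~\ref{lemma:competitive_matrix}, set $h(x)=Tx$, and verify $\frac{d}{dt}(Tx) = TAx = BTx$ so that $y=Tx$ solves $\dot y = By$. The additional remarks on injectivity and on the affine-map definition are sensible sanity checks but do not change the argument.
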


\begin{proof}
For the matrices $B$ and $T$ in Lemma \ref{lemma:competitive_matrix},
consider the purely-competitive linear system $\frac{d}{dt} y = By$ as well as the transformation $h(x) = Tx$. Now, if $x(t)$ is a solution to $\frac{d}{dt} x = Ax$  then define $y(t) = Tx(t)$ and observe that:
$y'(t) = T x'(t) = TA x(t) = BT x(t) = B y(t)$
hence $y(t)$ is a solution to $\frac{d}{dt} y = By$.
\end{proof}

\subsection{From competitive linear systems to ROS systems}

We now show how to construct any purely-competitive linear system within the ROS dynamics.

\begin{lemma}\label{lemma:competitive_to_ros}
Every solution $x(t):[0,T] \rightarrow \R^n$ of a purely-competitive linear dynamical system  $\frac{d}{dt} x = Ax$ can be simulated by an ROS system.
\end{lemma}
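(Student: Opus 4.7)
The plan is to exhibit a concrete ROS auction instance whose dynamics agree with $\dot y = Ay$ under the affine map $h(y) = y + d$ for a well-chosen shift vector $d \in \R^n$. The underlying idea is that in a second-price auction the winner's payment is \emph{linear} in the runner-up's bid; so if I can build a gadget in which bidder $i$ always wins an item and bidder $j$ is always the unique runner-up, then the utility contribution from that item is $v_{ik} - m_j v_{jk}$, linear in $m_j$ with a slope $-v_{jk}$ that I can tune freely to any value in $(-\infty, 0]$. Off-diagonal entries of a purely-competitive matrix have exactly the right sign to be realized this way, and the diagonal condition $A_{ii}=0$ matches the fact that in the gadget $U_i$ does not depend on $m_i$ at all (bidder $i$ wins for any value of their own multiplier).

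Concretely, fix $M' := \sup_{t \in [0,T],\, i} |y_i(t)|$ and set $d_i := 2(M'+1)$ for all $i$; then $m(t) := y(t) + d$ satisfies $m_i(t) \geq M' + 2 \geq 1$, so the candidate trajectory lies in $[1,\infty)^n$ as required by Lemma~\ref{lemma:invariance_above_1}. For each ordered pair $(i,j)$ with $i \ne j$ and $A_{ij} \ne 0$, introduce a gadget item $k(i,j)$ with
$v_{i,k(i,j)} := (-A_{ij})\, d_j$, $\ v_{j,k(i,j)} := -A_{ij}$, and $v_{s,k(i,j)} := 0$ for $s \notin \{i,j\}$. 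All values are non-negative since $-A_{ij} \geq 0$ by competitiveness. A short check---the pivotal step---shows that bidder $i$ strictly wins $k(i,j)$ iff $m_i d_j > m_j$, and this holds throughout the region $R := \{m : m_i \geq M'+2 \text{ for all } i\}$, which contains the candidate trajectory. In that region bidder $j$ is the unique runner-up (all other bids are $0$), so item $k(i,j)$'s contribution to $U_i$ is exactly $v_{i,k(i,j)} - m_j v_{j,k(i,j)} = A_{ij}(m_j - d_j)$. Summing over $j$ and using $A_{ii} = 0$ gives
$U_i(m) = \sum_{j} A_{ij}(m_j - d_j) = (A(m-d))_i = (Ay)_i$.
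Therefore $\dot m(t) = \dot y(t) = Ay(t) = U(m(t))$ along the curve, so $m(t)$ solves the ROS dynamics and $h(y) = y+d$ is the required affine simulation.

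The main obstacle is maintaining the strict-winning invariant uniformly along the whole trajectory: at any configuration where two bids tie, the allocation (and hence $U$) is discontinuous and the linear formula breaks. This is precisely why $d$ must be taken large relative to $M'$---it creates a strict separation between bidder $i$'s bid $m_i(-A_{ij})d_j$ and bidder $j$'s bid $m_j(-A_{ij})$ on every gadget item simultaneously, for every $t \in [0,T]$. Once that uniform separation is established, the remaining work is the direct linear-algebra identity above; no further appeal to the structure of $A$ beyond the sign pattern $A_{ii}=0$ and $A_{ij} \le 0$ is needed.
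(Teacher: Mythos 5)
Your construction is correct, and it is genuinely simpler than the paper's proof while proving the same result. The paper maps the trajectory into a fixed bounded box $[1.1, 1.9]^n$ via a scale-and-shift $y = b + c\,x$, introduces two auxiliary bidders $A,B$ whose multipliers are pinned at $2$ (they serve as constant price-setters), and builds two separate families of gadget items: one to inject the constant terms coming from the mismatch between the box offset and the auxiliary price $2$, and another for the off-diagonal linear terms. Your version collapses all of this into a single gadget per nonzero off-diagonal entry. The key observation is that a second-price gadget in which bidder $i$ always wins and bidder $j$ is the unique runner-up gives $U_i$ a contribution of the form $v_{i,k(i,j)} - m_j\, v_{j,k(i,j)}$, and by choosing $v_{i,k(i,j)} = (-A_{ij})d_j$ and $v_{j,k(i,j)} = -A_{ij}$ this becomes $A_{ij}(m_j - d_j)$, whose constant part \emph{exactly} cancels the shift $d_j$. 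Because the gadget's built-in constant and the affine shift are one and the same number, no auxiliary ``constant'' bidders are needed, and the resulting system has only $n$ bidders (versus the paper's $n+2$). The only imprecision worth flagging is the phrase that the strict-winning condition $m_i d_j > m_j$ ``holds throughout the region $R := \{m : m_i \geq M'+2 \text{ for all } i\}$'': that region is unbounded above in every coordinate, so the condition can fail there; it does hold on the bounded slab $[M'+2,\, 3M'+2]^n$, which is all the argument needs since the candidate trajectory lives there and $U$ is affine (hence $C^1$) on an open neighborhood of it. With that wording fix the proof is complete.
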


The proof of Lemma \ref{lemma:competitive_to_ros} first maps a trajectory $x(t)$ to a bounded box $[1.1, 1.9]^n$ using an affine transformation. Then it constructs an ROS system with $n+2$ bidders with the first $n$ bidders corresponding to the variables of the original system and  two auxiliary bidders are introduced to have constant bid multipliers throughout the dynamics. The role of those bidders will be to create price pressure on the remaining bidders in order to simulate a constant term in a linear system. Finally, we simulate linear terms in the dynamics by creating items that two bidders are interested in. If the $\frac{d x_i}{ dt}$ has as $-A_{ik} x_{k}$ term, the we create an item $j$ such that $i$ gets the item and bidder $k$ is the price setter.

The full details of the reduction can be found in Section \ref{appendix:competitive_to_ros}. In Section \ref{sec:example_linear_simulation} we give an end-to-end example of the reduction. In that example we provably construct a system such that the first two multipliers trace a circular orbit $m_1(t) = c+a \cos t$ and $m_2(t) = c + a \sin t$ for constants $a$ and $c$.

\section{Simulating Discrete Boolean Circuits And Networks}\label{sec:circuits}

In this section, we show that ROS systems can exhibit a different type of complex behavior: we show that they are able to simulate arbitrary boolean circuits. In fact, we will prove something a little more general and show that ROS systems can represent arbitrary \emph{boolean networks}. A boolean network of size $n$ is a collection of $n$ boolean variables $X_1, X_2, \dots, X_n$, each of which is associated with a constraint $X_i = f_i(X_1, X_2, \dots, X_n)$, for some associated Boolean function $f_i : \{0, 1\}^n \rightarrow \{0, 1\}$. For example, the following system is a boolean network of size 3:

\begin{equation}\label{eq:nor-system}
X = \NOR(Y, Z) \qquad Y = \NOR(Z, X) \qquad Z = \NOR(X, Y)
\end{equation}

An assignment of truth values ($0$ or $1$) to $X_i$ \emph{satisfies} the boolean network if each of the constraints $X_i = f_i(X)$ is satisfied. We will show that we can take any boolean network and produce an ROS system with the property that if the bid multipliers in the ROS system converge to a stable equilibrium, then there must exist a satisfying assignment to the boolean network. Moreover, the system will be set up in such a way that the multipliers of variable-bidders must converge to one of two values (called $\high$ and $\low$) corresponding to the two possible truth values, allowing us to read off the satisfying assignment from the ROS system equilibrium.

For example, given the network in \eqref{eq:nor-system}, we can use our reduction to construct an ROS system with $21$ bidders such that $3$ of these bidders correspond to the variables $X$, $Y$ and $Z$. These bidders are constructed so that the dynamics pulls their multipliers either towards $\high = 3$ or $\low=1.5$. For the $X$-bidder, for example, if both $Y$ and $Z$ are close to $\low$, the dynamics pulls the multipliers towards $\high$; otherwise $X$ is pulled towards $\low$. Note that this simulates the behavior of a $\NOR$ gate. The behavior of these three bidders is shown in Figure~\ref{fig:circuit_a}. The multipliers converge to $(\high, \low, \low)$, which corresponds to the assignment $(X, Y, Z) = (1, 0, 0)$, which is a satisfying assignment to the above network.

\begin{figure}[h]
\begin{center}
\includegraphics[width=\textwidth]{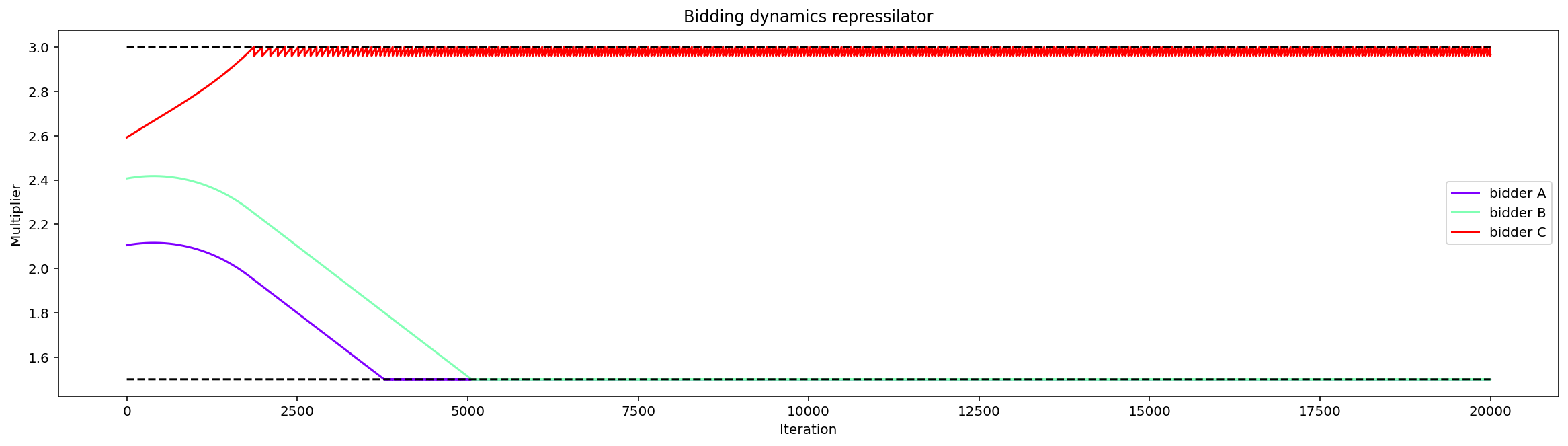}
\end{center}
\caption{Behavior of bid multipliers in an ROS system encoding boolean equations $X = \NOR(Y, Z),Y = \NOR(Z, X),Z = \NOR(X, Y)$.}
\label{fig:circuit_a}
\end{figure}

\begin{theorem}\label{thm:main_circuit}
Given any boolean network $\mathcal{C}$, we can construct an ROS dynamical system $\mathcal{S}$ such that every satisfying assignment of $\mathcal{C}$ corresponds to a unique stable\footnote{Here ``stable'' refers to a strong, coordinate-wise notion of stability (see Discussion in Appendix \ref{subsec:combining-gates}).} equilibrium of $\mathcal{S}$, and vice versa. Moreover, if it is possible to construct the $n$ boolean functions $f_i$ in $\mathcal{C}$ with a total of $G$ $\NOR$ gates, the system $\mathcal{S}$ will have $O(G)$ bidders.
\end{theorem}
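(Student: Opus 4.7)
The plan is to exploit the functional completeness of $\NOR$: every boolean function $f_i$ can be realized by a circuit of $\NOR$ gates, so it suffices to construct a small ROS gadget simulating a single $\NOR$ gate and then to compose these gadgets, one per gate. Each gadget will contribute $O(1)$ bidders and items, giving the claimed $O(G)$ bidder count. The variable bidders $X_1,\dots,X_n$ are identified with the output bidders of the $\NOR$-circuits realizing each $f_i$, and the truth values $0,1$ are encoded by the multiplier values $\low$ and $\high$ respectively.

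For a single $\NOR$ gate with inputs $I_1,I_2$ and output $O$, I would design the gadget so that, treating $m_{I_1},m_{I_2}$ as externally held and taking values in $\{\low,\high\}$, the induced one-dimensional dynamics $\frac{dm_O}{dt}=U_O(m_O;m_{I_1},m_{I_2})$ has its unique zero in $[1,\infty)$ at $\high$ when both inputs are $\low$, and at $\low$ otherwise. The building blocks would be (a) a baseline set of items tying $O$ to a constant-multiplier auxiliary bidder that in isolation drives $m_O$ up to $\high$, and (b) for each input $I_k$ an item (essentially a two-bidder competition as in the bidding repressilator of Section \ref{sec:motifs}) whose value distribution is tuned so that when $m_{I_k}=\high$ the extra price pressure exerted on $O$ pushes the zero of $U_O$ down to $\low$, while when $m_{I_k}=\low$ the pressure is too weak to move the zero away from $\high$. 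The constant-multiplier auxiliaries are realized exactly as in Section \ref{sec:linear_system}, by pairing them with items whose values pin the corresponding $U$ to zero at a prescribed multiplier. Because $U_O(\cdot;m_{I_1},m_{I_2})$ is strictly decreasing on $[1,\infty)$ in the smooth limit by Lemma \ref{lemma:ros:UI}, the zero is unique and locally contracting in the $m_O$ coordinate, giving the strong (coordinate-wise) stability demanded by the theorem.

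To compose gadgets into a full network, I would feed the output bidder of an upstream gadget as an input bidder into every downstream gadget whose $f_i$ depends on that variable. The principal technical concern is fan-out: when a single bidder plays the role of output in one gadget and of input in several others, the items of the downstream gadgets place additional price pressure on it and could shift its own equilibrium away from the intended threshold pattern. I would mitigate this by (i) scaling the value distributions of each fan-out copy geometrically so that the "output-role" items dominate the bidder's utility budget, and (ii) inserting constant-multiplier buffer bidders between the output and each downstream input when needed, so that each gadget's behavior can be analyzed essentially in isolation up to a controllable perturbation. With a careful parameter budget, the zero of $U_{X_i}(\cdot)$ in any equilibrium still lies in a small neighborhood of $\high$ or $\low$, and the correct branch is selected by the inputs.

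The correspondence is then straightforward: at any equilibrium of the composed system, each variable bidder's multiplier solves its gadget's equation, and by the construction above the only multiplier in $[1,\infty)$ solving it is $\high$ (when all inputs are $\low$) or $\low$ (otherwise); reading off these values gives a boolean assignment automatically consistent with every $f_i$, i.e., a satisfying assignment of $\mathcal{C}$. Conversely, each satisfying assignment induces an equilibrium with the prescribed multipliers, and coordinate-wise monotonicity of each $U_i$ from Lemma \ref{lemma:ros:UI} gives local contraction, i.e., stability in the strong coordinate-wise sense. I expect the main obstacle to lie precisely in the gadget tuning: one must choose Beta-like value distributions so that the $\high$/$\low$ threshold behavior is sharp enough to remain correct under every input combination and under the compounded perturbations coming from fan-out across a potentially large network. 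Making all of these inequalities hold simultaneously, while keeping the total bidder count $O(G)$, will require an explicit scale hierarchy between the baseline, input, and fan-out items, and this bookkeeping is where the real work of the proof will be.
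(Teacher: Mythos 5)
Your high-level architecture matches the paper's: reduce to a $\NOR$-complete representation, build one ROS gadget per gate with $O(1)$ bidders, compose, and read off satisfying assignments from the $\high/\low$ equilibrium multipliers. The $\NOR$-completeness step is exactly the paper's Lemma~\ref{lemma:universal-nor}, and the identification of fan-out as the principal obstacle is correct.

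However, there is a genuine gap in how you propose to resolve fan-out, and your gadget choice is what creates it. The paper's $\NOR$ gadget (Appendix~\ref{sec:nor-proof}, Table~\ref{tab:nor-simple}) is designed so that the input bidders $x_i$ \emph{never win any item of the gadget}: condition~\eqref{eq:nor-winl} guarantees the output bidder $y$ always outbids every $x_i$ on the shared item $\sL$, and the $x_i$ have zero value for $\sH$. Consequently the first bullet of Lemma~\ref{lemma:nor-gate} holds literally: the gradient of an input bidder is unaffected by its participation in the gadget, and a bidder that is the output of one gate and an input to arbitrarily many others has dynamics determined solely by its own gate. Fan-out is therefore a non-issue by construction, not by a perturbation budget. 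Your repressilator-style gadget, in contrast, necessarily lets the lower-valued bidder sometimes win and sometimes lose (that is what makes repression probabilistic, and it is exactly the ``reverse repression'' the paper flags in the Caveats of Section~\ref{sec:motifs}). So an input bidder's own utility would genuinely shift as you attach more downstream gadgets, and you would have to control the accumulated error. Your two mitigations do not clearly close this: geometric scaling of fan-out items weakens their price pressure, undermining the very $\NOR$ logic they are supposed to implement downstream; and a ``constant-multiplier buffer'' cannot both hold a fixed multiplier and transmit the boolean value of the output it is buffering. Without an explicit replacement for the zero-gradient-for-inputs property, the composition step is not established.

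A secondary point: appealing to Lemma~\ref{lemma:ros:UI} for strict decrease of $U_O$ in $m_O$ gives one-dimensional contraction with the inputs frozen, but coordinate-wise stability as the paper defines it (Appendix~\ref{subsec:combining-gates}) is a statement about perturbations of \emph{all} coordinates, including the inputs. You need the additional observation that small perturbations keep each $x_i$ on the same side of $\threshold$, so the $\NOR$ logic (and hence the sign of each $U_{x_i}$) is preserved; and you need to argue that any equilibrium with some $\max(x_{i_1},x_{i_2})=\threshold$ is unstable (the paper does this explicitly). These are short arguments, but they are part of the ``vice versa'' direction of the theorem and should be made rather than folded into monotonicity.
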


The main idea behind the proof of Theorem \ref{thm:main_circuit} is to demonstrate that it is possible to construct ROS dynamics that simulate a NOR gate. We give high-level intuition for how to do this later in Section~\ref{subsec:nor-gate} (for a detailed proof, see Appendix~\ref{sec:nor-proof}).


Note that Theorem \ref{thm:main_circuit} does not imply that the ROS system is always guaranteed to converge when there are satisfying assignments (in fact, since the problem of finding a satisfying assignment to a Boolean network is NP-hard, it would be surprising if this were the case). However, in the special case where the Boolean network is \emph{acyclic} (each $f_i$ only depends on the variables from $X_1$ to $X_{i-1}$), it corresponds to the execution of a Boolean circuit, and we can show the associated ROS dynamics are guaranteed to converge to the unique stable equilibrium.

\begin{theorem}\label{thm:circuit-convergence}
If $\mathcal{C}$ is an acyclic boolean network, then for almost all initial conditions, the ROS system $\mathcal{S}$ converges to a stable equilibrium.
\end{theorem}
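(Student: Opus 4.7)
Since $\mathcal{C}$ is acyclic, I topologically sort its variables as $X_1, \dots, X_n$ so that each $f_i$ depends only on $X_1, \dots, X_{i-1}$. The construction in Theorem~\ref{thm:main_circuit} then gives $\mathcal{S}$ a cascade structure: the NOR-gadget that enforces $X_i = f_i(X_1,\dots,X_{i-1})$ reads only the output-bidders $m_{X_1},\dots,m_{X_{i-1}}$ as ``external drivers,'' plus $O(1)$ internal bidders specific to this gadget. My proof is induction on the topological order, showing that for each $i$, the multipliers inside gadget $i$ converge to the designed stable value ($\high$ or $\low$) corresponding to the unique satisfying assignment, for almost every initial condition of that gadget's internal bidders.

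\textbf{Base case.} For $i=1$, the gadget has no external drivers, so its internal dynamics are autonomous on $[1,\infty)^{O(1)}$. By Theorem~\ref{thm:main_circuit} applied to the single-gate network $X_1=f_1()$, every satisfying assignment corresponds to a stable equilibrium of this autonomous subsystem; the remaining equilibria are unstable in the hyperbolic sense built into the gadget, and by the stable manifold theorem the set of initial conditions attracted to any of them lies on a finite union of embedded submanifolds of strictly lower dimension, hence of Lebesgue measure zero. All other orbits converge to one of the stable equilibria, and I discard the (measure-zero) basin boundary between them.

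\textbf{Inductive step.} Suppose, for almost every initial condition, gadgets $1,\dots,i-1$ converge to their designed stable values $v_1^*,\dots,v_{i-1}^*$. I regard gadget $i$'s internal dynamics as a non-autonomous ODE driven by $m_{X_1}(t),\dots,m_{X_{i-1}}(t)$, whose driver errors tend to zero. The \emph{limiting autonomous system} (drivers frozen at $v_1^*,\dots,v_{i-1}^*$) is, by design of the NOR-gadget, a system whose unique stable equilibrium corresponds to the correct value $f_i(v_1^*,\dots,v_{i-1}^*)$ and whose other equilibria are hyperbolic and unstable. An asymptotic-autonomy / total-stability argument then transfers convergence from the limiting system to the time-varying one: once the drivers have entered a small neighborhood of $v^*$, the gadget-$i$ trajectory is a small perturbation of the autonomous flow, so any trajectory that is not trapped on a stable manifold of an unstable equilibrium of the limiting system converges to the designed stable equilibrium. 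The exceptional set at level $i$ is again a measure-zero union of stable manifolds of lower dimension in the gadget's internal coordinates.

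\textbf{Combining levels and main obstacle.} Because the cascade is triangular, I can apply Fubini coordinate by coordinate: the exceptional set at level $i$ has measure zero along gadget $i$'s internal coordinates for \emph{every} choice of the upstream initial conditions in the full-measure set produced by the inductive hypothesis, so the union of all exceptional sets across the $n$ levels has measure zero in the full phase space. The delicate step, and the one that relies most heavily on the explicit construction in Appendix~\ref{sec:nor-proof}, is verifying the hyperbolicity claim used above: I must check that each equilibrium of the NOR-gadget (under any fixed Boolean input assignment on the drivers) is hyperbolic, and in particular that the unstable equilibria are genuinely unstable (Jacobian with an eigenvalue of positive real part) rather than center-type. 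This is the main obstacle; once it is in hand, the stable manifold theorem and the cascade argument above deliver the full-measure convergence statement.
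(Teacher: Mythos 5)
Your induction on topological order matches the skeleton of the paper's proof, but the engine you plug in at each step is quite different from what the paper uses, and the ``main obstacle'' you identify is actually a red herring created by that choice of engine. You are trying to run a local argument --- hyperbolicity of every gadget equilibrium, the stable manifold theorem to dismiss the basins of the unstable ones as measure zero, and an asymptotic-autonomy argument to carry convergence from the frozen-driver system to the time-varying one. But the NOR gadget is deliberately built so that none of this machinery is needed. Look again at Lemma~\ref{lemma:nor-gate}: once the drivers $x_{i_1}, x_{i_2}$ are bounded away from $\threshold$ by some fixed $\varepsilon > 0$, the gradient $U_y$ of the output bidder is \emph{sign-determinate on the entire range of $y$} (strictly positive below the target value and strictly negative above it). The output bidder therefore has a \emph{unique} equilibrium under frozen correct inputs, and it is \emph{globally} attracting within the gadget's box, not merely locally stable. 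There are no competing unstable equilibria to trap a positive-measure set of initial conditions, and there is no hyperbolicity certificate to produce. Chasing that certificate is wasted effort: the ceiling-at-$\high$ construction in Appendix~\ref{subsec:reserve-ceiling} would in fact give you trouble if you tried, because $y$ sitting on the ceiling is not a hyperbolic rest point of a smooth vector field in the usual sense.

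The paper's inductive step is correspondingly much lighter. By the inductive hypothesis, the upstream subsystem $\mathcal{S}'$ converges, so after a finite time every upstream output multiplier is within $\varepsilon < \min(\threshold - \low,\ \high - \threshold)$ of its limit and hence permanently on the correct side of $\threshold$. From that time on, the ``moreover'' clause of Lemma~\ref{lemma:nor-gate} applies verbatim and forces $y$ to converge to the correct value --- no freezing of drivers, no total-stability perturbation lemma, no Fubini bookkeeping over a union of stable manifolds. Your decomposition into gadget-internal coordinates and your observation that the auxiliary bidders settle independently are both correct and aligned with the paper; the gap is that you have reintroduced, as a hard step to be verified, a hyperbolicity hypothesis that the construction was specifically engineered to make irrelevant. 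I would revise the argument to lean directly on the sign-determinacy of $U_y$ from Lemma~\ref{lemma:nor-gate} instead of on the stable manifold theorem.
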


Finally, some boolean networks do not have any satisfying assignments. These systems will necessarily exhibit periodic (or other non-convergent) behavior\footnote{To be precise, such systems still can have unstable equilibria but the dynamics will not converge to these unless they start at one of these points.}. One example is the system:
$$X = \NOT(Y) \qquad Y = \NOT(Z) \qquad Z = \NOT(X)$$
We depict the behavior of multipliers of the corresponding ROS system in Figure \ref{fig:circuit_b}. In Section \ref{sec:clock} we show how to apply this phenomenon to construct a clock.

\begin{figure}[h]
\begin{center}
\includegraphics[width=\textwidth]{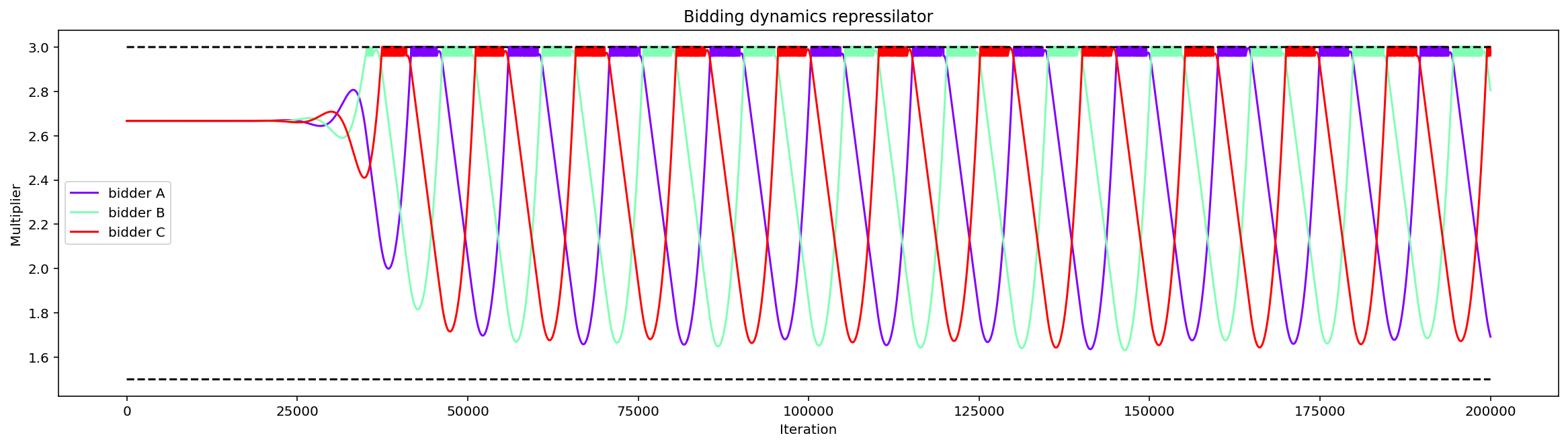}
\end{center}
\caption{Behavior of bid multipliers in an ROS system encoding boolean equations $X = \NOT(Y),Y = \NOT(Z), Z = \NOT(X)$.}
\label{fig:circuit_b}
\end{figure}

\cameraready{Details of the construction of the NOR gate are deferred to the full version of this paper.}{
\subsection{\NOR Gate Construction}\label{subsec:nor-gate}

In this section we give a high-level overview of our NOR gate construction (which drives the proofs of Theorems \ref{thm:main_circuit} and \ref{thm:circuit-convergence}). The key idea of the construction is illustrated in Figure~\ref{fig:nor-gate} in the appendix. Consider $k$ input bidders $x_1, \ldots, x_k$ ($k \geq 0$) and one output bidder $y$. There are two items in this system $\sH$ and $\sL$ -- all the bidders are interested in $\sL$, but only $y$ is interested in $\sH$. The two items are designed to have the following significance for the output bidder $y$:
\begin{itemize}
  \item $\sH$: A high (fixed) price item (with price larger than its value to $y$). $y$ is encouraged to win this item when competition for the other item is weak.
  \item $\sL$: A low (but dynamically) priced item (with price lower than its value to $y$). The price depends the competition pressure from the input bidders, but $y$ will always win this item: even with the highest level of competition, the price will still be lower than its value to $y$.
\end{itemize}
The values and prices of these items are chosen in a way such that
\begin{itemize}
  \item When none of the bidders $x_i$ are competing with $y$ on $\sL$ (i.e., all $x_i < \threshold$), $y$ will increase to $\high$ and win both items.
  \item When any of the $x_i$ are competing with $y$ on $\sL$ (i.e., any $x_i > \threshold$), $y$ cannot afford $\sH$ and will keep their multiplier at $\low$.
\end{itemize}

To realize the above idea, one needs to carefully select the parameters to restrict all bid multipliers within a reasonable range. In particular, it is necessary that:
\begin{itemize}
  \item $x_i$ never wins $\sL$ and $y$ always wins $\sL$;
  \item $y$ wins $\sH$ at \high while does not win at \low;
  \item If any of $x_i$ are above \threshold, the resulting price of $\sL$ is high enough to push $y$ back to \low.
\end{itemize}

We cover these construction details (and associated proof) in Appendix~\ref{sec:nor-proof}.
}

\clearpage
\newpage
\newpage

\bibliographystyle{apalike}
\bibliography{references,ms,refs}   

\appendix
\section{Missing proofs from Section \ref{sec:qualitative}}\label{appendix:proofs_qualitative}

\begin{proof}[Proof of Lemma \ref{lemma:ros:UI}]
    If we treat $x_{ij}$ and $p_{ij}$ as a function of bids in the second price auction, then we can write:
    $$U_i = \sum_j \E_{v_{ij}}[v_{ij} \hat x_{ij}(m_i v_{ij}; v_{ij}) - \hat p_{ij}(m_i v_{ij}; v_{ij})]$$
    where:
    $$\hat x_{ij}(b_{ij}; v_{ij}) = \E[ x_{ij}(m_1 v_{1j}, \hdots, m_{i-1} v_{i-1,j}, b_{ij}, m_{i+1} v_{i+1,j}, \hdots, m_n v_{nj}) \mid v_{ij}]$$
    and similarly for $\hat p_{ij}(b_{ij}; v_{ij})$. Because we are in the smooth limit, the functions $\hat x_{ij}$ and $\hat p_{ij}$ are $C^1$ and since it is a second price auction, they satisfy $$\frac{\partial \hat p_{ij}}{\partial b_{ij}} = b_{ij} \frac{\partial \hat x_{ij}}{\partial b_{ij}}$$ Therefore we have:

    $$\begin{aligned}
    \frac{\partial U_i}{\partial m_i} & = \sum_j \E_{v_{ij}}\left[v_{ij} v_{ij} \frac{\partial \hat x_{ij}}{\partial b_{ij}}(m_i v_{ij}; v_{ij}) - \frac{\partial \hat p_{ij}}{\partial b_{ij}}(m_i v_{ij}; v_{ij})\right] \\ &
    = \sum_j \E_{v_{ij}}\left[ \frac{\partial \hat x_{ij}}{\partial b_{ij}}(m_i v_{ij}; v_{ij}) v_{ij}^2 (1-m_i)\right] \end{aligned}$$
    Hence $\frac{\partial U_i}{\partial m_i} > 0$ for $m_i < 1$ and $\frac{\partial U_i}{\partial m_i} < 0$ for $m_i > 1$. Furthermore, since $U_i = 0$ for $m_i = 0$, we have $U_i \geq 0$ for $m_i \leq 1$.
\end{proof}

\begin{proof}[Proof of Lemma \ref{lemma:invariance_above_1}]
Consider any $m\in \R^n$ with $m_i = 1$.  Note that $U_i(m) \geq 0$ by Lemma \ref{lemma:ros:UI}.  Thus the vector field along this facet is nonnegative in the $i$ coordinate.  Therefore the region $[1,\infty)^n$ is positively invariant with respect to the associated flow $\phi$.
\end{proof}

\begin{figure}[h]
\begin{center}
\includegraphics[width=0.6\textwidth]{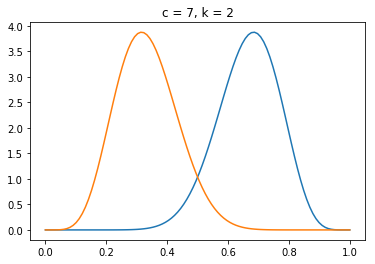}
\end{center}
\caption{Density of distributions $\text{Beta}(2c,c)$ (blue) and $\text{Beta}(c,2c)$ (orange) for $c=7$.}
\label{fig:beta_distribution}
\end{figure}

\begin{figure}[h]
\begin{center}
\hfill
\includegraphics[width=0.48\textwidth]{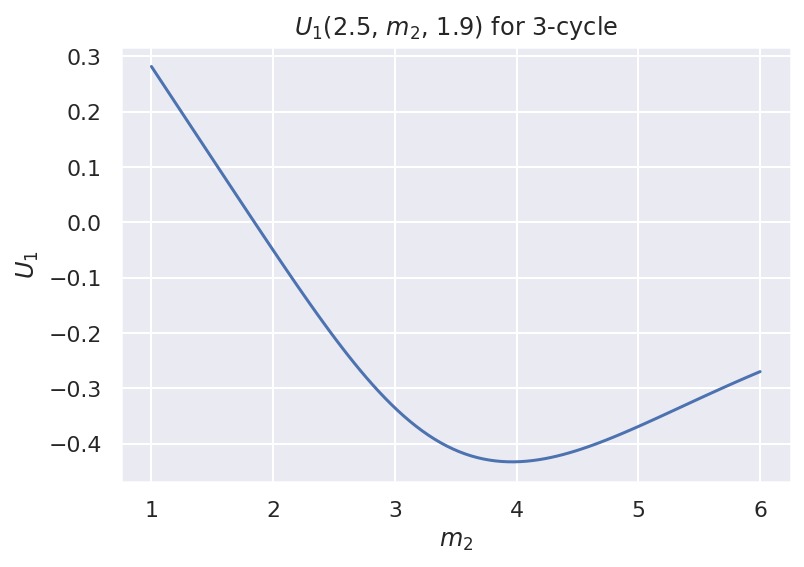}
\hfill
\includegraphics[width=0.48\textwidth]{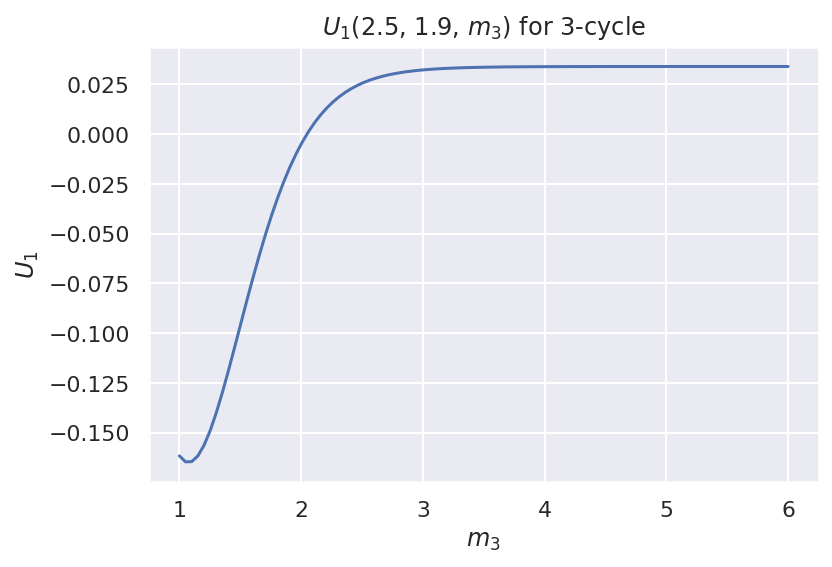}
\hfill
\end{center}
\caption{Plots of $U_1(2.5,x,1.9)$ [left] and $U_1(2.5,1.9,x)$ [right] for system Figure \ref{fig:repressilator_graphs} (b).}
\label{fig:U1_3cycle}
\end{figure}

\begin{figure}[h]
\begin{center}
\hfill
\includegraphics[width=0.48\textwidth]{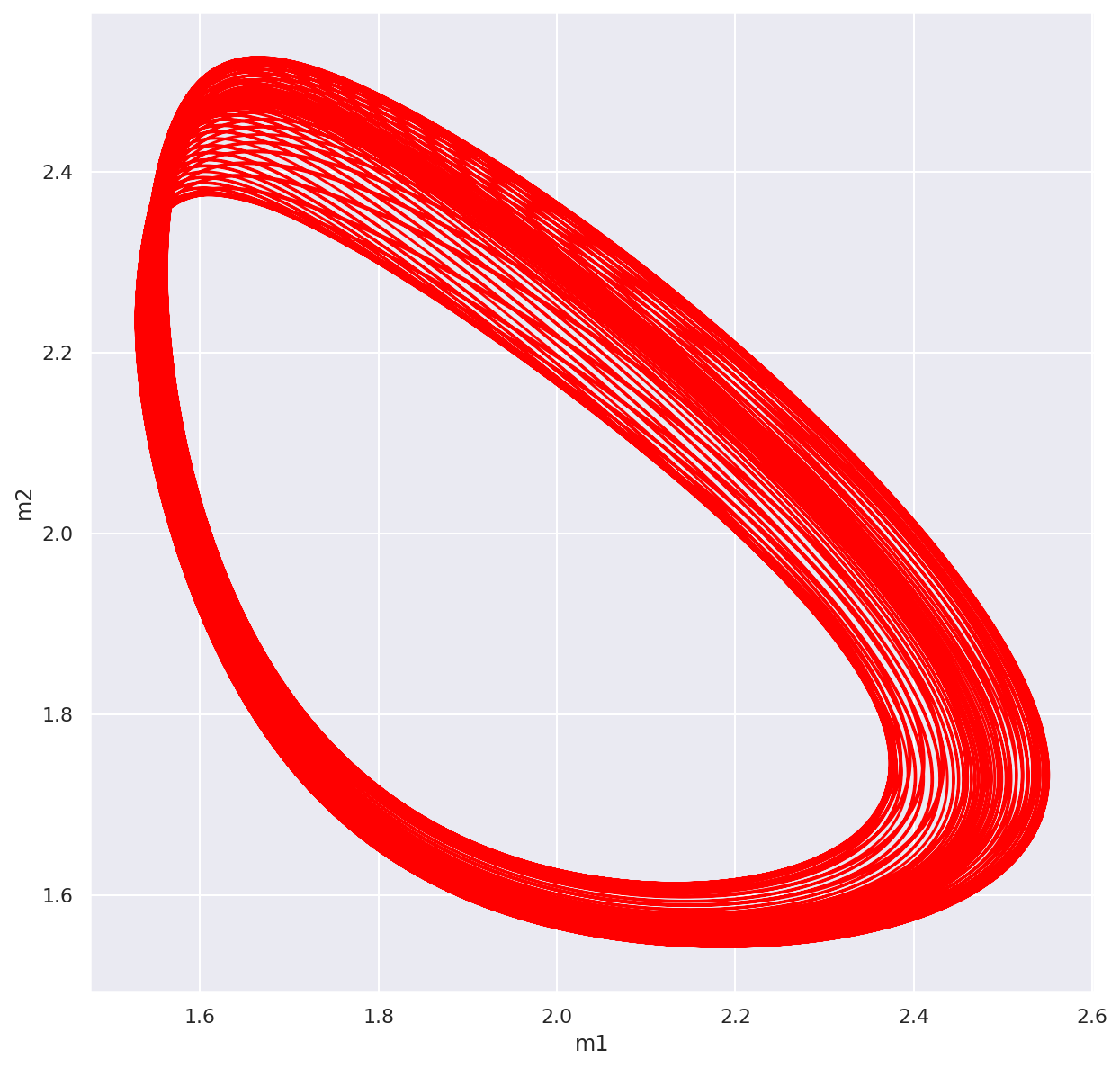}
\hfill
\includegraphics[width=0.48\textwidth]{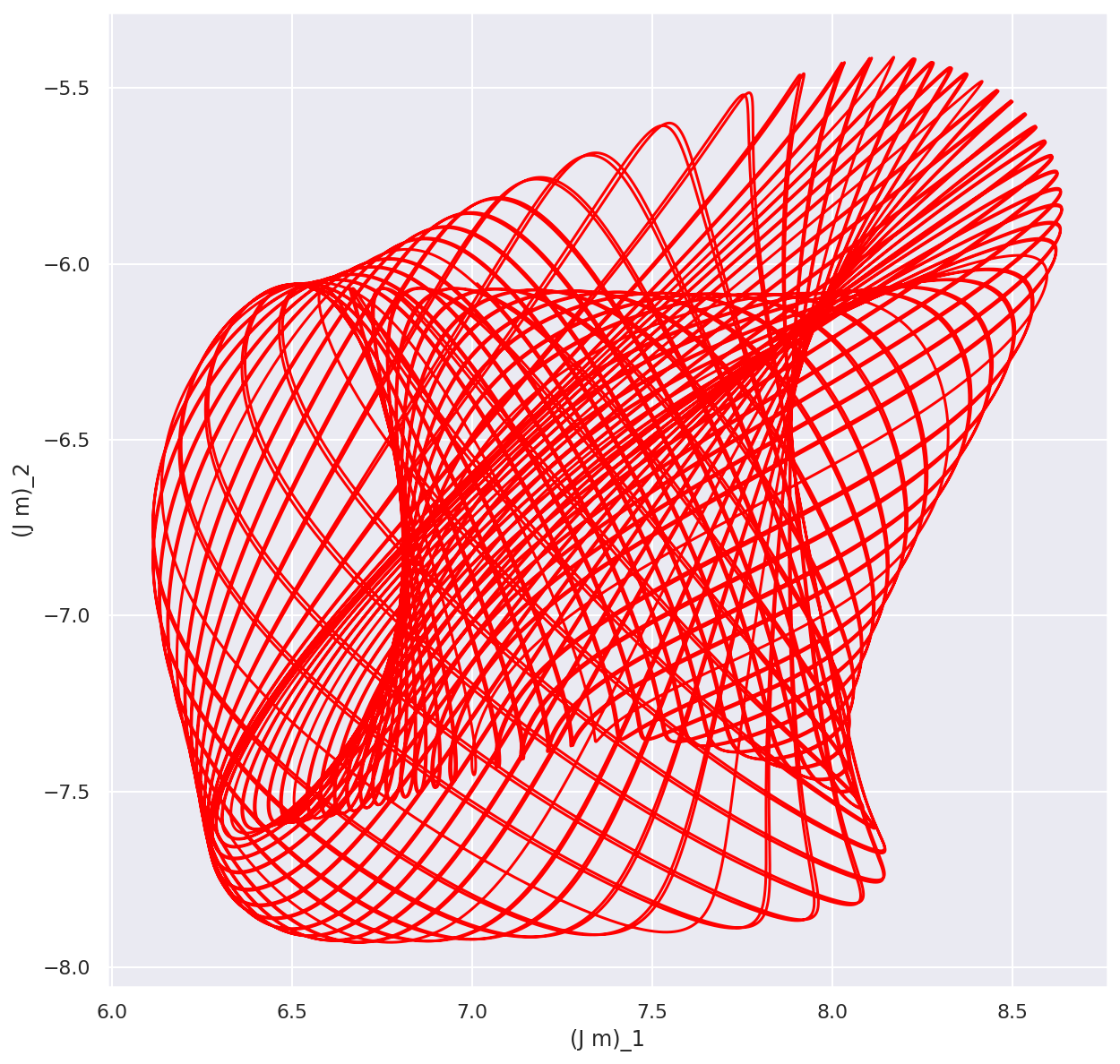}
\hfill
\end{center}
\caption{Quasi-periodic orbit corresponding to the first graph in Figure \ref{fig:repressilator_couplings}. The first graph corresponds to the orbit of multipliers $(m_1(t), m_2(t))$. The second graph is a random projection of the orbit $m(t)$.}
\label{fig:quasiperiodic_1}
\end{figure}

\begin{proof}[Proof of Lemma \ref{lemma:first_price_dynamics}] Let $\hat x_{ij}(b_{ij}; v_{ij})$ and $\hat p_{ij}(b_{ij}; v_{ij})$ be the expected allocation and payments of bidder $i$ for item $j$ when the bidding strategy of other agents are fixed. Following the notation in Lemma \ref{lemma:ros:UI} we can write the utility as $U_i = \sum_j \E[v_{ij} \hat x_{ij}(m_i v_{ij}; v_{ij}) - \hat p_{ij}(m_i v_{ij}; v_{ij})]$. Since under a first price auction we have: $\hat p_{ij}(b_{ij}; v_{ij}) = b_{ij} \hat x_{ij}(b_{ij}; v_{ij})$ it follows that $U_i = \sum_j \E[(1-m_i) v_{ij} \hat x_{ij}(m_i v_{ij}; v_{ij})]$. Hence $U_i \geq 0$ for $m_i \leq 1$ and $U_i \geq 1$ for $m_i \geq 1$. In the smooth limit, those inequalities are strict for $m_i \neq 1$ hence the dynamics must converge to $m_i = 1$.
\end{proof}

\section{Missing Proofs from Section \ref{sec:linear_system}}

\subsection{Proof of Lemma \ref{lemma:competitive_matrix}}\label{appendix:competitive_matrix}

We will focus on establishing Lemma \ref{lemma:competitive_matrix}. To do so, we will make use of the following two sublemmas regarding the Jordan decomposition of non-negative matrices. The first lemma shows that every complex number can appear as the eigenvalue of a non-negative matrix.

\begin{lemma}\label{lem:nonnegative-matrix-eigenvalue}
Given any $\lambda \in \C$, there exists a non-negative matrix $M$ containing $\lambda$ as an eigenvalue with multiplicity 1.
\end{lemma}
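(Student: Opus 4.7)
My plan is to construct $M$ explicitly as a non-negative combination of the identity matrix and a cyclic permutation matrix. Let $P_n$ denote the $n \times n$ cyclic permutation matrix (ones on the super-diagonal and in the $(n,1)$ corner). Its eigenvalues are precisely the $n$-th roots of unity, $\omega^k = e^{2\pi i k/n}$ for $k = 0, \ldots, n-1$. Consequently, for any $c, s \geq 0$, the non-negative matrix $M = cI_n + sP_n$ has $n$ distinct eigenvalues $c + s\omega^k$ lying on the circle of radius $s$ centered at $c \in \R$. Every eigenvalue of such an $M$ automatically has multiplicity exactly one, so the problem reduces to showing that every $\lambda \in \C$ can be written as $c + s e^{2\pi i k/n}$ for some admissible choice of parameters.

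The two degenerate real cases are handled directly: for $\lambda \geq 0$ take $M = [\lambda]$, and for $\lambda < 0$ take $M = |\lambda|\, P_2$, whose spectrum is $\{\pm|\lambda|\}$. For the complex case I may assume $\mathrm{Im}(\lambda) > 0$ (replacing $\lambda$ by $\bar\lambda$ if necessary, since $M$ is real and its spectrum is closed under conjugation, so $\lambda$ becomes an eigenvalue of the same matrix). Writing $\lambda = a + bi$ with $b > 0$ and solving $\lambda = c + s e^{i\theta}$ yields $s = b/\sin\theta$ and $c = a - b\cot\theta$. These parameters satisfy $s > 0$ and $c \geq 0$ precisely when $\theta \in (0,\pi)$ and $\cot\theta \leq a/b$.

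The only thing left is to force $\theta$ to be a rational multiple of $2\pi$ so that $\theta = 2\pi k/n$ for some integers $k, n$. Since $\cot\theta \to -\infty$ as $\theta \to \pi^-$, the inequality $\cot\theta \leq a/b$ holds on some open subinterval of $(0,\pi)$, and since $\{2\pi k/n : 0 < k < n\}$ is dense in $(0, 2\pi)$, I can pick $\theta = 2\pi k/n$ inside that subinterval. This determines admissible $n, k, s, c$, yielding the desired matrix $M$.

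The construction is entirely explicit, so I do not expect a genuine obstacle; the only conceptual step is recognizing that by letting $n$ be large, circulant perturbations of the identity give access to eigenvalues anywhere in $\C$, with simplicity coming for free from the distinctness of roots of unity.
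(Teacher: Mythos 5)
Your construction is correct. You prove the lemma using circulants of the form $M = cI_n + sP_n$ with $c, s \geq 0$ and $n$ chosen large enough so that some angle $2\pi k/n$ lands in the open set where the parameters $c$, $s$ are non-negative; the paper instead reproduces an explicit pair of $4\times 4$ circulant matrices (from a cited source), one of the form $aI_4 + bP_4$ to cover $\mathrm{Re}(\lambda)\geq 0$ and the other of the form $aP_4^2 + bP_4$ to cover $\mathrm{Re}(\lambda)\leq 0$, both with eigenvector $(1,i,-1,-i)$ and no density argument. The two proofs share the same underlying mechanism, namely that real circulant matrices with non-negative entries have spectra consisting of circles centered on the non-negative real axis, but your version treats $n$ as a free parameter and trades an explicit case split for a density argument, which makes the reason the construction works more transparent (increasing $n$ lets the half-line from the boundary of admissible $\theta$ sweep through all of $\C$), while the paper's version gives a smaller, fully explicit witness. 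One small remark: you could avoid the density argument entirely by using both $P_n$ and $P_n^2$ with non-negative coefficients, as the paper does at $n=4$, which covers the left half-plane directly without taking $n\to\infty$; but as written your argument is complete and correct, including the handling of multiplicity (distinct eigenvalues of $cI+sP_n$ for $s>0$) and the reduction to $\mathrm{Im}(\lambda)\geq 0$ via realness of $M$.
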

\begin{proof}
We reproduce an explicit construction from \cite{voneitzen_se}. We will restrict our attention to $\lambda$ with $\mathrm{Im}(\lambda) \geq 0$ (since if $\lambda$ is an eigenvalue of $M$, so is $\overline{\lambda}$). If $\lambda = a + bi$ (with $a, b \geq 0$), then:

$$
\begin{bmatrix}a&b&0&0\\0&a&b&0\\0&0&a&b\\b&0&0&a\end{bmatrix}\begin{bmatrix}1\\i\\-1\\-i\end{bmatrix}=(a+bi)
\begin{bmatrix}1\\i\\-1\\-i\end{bmatrix}$$

\noindent
Similarly, if $\lambda = -a + bi$, then:

$$\begin{bmatrix}0&b&a&0\\0&0&b&a\\a&0&0&b\\b&a&0&0\end{bmatrix}\begin{bmatrix}1\\i\\-1\\-i\end{bmatrix}=(-a+bi)
\begin{bmatrix}1\\i\\-1\\-i\end{bmatrix}.$$
\end{proof}

The second sublemma extends Lemma \ref{lem:nonnegative-matrix-eigenvalue} to show that every possible Jordan block appears in the Jordan decomposition of some non-negative matrix.

\begin{lemma}\label{lem:nonnegative-matrix-jordan}
Given any $\lambda \in \C$ and $d > 1$, there exists a non-negative matrix $M$ whose Jordan decomposition contains a Jordan block with eigenvalue $\lambda$ and multiplicity $d$.
\end{lemma}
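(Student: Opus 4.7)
The plan is to bootstrap Lemma~\ref{lem:nonnegative-matrix-eigenvalue}: take a nonnegative matrix $A$ having $\lambda$ as a \emph{simple} eigenvalue, and then amplify its algebraic multiplicity to $d$ while keeping the geometric multiplicity at $1$ by embedding $A$ in a block upper-triangular matrix with identity blocks on the superdiagonal.

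Concretely, given a $k \times k$ nonnegative $A$ with $Av = \lambda v$ and $\lambda$ a simple root of $\det(A - tI_k)$, define the $dk \times dk$ block matrix
$$M = \begin{bmatrix} A & I_k & & & \\ & A & I_k & & \\ & & \ddots & \ddots & \\ & & & A & I_k \\ & & & & A \end{bmatrix}.$$
This $M$ is manifestly nonnegative. Two facts remain to verify. First, since $M$ is block upper-triangular with $A$ on every diagonal block, its characteristic polynomial is $\det(A - tI_k)^d$; because $\lambda$ is a simple root of $\det(A - tI_k)$, its algebraic multiplicity in $M$ is exactly $d$. Second, write a vector $x \in \C^{dk}$ as $(x_1,\dots,x_d)$ with each $x_i \in \C^k$. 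Then $((M - \lambda I)x)_i = (A - \lambda I)x_i + x_{i+1}$ for $i < d$ and $((M - \lambda I)x)_d = (A - \lambda I)x_d$. Letting $e_j^v$ denote the block vector with $v$ in the $j$-th slot and zeros elsewhere, one computes directly that $(M - \lambda I)e_1^v = 0$ and $(M - \lambda I)e_j^v = e_{j-1}^v$ for $j \geq 2$. Thus $e_1^v, e_2^v, \dots, e_d^v$ is a Jordan chain of length $d$ for $\lambda$, so the Jordan decomposition of $M$ contains a block for $\lambda$ of size at least $d$. Combined with the algebraic multiplicity being exactly $d$, this forces a single Jordan block of size exactly $d$.

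The only real obstacle is a piece of bookkeeping: ensuring that Lemma~\ref{lem:nonnegative-matrix-eigenvalue} actually supplies $\lambda$ as a \emph{simple} eigenvalue (rather than one of higher multiplicity), since the block-construction step breaks otherwise. The $4 \times 4$ matrix in that lemma for $\lambda = a + bi$ with $a, b \geq 0$ has spectrum $\{a+b, a-b, a+bi, a-bi\}$, so $\lambda$ is simple whenever $b \neq 0$; the mirrored construction handles the case $a < 0$. The degenerate case $b = 0$ (i.e.\ real $\lambda$) is handled separately by taking $A = [\lambda]$ when $\lambda \geq 0$ and $A = \bigl[\begin{smallmatrix}0 & |\lambda| \\ |\lambda| & 0\end{smallmatrix}\bigr]$ when $\lambda < 0$, whose spectrum is $\{+|\lambda|, -|\lambda|\}$. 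In all cases we obtain a nonnegative $A$ with $\lambda$ as a simple eigenvalue, and the block construction above then yields the desired $M$.
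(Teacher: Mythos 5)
Your proof is correct, and it takes a genuinely different route from the paper. The paper forms $M = M_\lambda \otimes J_d$, where $J_d$ is the $d\times d$ Jordan block with eigenvalue $1$, and then invokes Theorem~4.3.17 of Horn and Johnson on the Jordan structure of Kronecker products of Jordan blocks. You instead build a block upper-triangular matrix with $A$ on the diagonal and $I_k$ on the superdiagonal and verify the Jordan chain $(M-\lambda I)e_j^v = e_{j-1}^v$ by hand, combining this with the characteristic-polynomial count $\det(A-tI_k)^d$ to pin the block size to exactly $d$. Your version is more self-contained (no external citation) and, importantly, it sidesteps two edge cases that the paper glosses over: (i) Horn--Johnson~4.3.17 requires nonzero eigenvalues, so the paper's proof does not literally cover $\lambda=0$ (the product $J_1(0)\otimes J_d(1)$ is the zero matrix, which has $d$ trivial Jordan blocks, not $J_d(0)$ --- though the conclusion for $\lambda=0$ is trivially true via $N_d$ itself); (ii) the explicit $4\times 4$ matrices in Lemma~\ref{lem:nonnegative-matrix-eigenvalue} actually give $\lambda$ as an eigenvalue of multiplicity $4$, not $1$, when $\lambda$ is real (the matrix degenerates to $aI_4$ for $b=0$), which you correctly flag and repair with the $1\times 1$ and $2\times 2$ replacements. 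The paper's Kronecker argument happens to survive the real-$\lambda$ wrinkle (it just produces several copies of $J_d(\lambda)$), but your explicit bookkeeping is the cleaner treatment.
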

\begin{proof}
By Lemma \ref{lem:nonnegative-matrix-eigenvalue}, there exists a non-negative matrix $M_{\lambda}$ which has $\lambda$ as an eigenvalue with multiplicity $1$. Let $J_d$ be the $d$-by-$d$ Jordan block with eigenvalue $1$ and multiplicity $d$ (note that $J_d$ is also a non-negative matrix). We will let $M$ equal the Kronecker product $M_{\lambda} \otimes J_d$. By Theorem 4.3.17 of \cite{horn1991topics}, the Jordan decomposition of $M$ will contain a Jordan block with eigenvalue $\lambda$ and multiplicity $d$.
\end{proof}

With Lemmas \ref{lem:nonnegative-matrix-eigenvalue} and \ref{lem:nonnegative-matrix-jordan}, we can show that we can construct purely competitive matrices with arbitrary Jordan decompositions.

\begin{lemma}\label{lem:purely-competitive-jordan}
Given any collection of Jordan blocks $\{J_1, J_2, \dots, J_k\}$ (each $J_i$ with some eigenvalue $\lambda_i$ and multiplicity $d_i$), there exists a purely competitive matrix $B$ whose Jordan decomposition contains this multiset of Jordan blocks.
\end{lemma}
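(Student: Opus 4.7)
The plan is to handle the Jordan blocks one at a time and assemble them via a direct sum. Observe first that if $M_1, \ldots, M_k$ are each purely competitive, then their block-diagonal sum $M_1 \oplus \cdots \oplus M_k$ is again purely competitive, since the new off-block-diagonal entries are zero and the existing entries remain non-positive; and the Jordan decomposition of a direct sum is the (multiset) union of the Jordan decompositions of the summands. Thus it suffices to produce, for each pair $(\lambda_i, d_i)$, a purely competitive matrix $B_i$ whose Jordan decomposition contains the single block $J(\lambda_i, d_i)$.

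When $\lambda_i = 0$, the $d_i \times d_i$ matrix with $-1$ entries immediately above the diagonal and zeros elsewhere is purely competitive, nilpotent of index $d_i$, and has Jordan form $J(0, d_i)$ (the signs on the superdiagonal are absorbed by a diagonal change of basis). When $\lambda_i \neq 0$, first apply Lemma \ref{lem:nonnegative-matrix-jordan} with eigenvalue $\lambda_i^2$ and multiplicity $d_i$ to obtain a non-negative matrix $N_i$ whose Jordan form contains $J(\lambda_i^2, d_i)$, and set
$$B_i = \begin{pmatrix} 0 & -I \\ -N_i & 0 \end{pmatrix}.$$
Its block-diagonal is zero and the off-diagonal blocks $-I$ and $-N_i$ have non-positive entries, so $B_i$ is purely competitive. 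A direct computation gives $B_i^2 = N_i \oplus N_i$, whose Jordan decomposition therefore contains two copies of $J(\lambda_i^2, d_i)$.

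The remaining — and most subtle — step is to transfer this Jordan information from $B_i^2$ back to $B_i$. Two observations do the work. First, conjugation by $\mathrm{diag}(I, -I)$ sends $B_i$ to $-B_i$, so the Jordan structures of $B_i$ at $\lambda_i$ and at $-\lambda_i$ agree as multisets of block sizes. Second, at any nonzero eigenvalue $\nu$, squaring preserves Jordan block sizes: on the generalized eigenspace of $B_i$ at $\nu$ the operator $B_i + \nu I$ is invertible, so $(B_i^2 - \nu^2 I)^k$ and $(B_i - \nu I)^k$ have the same kernel on this subspace, whence the block-size filtration agrees. Combining these two observations, the two $J(\lambda_i^2, d_i)$ blocks of $B_i^2$ must arise from one $J(\lambda_i, d_i)$ and one $J(-\lambda_i, d_i)$ in $B_i$, so the Jordan decomposition of $B_i$ contains $J(\lambda_i, d_i)$ (the extra $J(-\lambda_i, d_i)$ is harmless under ``contains''). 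Forming $B = B_1 \oplus \cdots \oplus B_k$ finishes the argument. The main obstacle is precisely this last step — reading off the Jordan data of $B_i$ from that of $B_i^2$ — but the symmetry $B_i \mapsto -B_i$ together with the standard kernel-filtration calculation keeps it short.
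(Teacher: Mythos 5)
Your proof is correct, but it takes a genuinely different route from the paper's. The paper's proof first builds a single non-negative matrix $M$ whose Jordan decomposition contains the whole collection (direct-summing the outputs of Lemma~\ref{lem:nonnegative-matrix-jordan}), and then passes to the purely competitive matrix $B = M \otimes S$ with
$S = \left(\begin{smallmatrix} 0 & -1 \\ -1 & 0 \end{smallmatrix}\right)$,
invoking Theorem 4.3.17 of Horn--Johnson a second time to conclude that $B$ inherits all the Jordan blocks of $M$ (via the simple eigenvalue $1$ of $S$). You instead work block by block: for $\lambda_i \neq 0$ you take a non-negative $N_i$ containing $J(\lambda_i^2, d_i)$, form the block anti-diagonal $B_i = \left(\begin{smallmatrix} 0 & -I \\ -N_i & 0 \end{smallmatrix}\right)$, observe $B_i^2 = N_i \oplus N_i$, and then recover the Jordan data of $B_i$ from that of $B_i^2$ via the similarity $B_i \sim -B_i$ (conjugation by $\mathrm{diag}(I,-I)$) together with the standard fact that squaring preserves Jordan block sizes at nonzero eigenvalues. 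Your computation of the transfer step is correct: the generalized eigenspace of $B_i^2$ at $\lambda_i^2$ splits (for $\lambda_i \neq 0$) as the sum of those of $B_i$ at $\pm\lambda_i$, the symmetry forces the block-size multisets at $\pm\lambda_i$ to coincide, and the invertibility of $B_i + \lambda_i I$ on the $\lambda_i$-generalized eigenspace gives the kernel-filtration match, so a block $J(\lambda_i^2,d_i)$ of $B_i^2$ must trace back to a block $J(\lambda_i,d_i)$ of $B_i$. The $\lambda_i = 0$ case handled by the explicit $-1$-superdiagonal nilpotent matrix is a nice simplification that the paper's proof bundles into the general Kronecker machinery. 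The trade-off: the paper's argument is shorter given the Horn--Johnson reference, whereas yours avoids a second appeal to that theorem at the cost of the squaring-and-symmetry bookkeeping; you could also shortcut your own $\lambda_i \neq 0$ case by setting $B_i = N_i \otimes S$ with $N_i$ taken at eigenvalue $\lambda_i$ directly, which would eliminate the need to pass through $\lambda_i^2$ at all.
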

\begin{proof}
We first construct a non-negative matrix $M$ whose Jordan decomposition contains this collection of Jordan blocks. To do so, we can apply Lemma \ref{lem:nonnegative-matrix-jordan} to obtain a non-negative matrix $M_i$ whose Jordan decomposition contains $J_i$, and let $M$ be the block-diagonal matrix formed by all the $M_i$ (i.e., the direct sum of these matrices). Since the Jordan decomposition of a direct sum of a sequence of matrices is the additive union of the Jordan decompositions of the matrices, $M$ contains this collection of Jordan blocks as a subset of its Jordan decomposition.

Now, $M$ is a non-negative matrix, not a purely competitive matrix. To fix this, we will let $B = M \otimes S$, where

$$S = \begin{bmatrix}0&-1\\-1&0\end{bmatrix}.$$

Note that this guarantees that $B$ is a purely competitive matrix (every entry of $B$ will be non-positive, and every diagonal element equals $0$). On the other hand, since $S$ has an eigenvector of eigenvalue $1$ (namely, $(1, -1)$), again by Theorem 4.3.17 of \cite{horn1991topics}, $B$ will also contain each of the Jordan blocks $J_i$ in its Jordan decomposition.
\end{proof}

We can now prove Lemma \ref{lemma:competitive_matrix}.

\begin{proof}[Proof of Lemma~\ref{lemma:competitive_matrix}]
Let $J$ be the Jordan normal form of $A$ (so $A = PJP^{-1}$). By Lemma \ref{lem:purely-competitive-jordan}, there exists a purely competitive $m \times m$ matrix $B$ whose Jordan normal form $J'$ contains all the Jordan blocks of $J$. Write $B = QJ'Q^{-1}$. Since $J'$ contains all the Jordan blocks of $J$, there exists a linear embedding $\Pi:\R^n \rightarrow \R^m$ with the property that $\Pi J = J'\Pi$.

Now, note that if we take $T = Q \Pi P^{-1}$, $T$ satisfies $TA = BT$. The real part $T_{r} = \mathrm{Re}(T)$ also then satisfies $T_{r}A = BT_{r}$, as desired.
\end{proof}

\subsection{Proof of Lemma \ref{lemma:competitive_to_ros}}\label{appendix:competitive_to_ros}

\begin{proof}[Proof of Lemma \ref{lemma:competitive_to_ros}]
Choose a constant $c \in \R$ and a vector $b \in \R^n$ such that $y(t) = b + c \cdot x(t) \in (1.1, 1.9)^n$. Observe that $y(t)$ satisfy the equation $\frac{d}{dt} y = A(y-b)$ since:
$$\frac{d}{dt} y = c \frac{d}{dt} x = c Ax = c A \left(\frac{y-b}{c}\right) = A (y-b)$$
Now, recall that $A$ is purely competitive, so $A_{ii} = 0$ and $A_{ij} \leq 0$. We will now construct an ROS system on $n+2$ variables such that the orbit projected on the first $n$ variables is exactly $y(t)$. To do so, we will construct gadgets to simulate each term of the system $\frac{d}{dt} y = A(y-b)$. We need gadgets that simulate constant terms $b_i$ as well as linear terms with negative coefficient $A_{ij} y_j$. We will start with an auxiliary gadget that simulates a bidder with a constant multiplier.\\

\noindent \emph{Bidders:} For each variable $y_i$ of the original system we will construct a bidder $i$ in the ROS system. The initial setting of multipliers will be $m_i(0) = y_i(0) \in [1.1, 1.9]$. We will also construct two additional bidders which we will call $A$ and $B$ with initial multipliers $m_A(0) = m_B(0) = 2$.

For the two auxiliary bidders we will create two items $a,b$ such that the values $v_{Aa} = v_{Bb} = 2$, $v_{Ab} = v_{Ba} = 1$. The remainder of the construction will ensure that $A$ only wins item $a$ and $B$ only wins item $b$ and the multipliers $m_A(t) = m_B(t) = 2$ for $t \in [0,T]$.\\

\noindent \emph{Constant terms $b_i$:} We will now construct items that simulate the constant term $b_i$ in the original system. Let's start by assuming that $b_i < 0$.

First we show how to create a gadget such that bidder $i$ has constant utility $-0.1$.  We add item $j$ that that only agent $i$ and agent $A$ are interested in. Their values for which $v_{ij} = 1.9$, $v_{Aj} = 1$. For $m_A=2$ and $m_i \in [1.1, 1.9]$, $i$ wins the item and the utility it derives from it is $1.9 - 2 = -0.1$.

This gadget can be used to create any negative utility as follows: if $\abs{b_i} > 0.1$ the create $k = \lfloor -b_i / 0.1 \rfloor $ copies of the gadget. For the remaining value $w = -b_i + k \cdot 0.1$, create an extra copy of the gadget with the values scaled by $-w / 0.1$.

For the case $b_i > 0$ we can use the same gadget but changing $v_{ij} = 2.1$. Again $i$ will win the item and obtain constant utility $2.1 - 2.0 = 0.1$. The same argument applies to go from $0.1$ to any constant.\\

\noindent \emph{Negative linear term $A_{ik} y_k$:} Create one item $j$ such that $v_{kj} = \abs{A_{ik}}$ and $v_{ij} = 2 \abs{A_{ik}}$. Now, for any multipliers $m_i, m_k \in [1.1, 1.9]$, bidder $i$ wins the item and gets utility $2 \abs{A_{ik}} - \abs{A_{ik}} m_k = c + A_{ik} m_k$ for a constant $c$. Combining it with the constant utility gadget above, we can remove the constant term and simply get $A_{ik} m_k$.\\

\noindent \emph{Putting it all together:} Using the gadgets above for each of the terms, we can construct an ROS system $\frac{d}{dt} m_i = U_i(m)$ with $n+2$ bidders that has the following properties when: $m \in D = (1.1, 1.9)^n \times (2-\epsilon, 2+\epsilon)^2$:
\begin{itemize}
\item the auxiliary bidders $A$ and $B$ only win items $a$ and $b$ respectively, so $U_A(m) = U_B(m) = 0$.
\item bidder $i$ has utility $U_i(m) = A(m_{1..n}-b)$ where $m_{1..n}$ corresponds to the vector $m \in \R^{n+2}$ restricted to the first $n$ components.
\end{itemize}

In particular, all the utilities $U_i$ are $C^1$ in $D$. Therefore a solution exists and is unique. For that reason, we observe that $m(t) = (y(t), 2, 2)$ is a solution to the equation $\frac{d}{dt} m_i = U_i(m)$. Since the $U_i$ are $C^1$ in $D$, this is the unique solution. Finally, observe that $x(t)$ and $m(t)$ are related by the affine map $h(x) = (b + cx, 2, 2)$, i.e., $h(x(t)) = m(t)$.

\end{proof}

\subsection{Example of the reduction}\label{sec:example_linear_simulation}

We finally present an end-to-end example. Consider the linear system of equations:
$$\frac{dx}{dt}  = \begin{bmatrix}
    0 & -1 \\
    1 & 0
\end{bmatrix} x$$

The solution to this system is the periodic orbit $x(t) = (\cos t, \sin t)$ -- in this case, the orbit is the unit circle. The system is not competitive, but it can be simulated by the following purely-competitive system:

$$\frac{dz}{dt}  = \begin{bmatrix}
    0 & 0 & 0 & -1 \\
    -1 & 0 & 0 & 0 \\
    0 & -1 & 0 & 0 \\
    0 & 0 & -1 & 0
\end{bmatrix} z$$
whose solution is: $z(t) = (-\cos t, -\sin t, \cos t, \sin t)$. Let $A$ be the $4 \times 4$ matrix above. The affine mapping $h(x) = (-x, x)$. Now, consider the mapping $y = 1.5 + 0.4 z$ such that the dynamics is in $[1.1, 1.9]^4$. The system $\frac{d}{dt} y = Ay + 1.5 \cdot \mathbf{1}$ can be simulated by the following ROS system with $6$ bidders and $26$ items. The rows in the following table correspond to bidders and the columns to items. The first row specifies how many copies we need for each group of items. The first column correspond to the index of bidders. The remainder of the table corresponds to values $v_{ij}$. In Figure \ref{fig:circle} we show the dynamics of this ROS system with initial multipliers $m(0) = (1.1, 1.5, 1.9, 1.5, 2, 2)$. As expected, the orbit of the first two multipliers corresponds to an affine transformation of the unit circle (Figure \ref{fig:circle}).

\begin{center}
\begin{tabular}{c | c c c c | c c c c | c c }
 copies & 1x & & & & 5x & & & & 1x & \\
 \hline
 1 & 2 & 1 & & & 1.9 & & & &  & \\
 2 &  & 2 & 1 & &  & 1.9 & & &  & \\
 3 & & & 2 & 1 & & & 1.9 & &   & \\
 4 & 1 & & & 2 &   & & & 1.9 &   & \\
 A &  &  & & & 1 & 1 & 1 & 1 & 2 & 1 \\
 B &  &  & & &  & & & & 1 & 2 \\
 \hline
\end{tabular}
\end{center}

\begin{figure}[h]
\begin{center}
\hfill
\includegraphics[height=0.22\textwidth]{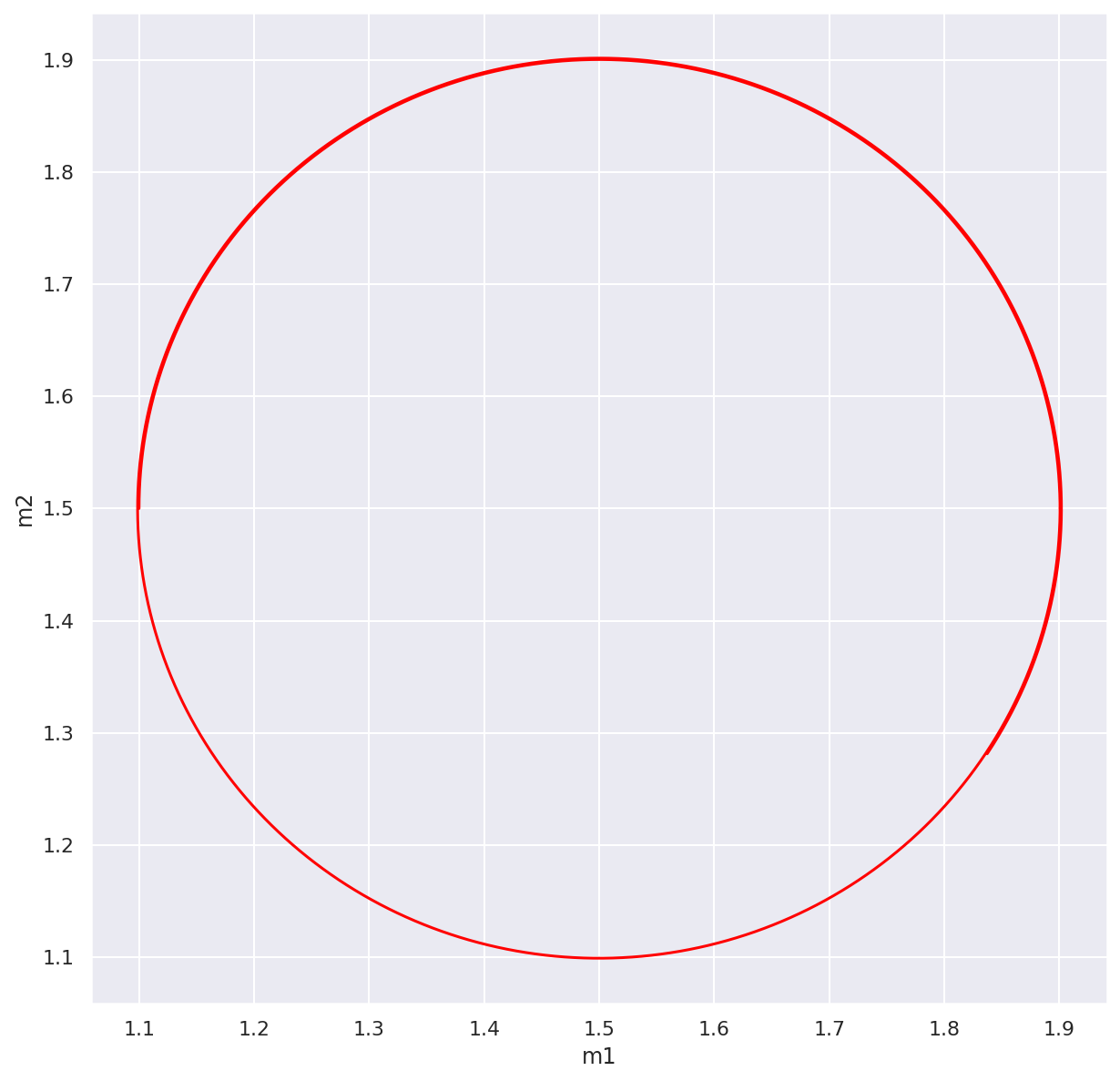}
\hfill
\includegraphics[height=0.22\textwidth]{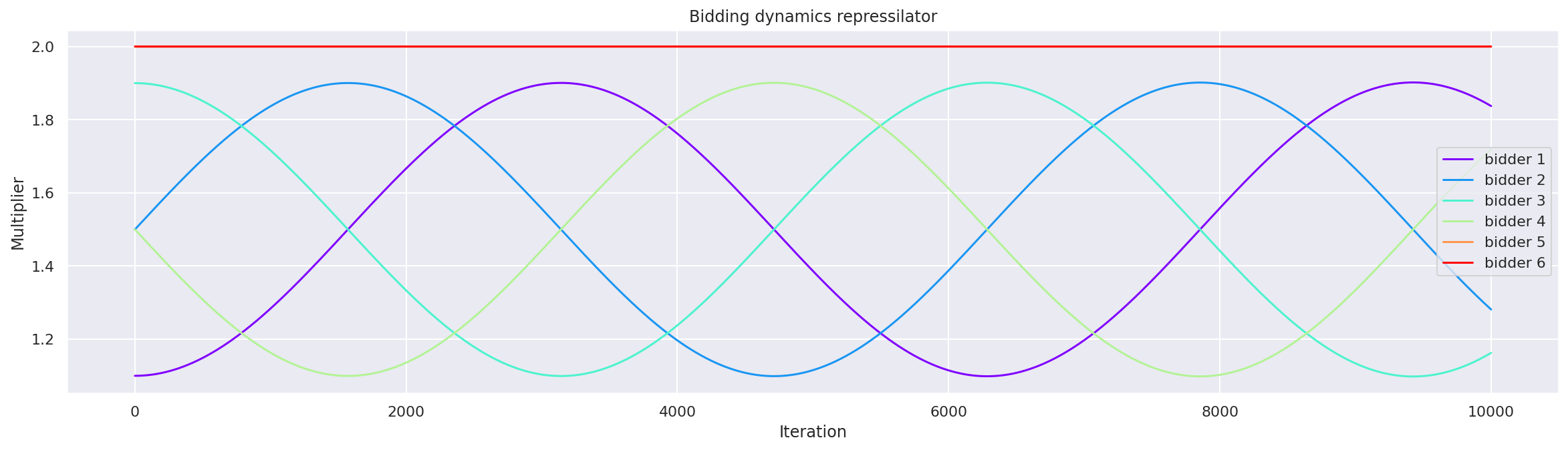}
\hfill
\end{center}
\caption{Behavior of the ROS system constructed in section \ref{sec:example_linear_simulation} simulating the linear system $x'_1 = -x_2, x'_2 = x_1$. The left plot shows the orbit $(m_1(t), m_2(t))$. The right plot shows how the multipliers of all bidders change over time.}
\label{fig:circle}
\end{figure}

\cameraready{}{
\section{Detailed Construction Of A NOR Gate}\label{sec:nor-proof}

\begin{figure}
  \centering
  \includegraphics[width=\textwidth]{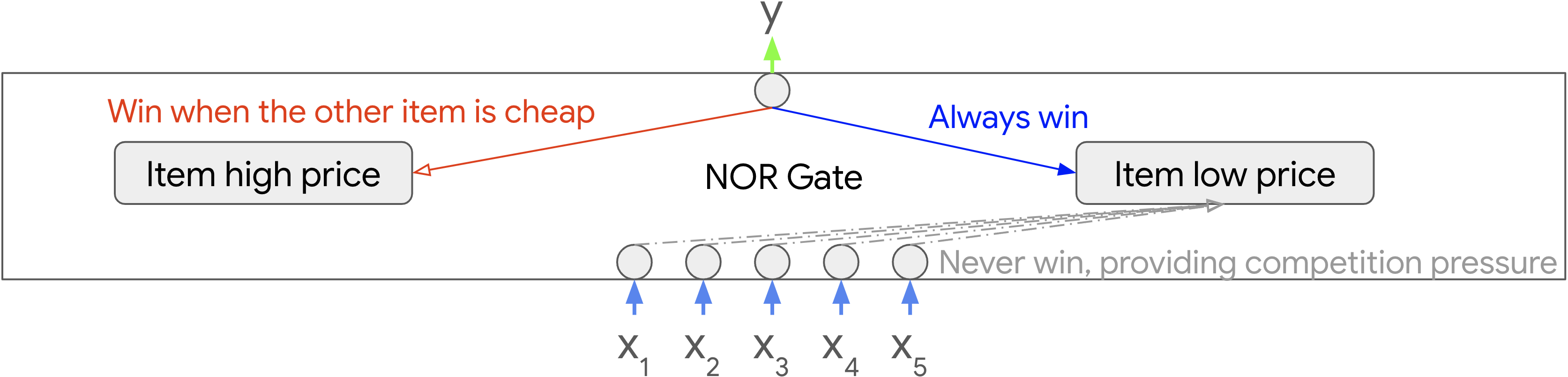}
  \caption{The structure of a \NOR gate construction with input variables $X_i = \cI[x_i > \mathsf{threshold}]$ and output variable $Y = \cI[y > \mathsf{threshold}]$. The small circles represent auto-bidding agents, while $x_i$ and $y$ are the corresponding bid multipliers. The gate functions as: $Y = \NOR(X_1, X_2, X_3, X_4, X_5) = \neg (X_1 \vee X_2 \vee X_3 \vee X_4 \vee X_5)$.}
  \label{fig:nor-gate}
\end{figure}

In this Appendix, we'll expand upon the intuition described in Section \ref{subsec:nor-gate} and give a rigorous construction of a NOR gate within ROS dynamics (from which it will be easy to prove Theorems \ref{thm:main_circuit} and \ref{thm:convergence}). Formally, our goal is to prove the following lemma.

\begin{lemma}[\NOR Gate]\label{lemma:nor-gate}
  For any $k \geq 0$ and $\threshold \in (\low, \high)$, there exists an ROS system with $\Theta(k)$ bidders and $\Theta(k)$ items, including $k$ ``input'' bidders $x_1, \ldots, x_k$ and one ``output'' bidder $y$, such that
  \begin{itemize}
    \item The gradient $\partial x_i/\partial t$ of any input bidder is $0$.
    \item When at least one input bidder $x_i > \threshold$, $y$ moves towards $\low$ (i.e., $U_y \coloneqq \partial y / \partial t < 0$ when $y > \low$ and $U_y > 0$ when $y < \low$);
    \item When all input bidders $x_i < \threshold$, then $y$ moves towards $\high$ (i.e., $U_y < 0$ when $y > \high$ and $U_y > 0$ when $y < \high$);
    \item When the maximum of the input bidders $\max_i x_i = \threshold$, then $y$ is converging towards the interval $[\low, \high]$ (i.e., $U_y < 0$ when $y > \high$, $U_y = 0$ when $y \in [\low, \high]$, $U_y > 0$ when $y < \low$).
  \end{itemize}

  Moreover, if the $x_i$ all satisfy $|x_i - \threshold| > \varepsilon$ for all time (for any fixed $\varepsilon > 0$), then $y$ will converge to $\low$ if all $x_i < \threshold$ and $y$ will converge to $\high$ otherwise.
\end{lemma}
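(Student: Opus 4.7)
The plan is to construct the gadget sketched in Figure \ref{fig:nor-gate} explicitly. The $y$ bidder participates in two primary items $\sL$ and $\sH$; each input bidder $x_i$ participates only in $\sL$; and we add a constant number of auxiliary bidders whose multipliers are pinned to a fixed value (using the same constant-multiplier construction as in the proof of Lemma \ref{lemma:competitive_to_ros}, which makes their $U$ identically zero on their dedicated items). The first key design choice is to give $y$ a sufficiently large value on $\sL$, together with small values for each $x_i$ on $\sL$, so that $y$'s bid $m_y v_{y,\sL}$ always exceeds $\max_i m_{x_i} v_{x_i,\sL}$ within the working range $[\low,\high]^k$. This guarantees that each $x_i$ never wins any item in the gadget, so $U_{x_i}$ contributed by this gadget is identically zero, establishing the first bullet.

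The second key choice is to engineer $\sH$ so that $y$ wins it iff $m_y > \threshold$, and when $y$ wins the contribution $U_y^{\sH}$ is a fixed negative constant; we do this by inserting an auxiliary bidder whose fixed bid on $\sH$ equals $\threshold\cdot v_{y,\sH}$. The $\sL$ item is then calibrated so that when $\max_i m_{x_i} < \threshold$ the utility $y$ extracts from $\sL$ exceeds the loss from $\sH$ by a margin bounded away from zero, whereas when $\max_i m_{x_i} > \threshold$ it falls below that loss by a margin bounded away from zero. Additional auxiliary items (with prices pinned by further fixed-multiplier bidders) are used to tune the precise locations of the zero-crossings of $U_y(m_y,\mathbf{x})$ so that they sit at $m_y=\low$ and $m_y=\high$ rather than at $\pm\infty$; conceptually these extra items provide a smoothly-varying counterforce on $U_y$ with a prescribed sign pattern, allowing us to place attractors wherever desired within $[1,\infty)$.

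With these pieces in place, the four conditions reduce to a case analysis on the position of $\max_i x_i$ relative to $\threshold$ combined with the position of $m_y$ relative to $\low,\threshold,\high$. The main technical obstacle is that, in the literal fixed-value second-price model, the indicator ``$y$ wins $\sH$'' is discontinuous in $m_y$ and the indicator ``which $x_i$ sets the price of $\sL$'' is discontinuous in the $x_i$. We therefore work in the smooth limit of Section \ref{sec:model}, where the $v_{ij}$ are drawn from narrow densities and the utilities become $C^1$ functions of the multiplier vector by (a straightforward extension of) Lemma \ref{lemma:ros:UI}. In this limit, the thresholds ``$m_y > \threshold$'' and ``$\max_i x_i > \threshold$'' become steep but continuous transitions; by taking the supports narrow enough, the sign of $U_y$ on each side of a threshold agrees with the piecewise-constant analysis, and at the equality case $\max_i x_i = \threshold$ the two contributions cancel over an interval containing $[\low,\high]$ by continuity, giving the third bullet.

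The final convergence claim follows once $|x_i - \threshold| > \varepsilon$ is assumed for all $i$ and all $t$. In that regime the relevant sign condition on $U_y$ holds uniformly away from the single attractor ($\low$ if any $x_i > \threshold + \varepsilon$, otherwise $\high$), so with the $x_i$ held frozen the dynamics of $m_y$ form a one-dimensional autonomous ODE with a single globally attracting fixed point; allowing the $x_i$ to vary (while staying $\varepsilon$-away from $\threshold$) does not change the sign structure of $U_y$, and a standard Lyapunov-style comparison finishes the argument. The hardest single step is the joint calibration in the second paragraph: forcing the zero-crossings of $U_y$ to land exactly at $\low$ and $\high$ requires solving a small system in the item values while simultaneously preserving the monotonicity and sign conditions used throughout, and doing so with item values and distribution supports small enough that the smooth-limit approximation of the piecewise-constant picture is valid on the relevant regime.
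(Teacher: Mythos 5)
Your high-level gadget (two items $\sL$ and $\sH$ with $y$ always winning $\sL$, $\sH$ acting as a ``penalty item,'' and auxiliary fixed-multiplier bidders to pin prices) is the right shape, and it matches the paper's Figure~\ref{fig:nor-gate} picture. But there are two gaps that would break the argument.

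The first and more serious gap is where you place the $\sH$ winning threshold. You set the reserve so that $y$ wins $\sH$ iff $m_y > \threshold$. That puts the jump in $U_y$ at $m_y = \threshold$, which is useless for the lemma: when some $x_i > \threshold$, the lemma requires $U_y > 0$ for $m_y < \low$ and $U_y < 0$ for $m_y > \low$, i.e.\ a sign change exactly at $\low$, not at $\threshold$. With your placement, $U_y(m_y)$ is constant in $m_y$ for all $m_y < \threshold$ (it equals the $\sL$-only utility $C - \max_i x_i V$), so $m_y$ drifts monotonically to one end of $[\mfloor,\threshold)$ rather than settling at $\low$. The paper's choice $\low \cdot T = C$ (eq.~\eqref{eq:nor-winh}) is not incidental: it places the jump at $m_y = \low$, which is what manufactures the attractor at $\low$ in the ``inputs high'' regime. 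The attractor at $\high$ is then supplied by the ceiling construction $\high = \mceil$, not by a second tuning item as you suggest. Your sentence about auxiliary items providing a ``smoothly-varying counterforce'' that ``places attractors wherever desired'' overclaims and hides exactly the piece of the design that does the real work: in the fixed-value second-price model, $U_y(m_y)$ is a step function whose jump locations are the ratios (price)/(value), so you can only place attractors by engineering specific jumps, and you have not said which items create jumps at $\low$ and $\high$.

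The second gap is the smooth limit. The paper deliberately does \emph{not} pass to the smooth limit for this construction: it uses fixed deterministic values, explicit tie-breaking, and the parameter relations $T = \threshold V$, $C = \low\threshold V$, precisely because the third bullet requires $U_y$ to be \emph{exactly} zero on the interval $(\low,\high]$ when $\max_i x_i = \threshold$. That exact vanishing is a knife-edge cancellation ($T - \threshold V = 0$) which the deterministic construction produces by design. Your claim that ``the two contributions cancel over an interval containing $[\low,\high]$ by continuity'' is not right: continuity gives you $U_y$ small near $\max_i x_i=\threshold$, not identically zero on an interval, and a generic $C^1$ utility obtained from narrow value densities will cross zero only at isolated points. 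So the smooth-limit route, as described, cannot produce the third bullet. (Relatedly, smoothing also makes the attractor locations drift with $\max_i x_i$ rather than sit fixed at $\low$ and $\high$, since the $\sL$ term $C - \max_i x_i V$ varies with the inputs; the paper's step-function jump pins the attractor independently of this drift.)

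The fix is to adopt the paper's parameterization: deterministic values with $\low\cdot T = C$ (so the $\sH$ jump is at $\low$), $T = \threshold\cdot V$ (so the utility vanishes exactly when $\max_i x_i = \threshold$ and $y$ wins both items), a ceiling at $\mceil = \high$ implemented by a high-reserve item $\sC$, and a floor gadget; and verify the four inequalities \eqref{eq:nor-true}--\eqref{eq:nor-winh} directly rather than appealing to the smooth limit.
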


We will do this in two steps. First, in Appendix \ref{subsec:simplified-nor-construction}, we will give a proof of Lemma \ref{lemma:nor-gate} under a slightly more flexible form of ROS dynamics, allowing ourselves to impose reserve prices and set floors and ceilings on individual multipliers. In Appendix \ref{subsec:reserve-ceiling} we will show how to relax these assumptions by implementing these features with additional bidders. Finally, in Appendix \ref{subsec:combining-gates} we will show how we can use Lemma \ref{lemma:nor-gate} to prove the two theorems of Section \ref{sec:circuits}.

\subsection{Simplified Construction with Reserve, Floor, and Ceiling}\label{subsec:simplified-nor-construction}

We first show a simplified construction of a \NOR gate allowing one to set reserve prices for items as well as floors and ceilings on the bid multipliers. For a complete construction without additional assumptions, we show in Appendix~\ref{subsec:reserve-ceiling} how one can implement reserve prices, floors, and ceilings through auxiliary bidders and items.

Consider a \NOR gate with $k$ input bidders $x_1,\ldots, x_k$ and one output bidder $y$. By abuse of the notations, each $x_i$ and $y$ will refer to the bidder and the bidder's bid multiplier. With the common floor and ceiling, all the bid multipliers are restricted within the range of $[\mfloor, \mceil]$. The valuation of each bidder on these two items are given by Table~\ref{tab:nor-simple}, where $V < T < C$ are the parameters to be determined later.

\begin{table}[h]
  \centering
  \begin{tabular}{c|c|c|c}
    items & value to $x_i$ & value to $y$ & reserve price  \\
    \hline
    $\sL$ & V & C & 0  \\
    \hline
    $\sH$ & 0   & T & C
  \end{tabular}
  \caption{The valuation and reserve prices for bidders on item $\sL$ and $\sH$.}
  \label{tab:nor-simple}
\end{table}

With the structure above, the boolean variables are represented as $X_i = \cI[x_i
> \mathsf{threshold}]$ and $Y = \cI[y > \mathsf{threshold}]$. We remain to figure out the quantitative relationships among those parameters such that (i) any of these bid multipliers only stabilizes at either \low or \high ($\mfloor \leq \low \leq \mathsf{threshold} < \high \leq \mceil$; (ii) the system only stabilizes with $Y = \neg \bigvee_{i=1}^k X_k$. Intuitively, we would like bidder $y$ to behave as:
\begin{itemize}
  \item When all $x_i = \low$, it is better off for $y$ to increase to \high to win both $\sL$ and $\sH$, which means that the quasi-linear utility when winning both items is non-negative:
  \begin{equation}\label{eq:nor-true}
    U_y(\high) = C - \low \cdot V + T - C \geq 0;
  \end{equation}
  \item When any of $x_i > \mathsf{threshold}$, the quasi-linear utility of $y$ to win both items becomes negative, while winning $\sL$ only remains non-negative:
  \begin{equation}\label{eq:nor-false}
    \left\{\begin{array}{l}
      U'_y(\low) = C - \max_i x_i \cdot V \geq C - \mceil \cdot V \geq 0  \\
      U'_y(\high) = C - \max_i x_i \cdot V + T - C < C - \mathsf{threshold} \cdot V + T - C \leq 0
    \end{array}\right..
  \end{equation}
\end{itemize}
In addition to the above, we also require the following to make sure that $y$ always wins $\sL$:
\begin{equation}\label{eq:nor-winl}
  b_y \geq \mfloor \cdot C > \mceil \cdot V \geq b_{x_i},
\end{equation}
and $y$ will win $\sH$ if and only if $y > \low$ (tie-breaking towards not allocating to $y$):
\begin{equation}\label{eq:nor-winh}
  \low \cdot T = C.
\end{equation}

Putting \eqref{eq:nor-true}, \eqref{eq:nor-false}, \eqref{eq:nor-winl}, \eqref{eq:nor-winh} together, we can prove Lemma \ref{lemma:nor-gate} (under these generalized ROS dynamics).

\begin{proof}[Proof of Lemma~\ref{lemma:nor-gate}]
We will show that if we set parameters in the above construction as:
  \begin{itemize}
    \item $1 \leq \mfloor < \low \leq \mathsf{threshold} < \high = \mceil < \low \cdot \mathsf{threshold}$;
    \item $T = \mathsf{threshold} \cdot V$ and $C = \low \cdot \mathsf{threshold} \cdot V$,
  \end{itemize}

\item
then $U_y$ in the resulting dynamics satisfies the properties we require of it in Lemma \ref{lemma:nor-gate}.

  To do so, it suffices to verify \eqref{eq:nor-true}, \eqref{eq:nor-false}, \eqref{eq:nor-winl}, \eqref{eq:nor-winh} one by one.

  \begin{itemize}
    \item \eqref{eq:nor-true}: $C - \low \cdot V + T - C = T - \low \cdot V = \threshold \cdot V - \low \cdot V \geq 0$;
    \item \eqref{eq:nor-false}: $C - \mceil \cdot V = \low \cdot \threshold \cdot V - \mceil \cdot V > 0$, and $C - \threshold \cdot V + T - C = T - \threshold \cdot V = \threshold \cdot V - \threshold \cdot V = 0$;
    \item \eqref{eq:nor-winl}: $\mfloor \cdot C = \mfloor \cdot \low \cdot \threshold \cdot V > \mfloor \cdot \mceil \cdot V \geq \mceil \cdot V$;
    \item \eqref{eq:nor-winh}: $\low \cdot T = \low \cdot \threshold \cdot V = C$.
  \end{itemize}

\begin{figure}
  \centering
  \includegraphics[width=0.5\textwidth]{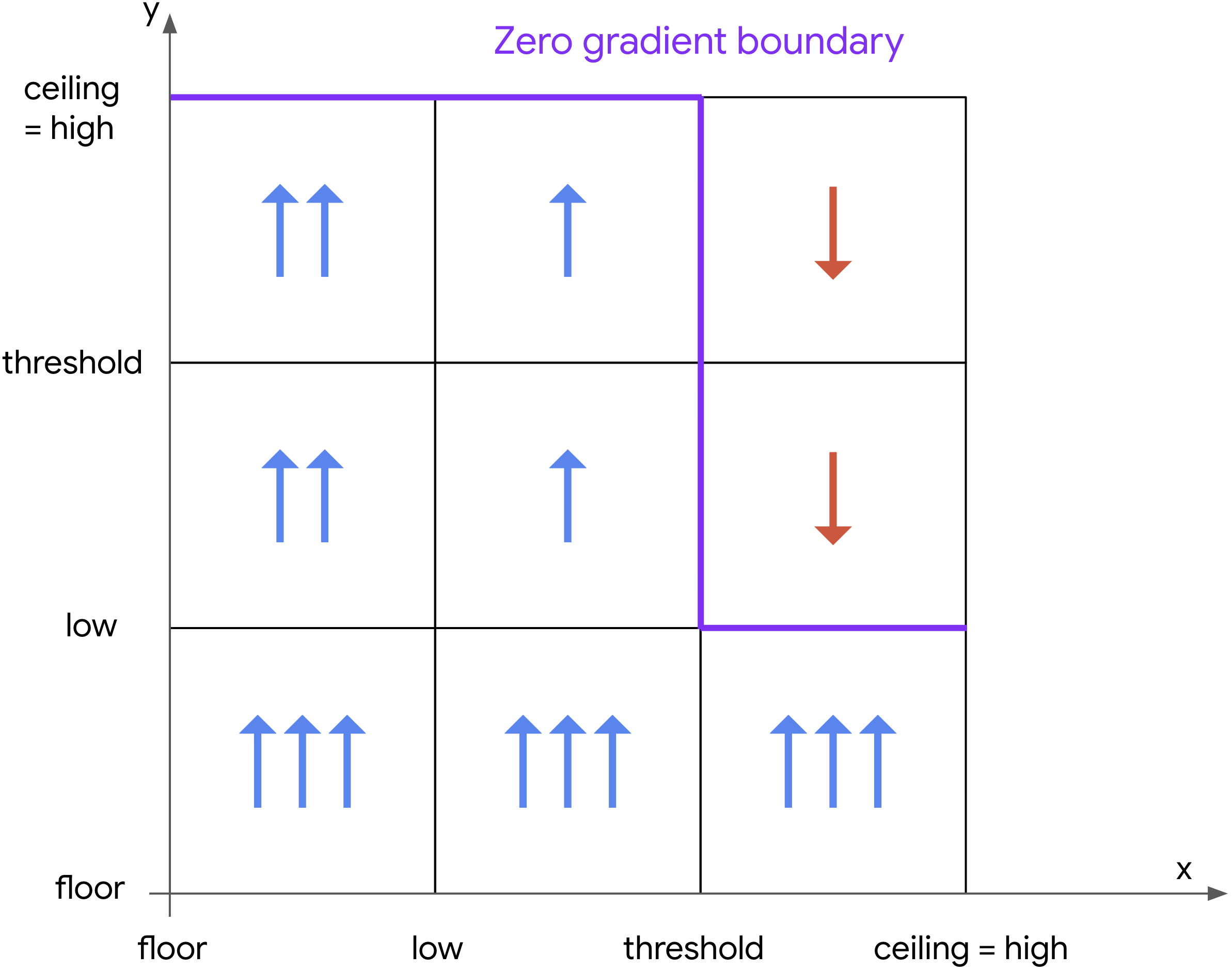}
  \caption{The bid multiplier update gradients of $y$ on different values of $y$ and $x = \max_i x_i$. The zero gradient area is highlighted as purple lines: (i) When $x < \mathsf{threshold}$, the only stable point is $y = \high = \mceil$; (ii) When $x > \mathsf{threshold}$, the only stable point is $y = \low$; (iii) When $x = \mathsf{threshold}$, $y$ has zero gradient in $[\low, \high]$. In this case, $y$ is not stable. Furthermore, $x_i = \mathsf{threshold}$ in general will not be stable when it is the output signal of another NOR-gate within the constructed system.}
  \label{fig:y-gradient}
\end{figure}

In particular, as a consequence of these relations, we have the following properties the gradient $U_y$ of $y$ (also depicted in Figure~\ref{fig:y-gradient}).
\begin{itemize}
  \item $U_y(\cdot) > 0$ whenever $x = \max_i x_i < \mathsf{threshold}$;
  \item For any $x = \max_i x_i > \mathsf{threshold}$, when $y \leq \low$, $U_y(\cdot) > 0$, otherwise $U_y(\cdot) < 0$;
  \item When $x = \max_i x_i = \mathsf{threshold}$, when $y \leq \low$, $U_y(\cdot) > 0$, otherwise $U_y(\cdot) = 0$.
\end{itemize}

Finally, if the $x_i$ are all well-separated from the threshold, these constraints on the gradient of $y$ imply that that $y$ will converge to the corresponding element of $\{\low, \high\}$.


\end{proof}

\subsection{Implementation of Reserve, Floor, and Ceiling}\label{subsec:reserve-ceiling}

In addition to the construction shown above with the assumption on reserve, floor, and ceiling, we now show how we can implement them by adding auxiliary bidders and items. In particular, the constructed auxiliary bidders will not react to the value of bidders not in the same gadget. Hence for any status of the system, all auxiliary bidders can converge to the designed equilibrium first, independent of the status of the non-auxiliary bidders. After that, the constructed functions of reserve, floor, and ceiling will work properly so that the non-auxiliary bidders can response or converge as expected in the simplified construction.

\paragraph{Reserve}
To implement an reserve $R$ on an item $\sA$ for all other bidders, we introduce two auxiliary bidders $\mathsf{aux}_1$, $\mathsf{aux}_2$ and one auxiliary item $\sU$ (see Table~\ref{tab:nor-reserve}). Without loss of generality, we assume that bidders never use bid multipliers smaller than $1$. Then clearly both of $\mathsf{aux}_1$ and $\mathsf{aux}_2$ must converge to $1$, and hence setting a competing price $R$ on item $\sA$ for all the rest bidders as expected.

\begin{table}[h]
  \centering
  \begin{tabular}{c|c|c|c}
    items & value to $\mathsf{aux}_1$ & value to $\mathsf{aux}_2$ & value to all other bidders \\
    \hline
    $\sU$ & $1$ & $1$ & $0$  \\
    \hline
    $\sA$ & $R$ & $R$ & specified by the construction
  \end{tabular}
  \caption{Two auxiliary bidders $\mathsf{aux}_1$, $\mathsf{aux}_2$ implements a reserve $R$ on item $\sA$, where both of their bid multipliers converge to $1$.}
  \label{tab:nor-reserve}
\end{table}

\paragraph{Ceiling} To implement the \mceil for each of the non-auxiliary bidders, we introduce one auxiliary item $\sC$ with reserve price $\mceil \cdot M$, where $M$ is a sufficiently large number such that $(\mceil - 1) \cdot M$ is larger than the total value of all other items to this bidder. Hence once the bidder wins $\sC$, its utility must become negative. Therefore its bid multiplier will be pushed down to no more than \mceil. To make \mceil an achievable bound, we break ties in favor to not allocate $\sC$ to this bidder.

\paragraph{Floor}
To implement the \mfloor for each non-auxiliary bidder $x$, we introduce two auxiliary items $\sE$ and $\sF$ (tie-breaks in favor of bidder $x$). Note that in the entire construction, we never introduce items with reserve-to-value ratio in $(1, \mfloor)$ for non-auxiliary items, so the quasi-linear utility of the bidder $x$ is guaranteed to be positive when $x < \mfloor$.
\begin{table}[h]
  \centering
  \begin{tabular}{c|c|c|c}
    items & value to $x$ & reserve & value to all other bidders \\
    \hline
    $\sE$ & $1$ & \mfloor & $0$  \\
    \hline
    $\sF$ & $\mfloor - 1$ & $0$ & $0$
  \end{tabular}
  \caption{Two auxiliary items implement the \mfloor for bidder $x$, which ensure that the quasi-linear utility of bidder $x$ is always positive when $x < \mfloor$.}
  \label{tab:nor-floor}
\end{table}

\subsection{Combining gates}\label{subsec:combining-gates}

Finally, we will discuss how to combine these NOR gates into larger boolean networks and circuits (and in the course of doing so, prove formal versions of Theorems \ref{thm:main_circuit} and \ref{thm:circuit-convergence}).

We begin with a note on the meaning of a stable equilibrium. We study a strong form of stability that we call \emph{coordinate-wise stability}, which we define as follows.

\begin{definition}
For a given instantiation of ROS dynamics, we say an equilibrium $m^* = (m^*_1, \ldots, m^*_n)$ is {\em coordinate-wise stable}, if for almost all $m \in B(m^*, \epsilon) \cap [1, \infty)^n$, for all $i \in [n]$ we have:

\begin{align*}
    \frac{dm_i}{dt} > 0 &\text{ if } m_i < m^*_i \\
    \frac{dm_i}{dt} < 0 &\text{ if } m_i > m^*_i.
\end{align*}
\end{definition}

In words, coordinate-wise stability means that upon a sufficiently small perturbation, each individual bidder's multiplier starts moving towards the equilibrium at a non-zero rate. Note that coordinate-wise stability implies the usual notion of local stability (which implies that the trajectory from $m'$ almost surely converges to $m$). For circuits (Theorem \ref{thm:circuit-convergence}) our results work with both coordinate-wise and local stability.

Now, we first show that we can restrict our attention to networks and circuits entirely comprised of 2-input $\NOR$ gates. Specifically, we say a Boolean network is a $\NOR$-network if each $f_i$ is of the form $\NOR(X_{j_1}, X_{j_2})$ for two indices $j_1, j_2 \in [n]$. Similarly, we call an acyclic $\NOR$-network a $\NOR$-circuit. We have the following structural result.

\begin{lemma}\label{lemma:universal-nor}
Given a boolean network $\mathcal{C}$, there exists a $\NOR$-network $\mathcal{C'}$ such that the satisfying assignments of $\mathcal{C}$ and $\mathcal{C'}$ are in bijective correspondence (and moreover, a satisfying assignment $X$ to $\mathcal{C}$ can be recovered from an assignment $X'$ to $\mathcal{C'}$ by a looking at a set of $|\mathcal{C}|$ of the coordinates of $\mathcal{C}$). The same equivalence exists between boolean circuits and $\NOR$-circuits.
\end{lemma}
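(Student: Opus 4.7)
The plan is to use the universality of $\NOR$ as a basis for Boolean functions, building the constraints of $\mathcal{C}'$ by rewriting each constraint $X_i = f_i(X)$ of $\mathcal{C}$ into a cascade of $2$-input $\NOR$ constraints on auxiliary variables. Concretely, I will fix once and for all the identities $\NOT(X) = \NOR(X,X)$, $\mathrm{OR}(X,Y) = \NOT(\NOR(X,Y))$, and $\mathrm{AND}(X,Y) = \NOR(\NOT(X),\NOT(Y))$, which together show that every Boolean function admits a realization as a circuit of $2$-input $\NOR$ gates. Starting from any standard representation of $f_i$ (e.g.\ its disjunctive normal form, or a $\{\NOT,\mathrm{AND},\mathrm{OR}\}$-formula), I substitute these identities to obtain a $\NOR$-formula $F_i$ computing $f_i$.

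Next I describe how to turn each $F_i$ into a block of constraints in $\mathcal{C}'$. For every internal $\NOR$ gate of $F_i$ I introduce a fresh auxiliary variable $Z_g$, and I add the constraint $Z_g = \NOR(A,B)$, where $A$ and $B$ are the (auxiliary or original) variables feeding gate $g$. Let $Z_{\mathrm{out}(i)}$ be the auxiliary variable associated with the output gate of $F_i$. To enforce $X_i = f_i(X)$ I add a final constraint $X_i = \NOR(\NOR(Z_{\mathrm{out}(i)},Z_{\mathrm{out}(i)}),\NOR(Z_{\mathrm{out}(i)},Z_{\mathrm{out}(i)}))$, which is just $X_i = Z_{\mathrm{out}(i)}$ expressed as $\NOR(\NOT Z_{\mathrm{out}(i)},\NOT Z_{\mathrm{out}(i)})$ using only $2$-input $\NOR$'s. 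Repeating this for every $i \in [n]$ yields $\mathcal{C}'$, which by construction is a $\NOR$-network.

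The bijection between satisfying assignments is immediate from this construction. Given a satisfying assignment $X$ of $\mathcal{C}$, extend it to $\mathcal{C}'$ by evaluating every subgate of every $F_i$ on $X$ and assigning each $Z_g$ its computed truth value; each local $\NOR$-constraint is then satisfied by definition, and the final copy-constraint is satisfied because $Z_{\mathrm{out}(i)}$ evaluates to $f_i(X) = X_i$. Conversely, given a satisfying assignment $X'$ of $\mathcal{C}'$, the auxiliary values are forced bottom-up from the values of the $X_j$ by the $\NOR$-constraints, so $Z_{\mathrm{out}(i)} = f_i(X)$, and the copy constraint then forces $X_i = f_i(X)$; hence the restriction of $X'$ to the coordinates indexing the original variables $X_1,\dots,X_n$ is a satisfying assignment of $\mathcal{C}$. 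The two operations are mutual inverses, proving bijectivity, and the recovery of $X$ from $X'$ uses exactly the $|\mathcal{C}| = n$ coordinates corresponding to the original variables, as claimed.

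Finally, for the circuit case I observe that the construction preserves acyclicity. If $\mathcal{C}$ is acyclic with the variables topologically ordered so that $f_i$ depends only on $X_1,\dots,X_{i-1}$, then each $F_i$ is built from gates whose inputs are either previously introduced auxiliaries $Z_g$ from the same block or earlier variables $X_j$ with $j<i$; ordering the auxiliaries of block $i$ in topological order of $F_i$ and concatenating the blocks in order of $i$ yields a topological ordering of $\mathcal{C}'$. I do not anticipate any serious obstacle here: the whole argument is essentially the standard textbook proof that $\NOR$ is a functionally complete connective, with the minor bookkeeping overhead that each macro-gate in the substitution is a $2$-input $\NOR$ introducing its own named auxiliary variable. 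The only step that requires any care is the choice of representation for $f_i$ so that the bound on $|\mathcal{C}'|$ matches the $O(G)$ bidders claim in Theorem~\ref{thm:main_circuit}; taking $F_i$ to be the $\NOR$-rewriting of an optimal (or near-optimal) $\NOR$-formula already computing $f_i$ ensures a linear-in-$G$ number of auxiliary variables.
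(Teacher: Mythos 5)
Your proof is correct and takes essentially the same approach as the paper's (which simply invokes $\NOR$-universality and says to introduce one auxiliary variable per intermediate gate). One small syntactic slip: your ``final copy constraint'' $X_i = \NOR(\NOR(Z_{\mathrm{out}(i)},Z_{\mathrm{out}(i)}),\NOR(Z_{\mathrm{out}(i)},Z_{\mathrm{out}(i)}))$ is a nested formula, whereas a $\NOR$-network requires each constraint to be a single $2$-input $\NOR$ of named variables; either flatten it with one extra auxiliary $W_i = \NOR(Z_{\mathrm{out}(i)},Z_{\mathrm{out}(i)})$ and set $X_i = \NOR(W_i,W_i)$, or more simply dispense with $Z_{\mathrm{out}(i)}$ altogether and let $X_i$ itself be the output variable of the final gate of $F_i$. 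Everything else — the forced bottom-up determination of the auxiliaries, the bijection, the preservation of acyclicity — is sound.
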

\begin{proof}
This follows straightforwardly from the fact that every boolean function can be written as the composition of $\NOR$ gates. We simply introduce an additional boolean variable for the output of each intermediate gate.
\end{proof}

Given Lemma \ref{lemma:universal-nor}, it is sufficient (for both theorems) to work entirely with $\NOR$-networks. We will now describe how to use the single $\NOR$ gate construction of the previous sections to generate a construction for a full $\NOR$ network $\mathcal{C}$. The construction is very natural: for every variable $X_i$ in $\mathcal{C}$, we instantiate a bidder $x_i$ in $\mathcal{S}$. Then, for each of the NOR constraints $X_{i} = \NOR(X_{i_1}, X_{i_2})$, we embed a copy of the \NOR gate construction for these three bidders, letting $x_{i_1}$ and $x_{i_2}$ be the input bidders and $x_i$ be the output bidder (and adding to $\mathcal{S}$ the $O(1)$ items and auxiliary bidders we need for the construction to work).

We prove that this construction satisfies the required properties of Theorem \ref{thm:main_circuit}.

\begin{proof}[Proof of Theorem~\ref{thm:main_circuit}]
We characterize the stable equilibria of $\mathcal{S}$. Note that each bidder $x_i$ is the output bidder of exactly one of our embedded NOR gates and the input bidder of some number of NOR gates. Since being the input bidder of a NOR gate construction does not impact the bidder's gradient (first point of Lemma \ref{lemma:nor-gate}), $\partial x_i/\partial t$ is entirely determined by the values of $x_{i_1}$ and $x_{i_2}$ as described in Lemma \ref{lemma:nor-gate}.

Therefore, in order for $\partial x_i/\partial t$ to equal zero, one of the following three cases must be true:

\begin{itemize}
    \item $x_i = \high$, and $x_{i_1}, x_{i_2} < \threshold$.
    \item $x_i = \low$, and either $x_{i_1} > \threshold$ or $x_{i_2} > \threshold$.
    \item $\max(x_{i_1}, x_{i_2}) = \threshold$.
\end{itemize}

We first examine equilibria where the third bullet never holds. In such equilibria, each $x_i \in \{\low, \high\}$, and furthermore must correspond to a satisfying assignment of the $\NOR$-network (mapping $\low$ to $0$ and $\high$ to $1$). Moreover, note that this equilibrium is coordinate-wise stable (perturbations that keep each $x_i$ on the same side of $\threshold$ will converge to the same equilibrium). This shows every satisfying assignment corresponds clearly to a stable equilibrium.

On the other hand, we claim that any equilibrium where the third bullet holds must be (coordinate-wise) unstable. To see this, consider such an equilibrium where $x_i = \threshold$. The only way this is a valid equilibrium state for bidder $i$ is if $\max(x_{i_1}, x_{i_2}) = \threshold$. Now, consider any arbitrarily small perturbation which decreases $x_{i_1}$ and $x_{i_2}$ while increasing $x_i$ (and perturbing other bidders in any way). By Lemma \ref{lemma:nor-gate}, it follows that $dx_i/dt > 0$ (since both $x_{i_1}$ and $x_{i_2}$ are now below $\threshold$). But we also have $x_i > \threshold$, contradicting the fact that the previous equilibrium was coordinate-wise unstable.

Finally, the number of bidders and items in this system is $O(|\mathcal{C}|)$. If we started with a general boolean network $\mathcal{C}$, the reduction of Lemma \ref{lemma:universal-nor} expanded the size of $\mathcal{C}$ to the number of NOR gates $G$ required to represent the functions $f_i$, and so the ultimate size of $\mathcal{C}$ is $O(G)$.
\end{proof}

In the case of NOR-circuits, we also get convergence to the unique fixed point of the circuit.

\begin{proof}[Proof of Theorem~\ref{thm:circuit-convergence}]
Note that since a NOR-circuit is an acyclic NOR-network, the bidder $x_i$ is completely uninfluenced by bidders $j > i$. Therefore, we can proceed inductively.

The theorem is straightforward for the base case of a single bidder (who cannot depend on any other values and therefore does not even need to converge). Now, look at the sub-dynamics $\mathcal{S'}$ of $\mathcal{S}$ corresponding to the  the circuit $\mathcal{C}'$ consisting of $X_1$ through $X_{n-1}$. By the inductive hypothesis, $\mathcal{S'}$ will converge to the unique satisfying assignment to $\mathcal{C'}$. Run $\mathcal{S'}$ long enough so that each of the bidders $x_i$ is within $\varepsilon < \min(\threshold - \low, \high - \threshold)$ of this equilibrium. Then, by the last part of Lemma \ref{lemma:nor-gate}, the full dynamics $\mathcal{S}$ (which consists of a single embedded $\NOR$ gate on top of variables in $\mathcal{S'}$, who are guaranteed to be far from $\threshold$) $x_{n}$ will also converge to either $\low$ or $\high$, as desired.
\end{proof}
}

\cameraready{}{
\subsection{Clock using a cycle of NOT gates}\label{sec:clock}
The fact that an odd cycle of NOT gates exhibits periodicity has the interesting consequence that we can use this dynamic to simulate a clock. If we consider the ROS system with $n$ variables with $n$ odd such that $X_i = \NOT(X_{i+1})$ where $X_{n+1} = X_1$ then the behavior of each of the variables resembles a clock that switches between \high and \low (Figure~\ref{fig:circuit_c}). Together with the ability to simulate any boolean circuit, this allows us to simulate any computer with finite memory inside an ROS system.

\begin{figure}[h]
\begin{center}
\includegraphics[width=\textwidth]{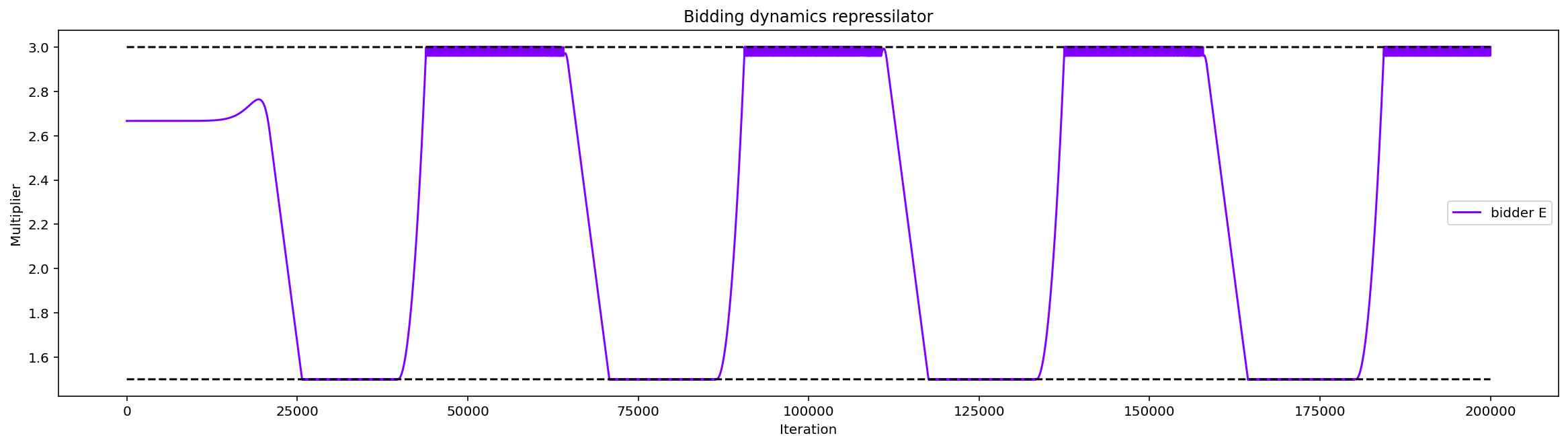}
\end{center}
\caption{Behavior of one of the bid multipliers in an ROS system encoding boolean equations $X_i = \NOT(X_{i+1})$ for $i=1\dots9$ and $X_{10} = X_1$.}
\label{fig:circuit_c}
\end{figure}
}

\end{document}